\definecolor{blue}{rgb}{0.274,0.392,0.666}
\definecolor{red}{rgb}{0.627,0.117,0.156}
\definecolor{green}{rgb}{0,0.588,0.509}
\newcommand{\red}[1]{{\color{red}{#1\xspace}}}
\newcommand{\blue}[1]{{\color{blue}{#1\xspace}}}
\newcommand{\green}[1]{{\color{green}{#1\xspace}}}
\newcommand{\NP}{$\mathcal{NP}$\xspace}
\newcommand{\NPC}{\mbox{\NP-complete}\xspace}
\newcommand{\NPCN}{\mbox{\NP-completeness}\xspace}
\newcommand{\NPHN}{\mbox{\NP-hardness}\xspace}
\newcommand{\sefe}{SEFE\xspace}
\newcommand{\Er}{\textcolor{red}{$E_1$}\xspace}
\newcommand{\Eb}{\textcolor{blue}{$E_2$}\xspace}
\newcommand{\Gintm}{G_{\cap}}
\newcommand{\Gint}{$G_{\cap}$\xspace}
\newcommand{\Gun}{$G_{\cup}$\xspace}
\newcommand{\Gr}{\textcolor{red}{$G_1$}\xspace}
\newcommand{\GammaR}{\textcolor{red}{$\Gamma_1$}\xspace}
\newcommand{\Gb}{\textcolor{blue}{$G_2$}\xspace}
\newcommand{\GammaB}{\textcolor{blue}{$\Gamma_2$}\xspace}
\newcommand{\GbP}[1]{\textcolor{blue}{$G^{#1}_{2}$}\xspace}
\newcommand{\Gg}{\textcolor{green}{$G_3$}\xspace}
\newcommand{\GammaG}{\textcolor{green}{$\Gamma_3$}\xspace}
\newcommand{\sunsefeinstance}{$\langle\textcolor{red}{G_1},\textcolor{blue}
{G_2},\textcolor{green}{G_3}\rangle$\xspace}
\newcommand{\sunsefeinstancep}[1]{$\langle\textcolor{red}{G_1^{#1}},\textcolor{blue}
{G_2^{#1}},\textcolor{green}{G_3^{#1}}\rangle$\xspace}
\newcommand{\sunsefesolution}{$\langle\textcolor{red}{\Gamma_1},\textcolor{blue}
{\Gamma_2},\textcolor{green}{\Gamma_3}\rangle$\xspace}
\newcommand{\ptckpbeinstance}{$\langle T, E_1,\dots, E_k \rangle$\xspace}
\newcommand{\ptcTWOpbeinstance}{$\langle T, \textcolor{red}{E_1},\textcolor{blue}{E_2} \rangle$\xspace}
\newcommand{\ptcTWOpbeinstancep}[1]{$\langle T^{#1}, \textcolor{red}{E_1^{#1}},\textcolor{blue}{E_2^{#1}} \rangle$\xspace}
\newcommand{\Erp}[1]{\textcolor{red}{$E_1^{#1}$}\xspace}
\newcommand{\Ebp}[1]{\textcolor{blue}{$E_2^{#1}$}\xspace}
\newcommand{\stplong}{{\scshape Planar Steiner Tree}\xspace}
\newcommand{\stp}{{\scshape PST}\xspace}
\newcommand{\upstp}{{\scshape Uniform Triconnected PST}\xspace}
\newcommand{\xorsatp}{{\scshape MAX $2$-XorSat}\xspace}
\newcommand{\upstshort}{{\scshape UTPST}\xspace}
\newcommand{\betp}{{\scshape Betweenness}\xspace}
\newcommand{\sefep}{{\scshape SEFE}\xspace}
\newcommand{\maxsefep}{{\scshape Max SEFE}\xspace}
\newcommand{\sunsefep}{{\scshape Sunflower SEFE}\xspace}
\newcommand{\ptckpbep}{{\scshape Partitioned T-Coherent $k$-Page Book embedding}\xspace}
\newcommand{\ptcTWOpbep}{{\scshape Partitioned T-Coherent $2$-Page Book embedding}\xspace}
\newcommand{\pkpbep}{{\scshape Partitioned $k$-Page Book embedding}\xspace}
\newcommand{\pkpbepshort}{{\scshape PBE-$k$}\xspace}
\newcommand{\pTWOpbepshort}{{\scshape PBE-$2$}\xspace}
\newcommand{\pTHREEpbepshort}{{\scshape PBE-$3$}\xspace}
\newcommand{\ptckpbepshort}{{\scshape PTBE-$k$}\xspace}
\newcommand{\ptcTHREEpbepshort}{{\scshape PTBE-$3$}\xspace}
\newcommand{\ptcTWOpbepshort}{{\scshape PTBE-$2$}\xspace}
\newcommand{\sefesolution}{$\langle\textcolor{red}{\Gamma_1},\textcolor{blue}
{ \Gamma_2 } \rangle$\xspace}
\newcommand{\maxsefeinstance}[1]{$\langle\textcolor{red}{G_1},\textcolor{
blue} {G_2},#1\rangle$\xspace}
\newcommand{\remove}[1]{}
\newenvironment{proofsketch}
{{\bf Proof sketch:}}{\hspace*{\fill}$\Box$\par\vspace{2mm}}
\newtheorem{cl}{Claim}
\newtheorem{theorem}{Theorem}
\newtheorem{lemma}{Lemma}
\newtheorem{corollary}{Corollary}
\newenvironment{proof}{{\em Proof.}}
\newif\ifllncsvar
\makeatletter \@ifclassloaded{llncs}{ \llncsvartrue

\authorrunning{Angelini {\em et al.}}
\titlerunning{Advancements on SEFE and Partitioned Book Embedding Problems}

   \author{Patrizio Angelini$^*$, Giordano {Da Lozzo}$^*$, Daniel Neuwirth$^{**}$}

  \title{Advancements on SEFE and\\ Partitioned Book Embedding Problems}

  \institute{
    $^*$~Department of Engineering, Roma Tre University, Italy\\
    \email{\{angelini,dalozzo\}@dia.uniroma3.it} \\
    $^{**}$~Universit\"at Passau, Germany \\
    \email{daniel.neuwirth@uni-passau.de}
  }
  \bibliographystyle{splncs03}}{
  \title{SEFE $=$ C-Planarity?}  \usepackage{ecrc} \volume{00}
  \firstpage{1}
  \runauth{Angelini {\em et al.}}
  \usepackage{amssymb} \usepackage[figuresright]{rotating}

\newenvironment{proof}
{{\bf Proof:}}{\hspace*{\fill}$\Box$\par\vspace{2mm}}
 } \makeatother
\let\nu\undefined \DeclareMathOperator{\nu}{\theta}
\begin{document}

\ifllncsvar
\maketitle
\else
\begin{frontmatter}
  \fi

\begin{abstract}
  In this work we investigate the complexity of some problems related to the
  \emph{Simultaneous Embedding with Fixed Edges} (\sefe) of $k$ planar graphs and the \pkpbep (\pkpbepshort) problems, which are known to be equivalent under certain conditions.

  While the computational complexity of \sefe for $k=2$ is still a
  central open question in Graph Drawing, the problem is \NPC for $k
  \geq 3$~[Gassner \emph{et al.}, WG '06], even if the intersection
  graph is the same for each pair of graphs (\emph{sunflower
    intersection})~[Schaefer, JGAA (2013)].

  We improve on these results by proving that \sefe with $k \geq 3$
  and sunflower intersection is \NPC even when the intersection graph
  is a tree and all the input graphs are biconnected.
  Also, we prove \NPCN for $k \geq 3$ of problem \pkpbepshort and of problem \ptckpbep (\ptckpbepshort) - that is the generalization of \pkpbepshort in which the ordering of the vertices on the spine is constrained by a tree $T$ - even when two input graphs are biconnected.
  Further, we provide a linear-time algorithm for \ptckpbepshort when $k-1$ pages are assigned a connected graph.  
  Finally, we prove that the problem of maximizing the number of edges
  that are drawn the same in a \sefe of two graphs is \NPC in several restricted settings (\emph{optimization version of \sefe}, Open Problem $9$, Chapter $11$ of the
  Handbook of Graph Drawing and Visualization).
\end{abstract}

\ifllncsvar \else

\author{Patrizio Angelini}
\author{Giordano {Da Lozzo}}
\author{Daniel Neuwirth}

\address{{\small Department of Engineering, Roma Tre University, Italy}\\
  {\{angelini,dalozzo\}@dia.uniroma3.it}\\
    Universit\"at Passau, Germany \\
    daniel.neuwirth@uni-passau.de}

\begin{keyword} \sunsefep \sep \ptckpbepshort \sep \pkpbepshort \sep \maxsefep \sep Graph Drawing \sep
  Computational complexity \sep \NPHN
\end{keyword}

\end{frontmatter}
\fi

\section{Introduction}\label{se:introduction}

Let $G_1,\dots,G_k$ be $k$ graphs on the same set $V$ of vertices. A
\emph{simultaneous embedding with fixed edges} (\sefe) of
$G_1,\dots,G_k$ consists of $k$ planar drawings
$\Gamma_1,\dots,\Gamma_k$ of $G_1,\dots,G_k$, respectively, such that
each vertex $v\in V$ is mapped to the same point in every drawing
$\Gamma_i$ and each edge that is common to more than one graph is
represented by the same simple curve in the drawings of all such
graphs. The \emph{\sefe problem} is the problem of testing whether $k$
input graphs $G_1,\dots,G_k$ admit a \sefe~\cite{ek-sepgfb-05}.

The possibility of drawing together a set of graphs gives the
opportunity to represent at the same time a set of different binary
relationships among the same objects, hence making this topic an
fundamental tool in Information Visualization~\cite{ekln-sgdlav-05}.
Motivated by such applications and by their theoretical appealing,
simultaneous graph embeddings received wide research attention in the
last few years. For an up-to-date survey, see~\cite{bkr-sepg-12}.

Recently, a new major milestone to assert the importance of \sefe has
been provided by Schaefer~\cite{s-ttphtpv-13}, who discussed its
relationships with some other famous problems in Graph Drawing,
proving that \sefe generalizes several of them. In particular, he
showed a polynomial-time reduction to \sefe with $k=2$ from the
\emph{clustered planarity testing}
problem~\cite{df-ectefcgsf-j-09,efln-sldahgcg-06}, that can be
arguably considered as one of the most important open problems in the
field.

The \sefe problem has been proved \NPC for $k\geq 3$ by Gassner
\emph{et al.}~\cite{gjpss-sgefe-06}. On the other hand, if the
embedding of the input graphs is fixed, \sefe becomes polynomial-time
solvable for $k=3$, but remains \NPC for $k \geq
14$~\cite{adf-seepg-j13}.

In Chapter $11$ of the Handbook of Graph Drawing and
Visualization~\cite{bkr-sepg-12}, the \sefe problem with
\emph{sunflower intersection} (\sunsefep) is cited as an open question
(Open Problem $7$). In this setting, the \emph{intersection graph}
\Gint (that is, the graph composed of the edges that are common to at
least two graphs) is such that, if an edge belongs to \Gint, then it
belongs to all the input graphs. Haeupler \emph{et al.}~\cite{hjl-tspcg2c-10} conjectured that \sunsefep is
polynomial-time solvable. However, Schaefer~\cite{s-ttphtpv-13}
recently proved that this problem is \NPC for $k \geq 3$. The
reduction is from the \NPC~\cite{hoske-befpa-12} problem \ptckpbep
(\ptckpbepshort), defined~\cite{adfpr-tsetgibgt-11} as follows. Given
a set $V$ of vertices, a tree $T$ whose leaves are the elements of
$V$, and a collection of edge-sets $E_i\subseteq V\times V$, for
$i=1,\dots,k$, is there a $k$-page book embedding such that the edges
in $E_i$ are placed on the $i$-th page and the ordering of the
elements of $V$ on the spine is represented by $T$?  Note that, the
\NPCN of \ptckpbepshort holds for $k$ unbounded~\cite{hoske-befpa-12},
which implies that the \NPCN of \sunsefep for $k \geq 3$ holds for instances in which
the intersection graph is a spanning forest composed of an unbounded
number of star graphs~\cite{s-ttphtpv-13}.

In this paper, we improve on this result by proving that \sunsefep is
\NPC with $k \geq 3$ even if \Gint consists of a single spanning tree and all the input graphs are biconnected. Note that, for $k=2$, having \Gint connected and all the input graphs biconnected suffices to have a polynomial-time algorithm for the problem~\cite{br-spqacep-13}.

Since \sunsefep when the intersection graph is connected has been proved equivalent to the \ptckpbepshort problem~\cite{adfpr-tsetgibgt-11} (where the equivalence sets each graph $G_i$ equal to tree $T$ plus the edge-set $E_i$), our result implies the \NPCN of \ptckpbepshort for $k\geq 3$, but with no guarantees on the biconnectivity of the input graphs. We prove for this problem even stronger results, namely that \ptckpbepshort remains \NPC for $k\geq 3$ even if two of the input graphs $G_i = T \cup E_i$ are biconnected or if $T$ is a star. This latter setting, in which the tree $T$ basically does not impose any constraint on the ordering of the vertices on the spine, is also known as \pkpbep (\pkpbepshort). Note that, for $k=2$, \pkpbepshort can be solved in linear time~\cite{hn-tpbecgp-09}.

From the algorithmic point of view, we prove that \ptckpbepshort with $k \geq 2$ can be solved in linear time if $k-1$ of the input edge-sets $E_i$ induce connected graphs (a stronger condition than graph $G_i$ being biconnected), hence improving on a result by Hoske~\cite{hoske-befpa-12}, that was based on all the $k$ input edge-sets having this property. Of course, relaxing this constraint on one of the $k$ input edge-sets becomes more relevant for small values of $k$; in particular, it contributes to extend the class of instances that can be solved in polynomial time also for $k=2$, that is the most studied setting both for \ptckpbepshort and for \sefep (note that every instance of \sefep with $k=2$ obviously has sunflower intersection).

In fact, even if the complexity of \sefep and of \ptckpbepshort is still
unknown for $k=2$, polynomial-time algorithms exist
for instances in which:
\begin{inparaenum}[(i)]
\item one of \Gr and \Gb has a fixed
  embedding~\cite{adfjkpr-tppeg-10};
\item the intersection graph \Gint is
  biconnected~\cite{adfpr-tsetgibgt-11,hjl-tspcg2c-10}, a star
  graph~\cite{adfpr-tsetgibgt-11}, or a subcubic
  graph~\cite{hoske-befpa-12,s-ttphtpv-13};
\item each connected component of \Gint has a fixed
  embedding~\cite{br-drpse-13}; or
\item \Gr and \Gb are biconnected and \Gint is
  connected~\cite{br-spqoacep-13}.
\end{inparaenum}

For the setting $k=2$, we also prove that, given any instance of \ptckpbepshort (and hence of \sefep in which \Gint is connected), it is possible to construct an equivalent instance of the same problem in which one of the input graphs, say \Gr, is biconnected and series-parallel. This implies that it would be sufficient to find a polynomial-time algorithm for this seemingly restricted case in order to have a polynomial-time algorithm for the whole problem.

An updated summary of the results on \sunsefep and on \ptckpbepshort is presented in Table~\ref{tb:complexity}.

Still in the setting $k=2$, we study the optimization version of \sefe, that we
call \maxsefep, which is cited as an open question by Haeupler {\em et al.}~\cite{hjl-tspcg2c-10} and
in Chapter $11$ (Open Problem $9$) of the Handbook of Graph Drawing and
Visualization~\cite{bkr-sepg-12}. In this problem, one asks for
drawings of \Gr and \Gb such that as many edges of \Gint as possible
are drawn the same. We prove that \maxsefep is \NPC, even under some
strong constraints.  Namely, the problem is \NPC if \Gr and \Gb are
triconnected, and \Gint is composed of a triconnected component plus a
set of isolated vertices. This implies that the problem is
computationally hard both in the fixed and in the variable embedding
case. In the latter case, however, we can prove that \maxsefep is \NPC
even if \Gint has degree at most $2$. Observe that any of these
constraints would be sufficient to obtain polynomial-time algorithms
for the original decision problem.

In Sect.~\ref{se:preliminaries} we give some preliminary
definitions. In Sect.~\ref{se:sunsefe} we deal with the sunflower
intersection scenario; in Sect.~\ref{se:bookembedding} we focus on the \ptckpbepshort problem; while in Sect.~\ref{se:maxsefe} we study the
\maxsefep problem. Finally, in Sect.~\ref{se:conclusions} we give concluding remarks and discuss some open problems.

\newcommand{\void}{\cellcolor[gray]{0.9}--}
\begin{table}[tb]
  \centering
  \begin{tabular}{|c|c|c|c|c|c|c|}
    \hline
    {\bf Problem} & {$\mathbf G_\cap$} & {\bf T-Coherent} & {$\mathbf{k}$} &  {\bf Biconnected} &  {\bf $\mathcal{T}$-Biconnected} &  {\bf Complexity} \\
    \hline
    {\sc Sunflower} & tree & NO & $k\geq 3$ & $k$ & \void & NPC~(Th.\ref{th:sunsefep}) \\ 
    \cline{1-4}
    \pkpbepshort & star & YES & $k\geq 3$ & \void & \void & NPC~(Th.~\ref{th:p3be})  \\
    \cline{1-4}
    \pTWOpbepshort & star & YES & $k=2$ & \void & \void & $O(n)$~(\cite{hn-tpbecgp-09}) \\
    \cline{1-4}
    \ptcTHREEpbepshort & caterpillar & YES & $k=3$ & $2$ & \void & NPC~(Th.~\ref{th:pt3be-2bico})  \\
    \cline{1-4}
    \ptckpbepshort & tree & YES & $k\geq 2$ & $k-1$ & $k-1$ & $O(n)$~(Th.~\ref{th:algo}) \\
    \cline{1-4}
    & tree & YES & $k=2$ & $2$ & \void & $O(n^2)$~(\cite{br-spqoacep-13}) \\
    \ptcTWOpbepshort & binary tree &  YES & $k=2$ & \void & \void & $O(n^2)$~(\cite{hoske-befpa-12})  \\
    & tree & YES & $k=2$ & $1$ & \void & OPEN~(Th.~\ref{th:seriesparallel}) \\
    \hline
  \end{tabular}
  \vspace{4mm}
  \caption{Complexity status for \sunsefep, \ptckpbepshort, and \pkpbepshort. }\label{tb:complexity}
\end{table}

\section{Preliminaries}\label{se:preliminaries}

A \emph{drawing} of a graph is a mapping of each vertex to a point of
the plane and of each edge to a simple curve connecting its endpoints.
A drawing is \emph{planar} if the curves representing its edges do not
cross except, possibly, at common endpoints.  A graph is \emph{planar}
if it admits a planar drawing.  A planar drawing $\Gamma$ determines a
subdivision of the plane into connected regions, called \emph{faces},
and a clockwise ordering of the edges incident to each vertex, called
\emph{rotation scheme}. The unique unbounded face is the \emph{outer
  face}. Two drawings are \emph{equivalent} if they have the same
rotation schemes. A \emph{planar embedding} is an equivalence class of
planar drawings.

The \sefe problem can be studied both in terms of embeddings and in
terms of drawings, since edges can be represented by arbitrary curves
without geometric restrictions, and since J\"unger and
Schulz~\cite{js-igsefe-09} proved that two graphs \Gr and \Gb with
intersection graph \Gint have a \sefe if and only if there exists a
planar embedding \GammaR of \Gr and a planar embedding \GammaB of \Gb
inducing the same embedding of \Gint. This condition extends to more
than two graphs in the sunflower intersection setting.

A graph is \emph{connected} if every pair of vertices is connected by
a path. A \emph{$k$-connected} graph $G$ is such that removing any
$k-1$ vertices leaves $G$ connected; $3$-connected and $2$-connected
graphs are also called \emph{triconnected} and \emph{biconnected},
respectively. A \emph{tree} is a graph with no cycle. A
\emph{caterpillar} is a tree such that the removal of all the leaves
yields a path. A subgraph $H$ of a graph $G$ is \emph{spanning} if for
each vertex $v \in G$ there exists an edge of $H$ incident to $v$.

A \emph{series-parallel graph (SP-graph)} is a graph with no $K_4$-minor. SP-graphs are inductively defined as follows. An edge $(u,v)$ is an SP-graph with \emph{poles} $u$ and $v$. Denote by $u_i$ and $v_i$ the poles of an SP-graph graph $G_i$. A \emph{series composition} of SP-graphs $G_0,\dots,G_k$, with $k\geq 1$, is an SP-graph with poles $u=u_0$ and $v=v_k$, containing graphs $G_i$ as subgraphs, and such that $v_i=u_{i+1}$, for each $i=0,1,\dots,k-1$. A \emph{parallel composition} of SP-graphs $G_0,\dots,G_k$, with $k\geq 1$, is an SP-graph with poles $u=u_0=u_1=\dots=u_k$ and $v=v_0=v_1=\dots=v_k$ and containing graphs $G_i$ as subgraphs.

The \emph{dual} of a graph $G$ with respect to an embedding $\Gamma$
of $G$ is the graph $G^\star$ having a vertex $v_f$ for each face $f$
of $\Gamma$ and an edge $(v_{f_1},v_{f_2})$ if and only if faces $f_1$
and $f_2$ of $\Gamma$ have a common edge $e$ in $G$. We say that edge
$(v_{f_1},v_{f_2})$ is the \emph{dual edge} of $e$, and vice versa.

\section{\sunsefep}\label{se:sunsefe}

In this section we study the \sunsefep problem, that is the
restriction of \sefep to instances in which the intersection graph
\Gint is the same for each pair of graphs, that is, \Gint $= G_i \cap
G_j$ for each $1 \leq i < j \leq k$. We prove that \sunsefep is \NPC
with $k \geq 3$ even if \Gint is a spanning tree and all the input
graphs are biconnected.

The proof is based on a polynomial-time reduction from the
\NPC~\cite{top-o-79} problem \betp, that takes as input a finite set
$A$ of $n$ objects and a set $C$ of $m$ ordered triples of distinct
elements of $A$, and asks whether a linear ordering $\mathcal{O}$ of
the elements of $A$ exists such that for each triple $\langle
\alpha,\beta,\gamma \rangle$ of $C$, we have either $\mathcal{O} =
<\dots, \alpha, \dots, \beta, \dots,\gamma,\ldots >$ or $\mathcal{O} =
<\dots, \gamma, \dots, \beta, \dots, \alpha, \ldots >$.

In order to simplify the proof, we first give in
Lemma~\ref{le:aux-ptkbe} an \NPCN proof for a less restricted setting
of \sunsefep and then describe how the produced instances can be
modified in order to obtain equivalent instances with the desired
properties.

A \emph{pseudo-tree} is a connected graph containing only one cycle.

\begin{lemma}\label{le:aux-ptkbe}
  \sunsefep with $k= 3$ is \NPC even if two of the input graphs are
  biconnected and the intersection graph \Gint is a spanning
  pseudo-tree.
\end{lemma}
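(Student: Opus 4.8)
The plan is to reduce from \betp. Given an instance $\langle A, C\rangle$ of \betp with $A=\{a_1,\dots,a_n\}$ and triples $C=\{c_1,\dots,c_m\}$, I will build three graphs \Gr, \Gb, \Gg on a common vertex set, with \Gint a spanning pseudo-tree, such that two of the three graphs are biconnected and the instance admits a \sefe if and only if the \betp instance is a yes-instance. The core idea exploits the fact (from J\"unger--Schulz, extended to the sunflower setting) that a \sefe exists iff there are planar embeddings of the three graphs inducing the same embedding of \Gint; since \Gint will be essentially a cycle $K$ through all the ``element vertices'' plus pendant structure, fixing an embedding of \Gint fixes a cyclic order of the elements of $A$, which we will use as the linear ordering $\mathcal O$ once we designate a reference vertex to cut the cycle.

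First I would set up the \emph{frame}: \Gint will consist of one long cycle $K$ visiting a vertex $u_i$ for every element $a_i\in A$ (plus some extra vertices needed to attach gadgets), together with a single extra vertex or edge that makes it a pseudo-tree rather than a tree (so the pseudo-tree relaxation is used here, to be tightened in the later lemma). The cycle $K$ is contained in each of \Gr, \Gb, \Gg. Next I would design, for each triple $c_j=\langle\alpha,\beta,\gamma\rangle$, a \emph{betweenness gadget} whose edges live in only one of the three graphs (so they are private, not in \Gint) and which is planar exactly when, in the rotation system of $K$, vertex $u_\beta$ lies ``between'' $u_\alpha$ and $u_\gamma$ along one of the two arcs; distributing the $m$ gadgets cyclically among the three pages/graphs is fine since each gadget only needs one graph. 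A natural realization: attach to $u_\alpha,u_\beta,u_\gamma$ three paths that must be routed to a common apex vertex placed, say, inside $K$; planarity of this ``$K_{2,3}$-like'' or wheel-like attachment forces the three attachment points to appear in a way compatible with betweenness, and by also attaching a symmetric gadget outside $K$ we kill the unwanted ``wrap-around'' orderings so that only genuine linear-betweenness survives.

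Then I would handle \emph{biconnectivity of two graphs}. The graph \Gr that carries the cycle $K$ plus its private gadget edges should be augmented (using only \Gr-private edges, which do not touch \Gint) with a handful of extra edges/vertices to make \Gr biconnected while preserving that its only planar embeddings induce the intended embeddings of $K$; the same for \Gb. The third graph \Gg can be left non-biconnected (it only needs $K$ plus its share of gadgets), which is why the lemma only claims two biconnected graphs. Care is needed so that the biconnectivity-enforcing edges do not create new faces that would let a gadget be drawn in a cheating way; the cleanest route is to route them far from the gadget attachment regions, e.g. by subdividing and bundling them along the ``spine'' side of $K$.

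The main obstacle I expect is the betweenness gadget itself: getting a gadget that (a) is planar precisely for the correct relative orders of the three element vertices on the cycle, (b) uses edges of only one graph, (c) does not over-constrain the order of element vertices \emph{not} appearing in its triple, and (d) survives the later modifications (making \Gint a tree, making all graphs biconnected, making $T$ a star in the book-embedding variants). I would first verify correctness of the gadget in isolation — prove ``\Gr (or \Gb, \Gg) restricted to $K$ plus one gadget is planar iff the betweenness relation holds'' — and only then assemble the full reduction and argue that a \sefe of \sunsefeinstance exists iff all $m$ betweenness constraints are simultaneously satisfiable by a single cyclic (hence, after cutting at a reference vertex, linear) order, i.e. iff $\langle A,C\rangle\in\betp$. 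Both directions then follow: a valid ordering yields consistent embeddings of $K$ in all three graphs and lets every gadget be drawn planarly; conversely a \sefe forces one common embedding of $K$, reading off a cyclic order whose induced linear order satisfies $C$.
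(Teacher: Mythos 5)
There is a genuine gap, and it sits at the foundation of your construction. You propose that \Gint contain a cycle $K$ with one vertex $u_i$ per element $a_i\in A$, and that ``fixing an embedding of \Gint fixes a cyclic order of the elements of $A$'' which is then read off as the ordering $\mathcal O$. But the cyclic order in which the vertices of a cycle appear along that cycle is determined by the edges of the cycle, i.e., it is chosen by the reduction at construction time, not by the planar embedding. An embedding of a cycle only decides which side is the outer face and on which side pendant material is placed; it gives no freedom to permute the cycle vertices. Consequently your frame has no degrees of freedom in which to encode the unknown linear order, and each betweenness gadget would be trivially satisfied or trivially violated by whatever vertex order you happened to pick when building $K$. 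A second, related problem is that betweenness is a statement about a \emph{linear} order: any three distinct points on a cycle occur in a single cyclic order up to reflection, so ``$u_\beta$ lies between $u_\alpha$ and $u_\gamma$ along one of the two arcs'' is vacuous; your remark about killing the wrap-around orderings with a symmetric outside gadget is exactly where the difficulty lies and is not worked out.

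The paper's construction avoids both problems. The permutation freedom lives in the rotation of \emph{stars}, not in a cycle: for each triple $t_i$ there is a star $S_i$ with $n$ leaves (one leaf $x_i^j$ per element of $A$) centered at a cycle vertex $u_i$, and the $n!$ possible rotations of its leaves around $u_i$ encode the candidate orderings. The cycle in \Gint has $3m$ vertices indexed by \emph{triples}, not by elements, and serves only to anchor the stars and to support biconnectivity. The private edges of \Gr and \Gb form systems of $n$ internally disjoint length-$3$ paths between consecutive star centers, which simultaneously (i) force the rotation of every star to be a reversal of the next one, so that a single global order $\mathcal O$ propagates through all $m$ copies, and (ii) make \Gr and \Gb biconnected with no ad hoc augmentation. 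Each triple's constraint is then enforced locally by a small \Gg-gadget on the leaves $x_i^\alpha,x_i^\beta,x_i^\gamma$ of the $i$-th copy, which becomes triconnected after adding one edge and hence admits exactly two embeddings (the order $\alpha,\beta,\gamma$ or its reverse). To salvage your approach you would need to replace the element-cycle by a structure whose embedding genuinely carries a permutation (stars, or parallel paths between two poles), and you would then face the synchronization problem across triples that the paper solves with the \Gr/\Gb path systems.
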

\begin{proof}
  The membership in \NP has been proved in~\cite{gjpss-sgefe-06} by
  reducing \sefe to the \emph{Weak Realizability}
  Problem~\cite{k-cnatg-98,kln-nstl-91}.

  The \NPHN is proved by means of a polynomial-time reduction from
  problem \betp. Given an instance $\langle A,C \rangle$ of \betp, we
  construct an instance \sunsefeinstance of \sunsefep that admits a
  \sefe if and only if $\langle A,C \rangle$ is a positive instance of
  \betp, as follows.

  Refer to Fig.~\ref{fig:sunreduction} for an illustration of the
  construction of \Gint, \Gr, \Gb, and \Gg.

\begin{figure}[htb]
  \centering
  \includegraphics[width=\textwidth]{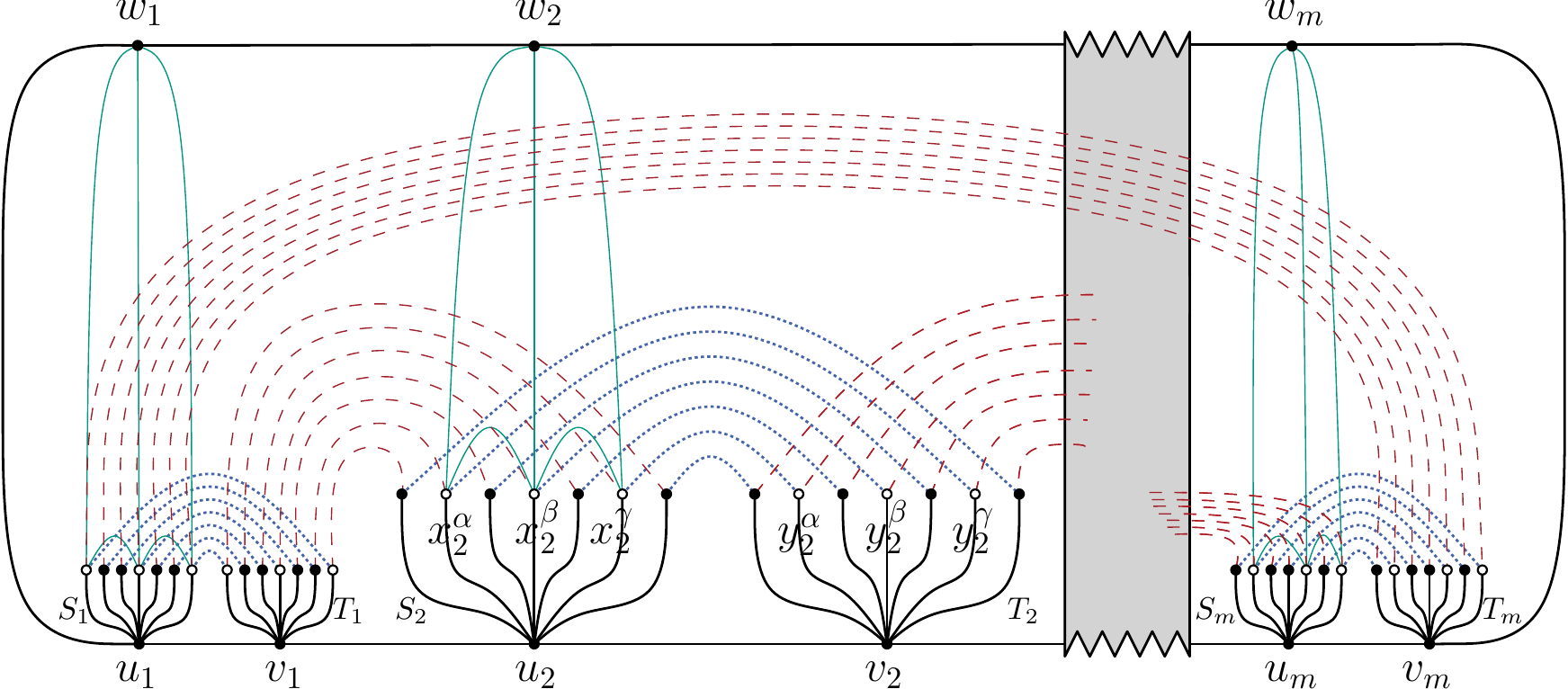}
  \caption{Illustration of the composition of \Gint, \Gr, \Gb, and
    \Gg in Lemma~\ref{le:aux-ptkbe}, focused on the $i$-th triple $t_i=\langle \alpha,\beta,\gamma
    \rangle$ of $C$ with $i=2$.}\label{fig:sunreduction}
\end{figure}

Graph \Gint contains a cycle $\mathcal{C} = u_1, v_1, u_2, v_2,
\dots,$ $u_m, v_m, w_m, \dots, w_1$ of $3m$ vertices. Also, for each
$i=1,\dots,m$, \Gint contains a star $S_i$ with $n$ leaves centered at
$u_i$ and a star $T_i$ with $n$ leaves centered at $v_i$. For each
$i=1, \dots, m$, the leaves of $S_i$ are labeled $x_i^j$ and the
leaves of $T_i$ are labeled $y_i^j$, for $j=1,\dots, n$.
Graph \Gr contains all the edges of \Gint plus a set of edges
$(y_{i}^j, x_{i+1}^j)$, for $i=1, \dots, m$ and $j=1, \dots, n$. Here
and in the following, $i+1$ is computed modulo $m$.
Graph \Gb contains all the edges of \Gint plus a set of edges
$(x_i^j,y_i^j)$, for $i=1, \dots, m$ and $j=1, \dots, n$.
Graph \Gg contains all the edges of \Gint plus a set of edges defined
as follows. For each $i=1, \dots, m$, consider the $i$-th triple
$t_i=\langle \alpha,\beta,\gamma \rangle$ of $C$, and the
corresponding vertices $x_i^\alpha$, $x_i^\beta$, and $x_i^\gamma$ of
$S_i$; graph \Gg contains edges $(w_i, x_i^\alpha)$, $(w_i,
x_i^\beta)$, $(w_i, x_i^\gamma)$, $(x_i^\alpha, x_i^\beta)$, and
$(x_i^\beta, x_i^\gamma)$.

First note that, by construction, \sunsefeinstance is an instance of
\sunsefep, and graph \Gint is a spanning pseudo-tree. Also, one can
easily verify that \Gr and \Gb are biconnected. In the following we
prove that \sunsefeinstance is a positive instance if and only if
$\langle A,C \rangle$ is a positive instance of \betp.

Suppose that \sunsefeinstance is a positive instance, that is, \Gr,
\Gb, and \Gg admit a \sefe \sunsefesolution.
Observe that, for each $i=1,\dots,m$, the subgraph of \Gr induced by
the vertices of $T_i$ and the vertices of $S_{i+1}$ is composed of a
set of $n$ paths of length $3$ between $v_i$ and $u_{i+1}$, where the
$j$-th path contains internal vertices $y_i^j$ and $x_{i+1}^j$, for
$i=1,\dots,n$.  Hence, in any \sefe of \sunsefeinstance, the ordering
of the edges of $T_i$ around $v_i$ is reversed with respect to the
ordering of the edges of $S_{i+1}$ around $u_{i+1}$, where the
vertices of $T_i$ and $S_{i+1}$ are identified based on index $j$.
Also observe that, for each $i=1,\dots,m$, the subgraph of \Gb induced
by the vertices of $S_i$ and the vertices of $T_i$ is composed of a
set of $n$ paths of length $3$ between $u_i$ and $v_i$, where the
$j$-th path contains internal vertices $x_i^j$ and $y_i^j$, for
$i=1,\dots,n$. Hence, in any \sefe of \Gr, \Gb, and \Gg, the ordering
of the edges of $S_i$ around $u_i$ is the reverse of the ordering of
the edges of $T_i$ around $v_i$, where the vertices of $S_i$ and $T_i$
are identified based on $j$.
The two observations imply that, in any \sefe of \Gr, \Gb, and \Gg,
for each $i=1,\dots,m$ the ordering of the edges of $S_i$ around $u_i$
is the same as the ordering of the edges of $S_{i+1}$ around
$v_{i+1}$, where the vertices of $S_i$ and $S_{i+1}$ are identified
based on $j$.

We construct a linear ordering $\mathcal{O}$ of the elements of $A$
from the ordering of the leaves of $S_1$ in
\sunsefesolution. Initialize $\mathcal{O} = \emptyset$; then, starting
from the edge of $S_1$ clockwise following $(u_1,w_1)$ around $u_1$,
consider all the leaves of $S_1$ in clockwise order.  For each
considered leaf $x_1^j$, append $j$ as the last element of
$\mathcal{O}$.
We prove that $\mathcal{O}$ is a solution of $\langle A,C\rangle$.
For each $i=1,\dots,m$, the subgraph of \Gg induced by vertices $w_i$,
$u_i$, $x_i^\alpha$, $x_i^\beta$, and $x_i^\gamma$ is such that adding
edge $(u_i,w_i)$ would make it triconnected. Hence, it admits two
planar embeddings, which differ by a flip. Thus, in any \sefe of \Gr,
\Gb, and \Gg, edges $(u_i,x_i^\alpha)$, $(u_i,x_i^\beta)$, and
$(u_i,x_i^\gamma)$ appear either in this order or in the reverse order
around $u_i$. Since for each triple $t_i=\langle \alpha, \beta, \gamma
\rangle$ in $C$ there exists vertices $w_i$, $u_i$, $x_i^\alpha$,
$x_i^\beta$, and $x_i^\gamma$ inducing a subgraph of \Gg with the
above properties, and since the clockwise ordering of the leaves of
$S_i$ is the same for every $i$, $\mathcal{O}$ is a solution of
$\langle A,C\rangle$.

Suppose that $\langle A,C \rangle$ is a positive instance, that is,
there exists an ordering $\mathcal{O}$ of the elements of $A$ in which
for each triple $t_i$ of $C$, the three elements of $t_i$ appear in
one of their two admissible orderings. We construct an embedding for
\Gr,\Gb, and \Gg. For each $i=1,\dots,m$, the rotation schemes of
$u_i$ and $v_i$ are constructed as follows. Initialize $first=v_{i-1}$
if $i>1$, otherwise $first=w_1$.  Also, initialize $last=u_{i+1}$ if
$i<m$, otherwise $last=w_m$.  For each element $j$ of $\mathcal{O}$,
place $(u_i,x_i^j)$ between $(u_i, first)$ and $(u_i,v_i)$ in the
rotation scheme of $u_i$, and set $first=x_i^j$.  Also, place
$(v_i,x_i^j)$ between $(v_i, last)$ and $(v_i,u_i)$ in the rotation
scheme of $v_i$, and set $last=x_i^j$.
Since all the vertices of \Gr and of \Gb different from $u_i$ and
$v_i$ ($i=1,\dots,m$) have degree $2$, the embeddings \GammaR and
\GammaB of \Gr and \Gb, are completely specified.
To obtain the embedding \GammaG of \Gg, we have to specify the
rotation scheme of $w_i$ and of the three leaves of $S_i$ adjacent to
$w_i$, for $i=1,\dots,m$.  Consider a triple $t_i=\langle \alpha,
\beta, \gamma \rangle$ of $C$. Initialize $first=w_{i-1}$, if $i>1$,
and $first=u_1$ otherwise. Also, initialize $last=w_{i+1}$, if $i<m$,
and $last=v_m$ otherwise. Recall that $\alpha$, $\beta$, and $\gamma$
appear in $\mathcal{O}$ either in this order or in the reverse one. In
the former case, the rotation scheme of $w_i$ is $(w_i, last),(w_i,
x_i^\gamma),(w_i, x_i^\beta),(w_i,x_i^\alpha),(w_i, first)$; the
rotation scheme of $x_i^\alpha$ is $(x_i^\alpha, w_i)$, $(x_i^\alpha,
x_i^\beta),$ $(x_i^\alpha, u_i)$; the rotation scheme of $x_i^\beta$
is $(x_i^\beta, x_i^\alpha),$ $(x_i^\beta, w_i),$ $(x_i^\beta,
x_i^\gamma),$ $(x_i^\beta, u_i)$; and the rotation scheme of
$x_i^\gamma$ is $(x_i^\gamma, x_i^\beta),$ $(x_i^\gamma, w_i),$
$(x_i^\gamma, u_i)$.  In the latter case, the rotation scheme of $w_i$
is $(w_i, last),$ $(w_i,x_i^\alpha),$ $(w_i, x_i^\beta),$
$(w_i,x_i^\gamma),$ $(w_i, first)$; the rotation scheme of
$x_i^\alpha$ is $(x_i^\alpha, x_i^\beta),$ $(x_i^\alpha, w_i),$
$(x_i^\alpha, u_i)$; the rotation scheme of $x_i^\beta$ is
$(x_i^\beta, x_i^\gamma),$ $(x_i^\beta, w_i),$ $(x_i^\beta,
x_i^\alpha),$ $(x_i^\beta, u_i)$; and the rotation scheme of
$x_i^\gamma$ is $(x_i^\gamma, w_i),$ $(x_i^\gamma, x_i^\beta),$
$(x_i^\gamma, u_i)$.
In order to prove that \sunsefesolution is a \sefe, we first observe
that the embeddings of \Gint obtained by restricting \GammaR, \GammaB,
and \GammaG to the edges of \Gint, respectively, coincide by
construction.  The planarity of \GammaR and \GammaB descends from the
fact that the orderings of the edges incident to $u_i$ and $v_i$, for
$i=1,\dots,m$, is one the reverse of the other (where vertices are
identified based on index $j$).  The planarity of \GammaG is due to
the fact that, by construction, for each $i=1,\dots,m$, the subgraph
induced by $w_i$, $u_i$, $x_i^\alpha$, $x_i^\beta$, and $x_i^\gamma$
is planar in \GammaG. This concludes the proof of the theorem.
\end{proof}

We are now ready to prove the main result of the section, by showing
how to modify the reduction of Lemma~\ref{le:aux-ptkbe} to obtain
instances in which all graphs are biconnected and \Gint is a tree.
 
\begin{theorem}\label{th:sunsefep}
  \sunsefep is \NPC for $k\geq 3$ even if all the input graphs are
  biconnected and the intersection graph is a spanning tree.
\end{theorem}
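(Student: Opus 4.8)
The plan is to modify the reduction of Lemma~\ref{le:aux-ptkbe} so that \Gint becomes a spanning tree and all input graphs become biconnected. The single structural change is to replace the cyclic backbone $\mathcal{C}$ of \Gint by the \emph{interleaved path} $B=u_1,w_1,v_1,u_2,w_2,v_2,\dots,u_m,w_m,v_m$, keeping the star $S_i$ centered at $u_i$ and the star $T_i$ centered at $v_i$. Then \Gint is a caterpillar, hence a spanning tree; and since each $u_i,v_i$ carries an $n$-leaf star while each $w_i$ is internal to $B$, the only leaves of \Gint are the star leaves $x_i^j,y_i^j$. The private edges of \Gr, \Gb, \Gg will be those of the lemma (the wrap edges $(y_i^j,x_{i+1}^j)$ in \Gr, the matching edges $(x_i^j,y_i^j)$ in \Gb, the triple-gadget edges in \Gg), plus a handful of extra private edges introduced solely to restore biconnectivity.

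For $\Gr=\Gint\cup\{(y_i^j,x_{i+1}^j)\}$ (indices mod $m$), the ``wrap-around'' edges $(y_m^j,x_1^j)$, $1\le j\le n$, join the two ends $u_1,v_m$ of $B$ by $n$ internally disjoint paths, so the removal of any internal vertex of $B$ keeps \Gr connected; checking the remaining candidate cut vertices and the leaves shows \Gr is biconnected. Since, for each $i$, the subgraph of \Gr induced by the leaves and centers of $T_i$ and $S_{i+1}$ still consists of $n$ length-$3$ paths between $v_i$ and $u_{i+1}$ (possibly together with the harmless backbone edge $(v_i,u_{i+1})$ when $i<m$), the argument of Lemma~\ref{le:aux-ptkbe} applies verbatim and forces the rotation $\tau_i$ of $T_i$ at $v_i$ to be the reverse of the rotation $\sigma_{i+1}$ of $S_{i+1}$ at $u_{i+1}$. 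For $\Gb=\Gint\cup\{(x_i^j,y_i^j)\}\cup\{(u_1,v_m)\}$, the single extra edge $(u_1,v_m)$ again joins the ends of $B$; combined with the matching edges, which bridge each $w_i$, this makes \Gb biconnected, while a single center--center edge constrains no rotation, so \Gb again imposes $\sigma_i=\text{reverse of }\tau_i$. Together these give $\sigma_1=\sigma_2=\dots=\sigma_m$.

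The delicate step is \Gg, which in the lemma leaves $n-3$ leaves of every $S_i$ and all leaves of every $T_i$ pendant. I would make $w_i$ adjacent in \Gg to \emph{all} leaves $x_i^1,\dots,x_i^n,y_i^1,\dots,y_i^n$ (this contains the three triple-gadget spokes $(w_i,x_i^{\alpha_i}),(w_i,x_i^{\beta_i}),(w_i,x_i^{\gamma_i})$ of $t_i=\langle\alpha_i,\beta_i,\gamma_i\rangle$) and add the cycle $\{(w_i,w_{i+1})\}$ (mod $m$) through the $w$-vertices. Every former leaf then has degree $\ge2$, the $w$-cycle bridges every internal vertex of $B$, and one checks \Gg has no cut vertex. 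The triple gadget retains its rigidity: the subgraph of \Gg induced by $\{w_i,u_i,x_i^{\alpha_i},x_i^{\beta_i},x_i^{\gamma_i}\}$ is exactly the $9$-edge graph ``triple gadget $+\,(u_i,w_i)$'' of the lemma, which is triconnected, so in any embedding $x_i^{\alpha_i},x_i^{\beta_i},x_i^{\gamma_i}$ appear around $u_i$ in the order $\alpha_i,\beta_i,\gamma_i$ or its reverse; the remaining spokes at $w_i$ and the $w$-cycle edges, routed on the outer side of the caterpillar, impose nothing further on the $\sigma_i$. Thus the instance behaves exactly as in the lemma: from a \sefe one reads off a solution $\mathcal{O}=\sigma_1$ of the \betp instance, and conversely any solution $\mathcal{O}$ is realized by taking all $\sigma_i=\mathcal{O}$ and drawing each graph planarly.

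Finally, the three private edge sets are pairwise disjoint and disjoint from \Gint (this holds once $m\ge3$, which may be assumed after padding \betp with trivially satisfiable triples), so we obtain a \sunsefep instance with $k=3$, biconnected inputs, and \Gint a spanning tree. For $k>3$ one appends $k-3$ graphs, each \Gint augmented to biconnectivity by a matching of the same type but with a distinct shift mod $m$ plus an end-joining edge; distinct shifts keep all private edge sets pairwise disjoint, and by the parity computation none of them imposes any new constraint. I expect the main difficulty to be precisely the \Gg step: the added private edges must at once raise all leaf degrees, bridge every internal backbone vertex, preserve the triple gadget's rigidity, and still admit a planar drawing together with the triple gadgets --- it is the combination of placing $w_i$ between $u_i$ and $v_i$ on the backbone and of the outer $w$-cycle that makes these requirements compatible; the remaining cut-vertex case analyses and the explicit drawings for the converse direction are routine but lengthy.
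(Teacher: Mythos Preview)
Your overall plan --- replace the cyclic backbone by a path and add a few private edges to recover biconnectivity --- is different from the paper's (which keeps the cycle, attaches a new $3$-leaf star at each $w_i$ via a center $c_i$ and subdivision vertices $s_i,t_i$, routes the triple constraint through $c_i$, and only at the very end breaks one backbone edge). The path-backbone idea is reasonable for \Gr and \Gb, but there is a genuine gap in your treatment of \Gg.

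You claim that the extra spokes $(w_i,x_i^j)$ for $j\notin\{\alpha_i,\beta_i,\gamma_i\}$ ``impose nothing further on the $\sigma_i$'', checking only that the subgraph of \Gg induced on $\{u_i,w_i,x_i^{\alpha_i},x_i^{\beta_i},x_i^{\gamma_i}\}$ is still the triconnected $9$-edge gadget. But it is not enough to look at this induced subgraph; one must look at all of \Gg restricted to $\{u_i,w_i\}\cup\{x_i^1,\dots,x_i^n\}$. That subgraph now consists of the backbone edge $(u_i,w_i)$, the $n$ length-$2$ paths $u_i\,x_i^j\,w_i$, and the two chords $(x_i^{\alpha_i},x_i^{\beta_i})$ and $(x_i^{\beta_i},x_i^{\gamma_i})$. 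In any planar embedding of this ``book'' the $n{+}1$ parallel $u_i$--$w_i$ paths occur in some cyclic order and every internal face is bounded by two \emph{consecutive} paths; a chord $(x_i^{\alpha_i},x_i^{\beta_i})$ can therefore be drawn only if $x_i^{\alpha_i}$ and $x_i^{\beta_i}$ lie on adjacent pages. Consequently your \Gg forces $\alpha_i,\beta_i,\gamma_i$ to be three \emph{consecutive} positions (with $\beta_i$ in the middle) in every $\sigma_i$, not merely in the relative order $\alpha_i,\beta_i,\gamma_i$ or its reverse. The constructed instance is then positive iff the \betp instance has a linear order in which every triple is consecutive, which is a different and strictly stronger condition; the reduction is not correct as stated.

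This is exactly the ``delicate step'' you flagged: to make \Gg biconnected you must raise the degree of the $n-3$ leftover leaves of each $S_i$ (and of all leaves of each $T_i$) \emph{without} creating a second fan that sandwiches them, together with the triple leaves, between $u_i$ and a common second pole. The paper achieves this by introducing a separate center $c_i$ (with its own three leaves $\alpha_i,\beta_i,\gamma_i$) and auxiliary vertices $s_i,t_i$: the triple constraint is enforced via paths through $\alpha_i,\beta_i,\gamma_i$ and $c_i$, while the remaining $S_i$-leaves are attached to $c_i$ and the $T_i$-leaves to $t_i$, so no ``book'' containing the chord edges acquires extra pages. Some separation of this kind --- distinct anchors for the triple leaves and for the remaining leaves --- appears to be unavoidable; simply fanning everything to the same $w_i$ over-constrains the instance.
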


\begin{proof}
  The membership in \NP has been proved in~\cite{gjpss-sgefe-06}.

  The \NPHN is proved by means of a polynomial-time reduction from
  problem \betp. Given an instance $\langle A,C \rangle$ of \betp, we
  first construct an instance \sunsefeinstancep{*} of \sunsefep that
  admits a \sefe if and only if $\langle A,C \rangle$ is a positive
  instance of \betp by applying the reduction shown in
  Lemma~\ref{le:aux-ptkbe}. We show how to modify \sunsefeinstancep{*}
  to obtain an equivalent instance \sunsefeinstance with the required
  properties.

  Refer to Fig.~\ref{fig:sunBICOreduction} for an illustration of the
  construction of \Gint, \Gr, \Gb, and \Gg.

  Graph \Gint is initialized to $\Gintm^*$. For $i=1,\dots,m$,
  subdivide edge $(w_u,w_{i+1})$ (where $w_{m+1}=v_m$) with two
  vertices $s_i$ and $t_i$, add a star with $3$ leaves $\alpha_i$,
  $\beta_i$, and $\gamma_i$ with center $c_i$, and add an edge
  connecting $w_i$ to $c_i$.
  Graph \Gr contains all the edges of \Gint plus a set of edges
  defined as follows. As in \sunsefeinstancep{*}, for $i=1, \dots, m$,
  graph \Gr contains edges $(y_{i}^j, x_{i+1}^j)$, with $j=1, \dots,
  n$, connecting the leaves of $T_i$ to the leaves of
  $S_{i+1}$. Additionally, for $i=1, \dots, m$, \Gr contains edges
  $(w_i,\alpha_i)$,$(\alpha_i,\beta_i)$,$(\beta_i,\gamma_i)$,
  $(\gamma_i,w_i)$, and $(\beta_i,s_i)$. Here and in the following,
  $i+1$ is computed modulo $m$.
  Graph \Gb contains all the edges of \Gint plus a set of edges
  defined as follows.  As in \sunsefeinstancep{*}, for $i=1, \dots,
  m$, graph \Gb contains edges $(x_i^j,y_i^j)$, with $j=1, \dots,
  n$. Additionally, for $i=1, \dots, m$, \Gb contains edges
  $(\alpha_i,t_i)$, $(\beta_i,t_i)$, and $(\gamma_i,t_i)$.
  Graph \Gg contains all the edges of \Gint plus a set of edges
  defined as follows.  For each $i=1, \dots, m$, consider the $i$-th
  triple $t_i=\langle \alpha,\beta,\gamma \rangle$ of $C$, and the
  corresponding vertices $x_i^\alpha$, $x_i^\beta$, and $x_i^\gamma$
  of $S_i$; graph \Gg contains edges $(\alpha_i, x_i^\alpha)$,
  $(\beta_i, x_i^\beta)$, $(\gamma_i, x_i^\gamma)$, and edges
  $(x_i^j,c_i)$, for every $j \notin \{\alpha,\beta,\gamma\}$.  Also,
  for $i=1, \dots, m$, graph \Gg contains edges $(y_i^j,t_i)$, with
  $j=1,\dots,n$.

  Observe that, graph \Gint is a pseudo-tree and graphs \Gr, \Gb, and
  \Gg are biconnected.  We first prove that the constructed instance
  \sunsefeinstance of \sunsefep is equivalent to instance $\langle A,C
  \rangle$ of \betp. Then, we show how to modify \sunsefeinstance in
  such a way that \Gint is a tree, without losing the biconnectivity
  of the input graphs.

  Suppose that \sunsefeinstance is a positive instance, that is, \Gr,
  \Gb, and \Gg admit a \sefe \sunsefesolution.
  Observe that, as proved in Lemma~\ref{le:aux-ptkbe} for
  \sunsefeinstancep{*}, in any \sefe of \Gr, \Gb, and \Gg, for each
  $i=1,\dots,m$, the ordering of the edges of $S_i$ around $u_i$ is
  the same as the ordering of the edges of $S_{i+1}$ around $v_{i+1}$,
  where the vertices of $S_i$ and $S_{i+1}$ are identified based on
  index $j$.

  We construct a linear ordering $\mathcal{O}$ of the elements of $A$
  from the ordering of the leaves of $S_1$ in \sunsefesolution as
  described in Lemma~\ref{le:aux-ptkbe}.

  We prove that $\mathcal{O}$ is a solution of $\langle A,C\rangle$.
  For each $i=1,\dots,m$, the subgraph of \Gr induced by vertices
  $w_i$, $\alpha_i$, $\beta_i$, $\gamma_i$, and $c_i$ is a
  triconnected subgraph attached to the rest of the graph through the
  split pair $\{w_i,\beta_i\}$. Hence, in any planar embedding of \Gr
  (and hence also in \GammaR) the clockwise order of the edges around
  $c_i$ is either $(c_i,\alpha_i)$, $(c_i,w_i)$, $(c_i,\gamma_i)$, and
  $(c_i,\beta_i)$, or $(c_i,\alpha_i)$, $(c_i,\beta_i)$,
  $(c_i,\gamma_i)$, and $(c_i,w_i)$.  Also, the ordering of the edges
  of \Gg around $c_i$ in \GammaR restricted to those belonging to
  \Gint is the same as in \GammaR. Further, for each $i=1,\dots,m$,
  consider the subgraph of \Gg composed of the paths connecting $c_i$
  and $u_i$, and containing a leaf of $S_i$. In any planar embedding
  of \Gg the ordering of the edges around $c_i$ is reversed with
  respect to the ordering of the edges around $u_i$, where the edges
  are identified based on the path they belong to. Hence, the ordering
  of the edges of $S_i$ around $u_i$ (and thus $\mathcal{O}$) is such
  that edges $(u_i,x_i^\alpha)$, $(u_i,x_i^\beta)$, and
  $(u_i,x_i^\gamma)$ appear either in this order or in the reverse
  order.  Since the clockwise ordering of the edges of $S_i$ around
  $u_i$ is the same for every $i$, $\mathcal{O}$ is a solution of
  $\langle A,C\rangle$.

  Suppose that $\langle A,C \rangle$ is a positive instance, that is,
  there exists an ordering $\mathcal{O}$ of the elements of $A$ in
  which for each triple $t_i$ of $C$ the three elements of $t_i$
  appear in one of their two admissible orderings. We construct
  embeddings \GammaR,\GammaB, and \GammaG for \Gr,\Gb, and \Gg,
  respectively. For each $i=1,\dots,m$, the rotation schemes of $u_i$
  and $v_i$ in \GammaR, in \GammaB, and in \GammaG are constructed
  based on $\mathcal{O}$ as described in the proof of
  Lemma~\ref{le:aux-ptkbe}.
  Note that, in any \sefe of \sunsefeinstance all the vertices not
  belonging to the only cycle of \Gint lie on the same side with
  respect to it, as removing such a cycle from the union graph \Gun
  results in a connected graph. Hence, the rotation scheme of $w_i$
  restricted to the edges of \Gint is determined in \GammaR, \GammaB,
  and \GammaG. Also, the rotation scheme of $s_i$ is determined in
  \GammaR.
  Consider the $i$-th triple $t_i=\langle \alpha,\beta,\gamma \rangle$
  of $C$. We set the rotation scheme of $c_i$ restricted to the edges
  of \Gint in \GammaR, \GammaB, and \GammaG to be either $(c_i,w_i)$,
  $(c_i,\gamma_i)$, $(c_i,\beta_i)$, and $(c_i,\alpha_i)$, if
  $\alpha$, $\beta$, and $\gamma$ appear in this order in
  $\mathcal{O}$, or $(c_i,w_i)$, $(c_i,\alpha_i)$, $(c_i,\beta_i)$,
  and $(c_i,\gamma_i)$, if they appear in the reverse order in
  $\mathcal{O}$. Note that, given the rotation scheme of $c_i$ in
  \GammaR and in \GammaB, the rotations schemes of $w_i$, $\alpha_i$,
  $\beta_i$, and $\gamma_i$ in \GammaR and of $t_i$ in \GammaB are
  univocally determined. Observe that \GammaR and \GammaB are planar
  by construction.  We prove that \GammaG can be completed to a planar
  drawing of \Gg.  In order to do that, we need to specify the
  rotation schemes of $c_i$ and $t_i$ in \GammaG.  We set the rotation
  schemes of $c_i$ and of $t_i$ to be the reverse with respect to the
  rotation schemes of $u_i$ and of $v_i$, respectively, where edges
  are identified based on the path they belong to.  As for $t_i$, this
  clearly does not introduce crossings in \GammaG, while for $c_i$
  this is due to the fact that the ordering of the edges of \Gint
  incident to $c_i$ determined by the $i$-th triple is consistent with
  the rotation scheme of $u_i$, since this has been determined by
  $\mathcal{O}$.

  In order to prove that \sunsefesolution is a \sefe, we observe that
  the embeddings of \Gint obtained by restricting \GammaR, \GammaB,
  and \GammaG to the edges of \Gint, respectively, coincide by
  construction.

  Finally, in order to make \Gint a spanning tree, remove edge
  $(u_1,w_1)$ from \Gint; add to \Gint two star graphs with $3$
  leaves, and add to \Gint an edge connecting $u_1$ to the center of
  the first star and an edge connecting $w_1$ to the center of the
  second star. Also, add edges to \Gr, to \Gb, and to \Gg among
  vertices of the two stars so that (i) all graphs remains
  biconnected, (ii) there exists an edge of \Gr, an edge of \Gb, and
  an edge of \Gg connecting a leaf of the first star to a leaf of the
  second star, and (iii) no edge is added to more than one graph. A
  suitable augmentation is shown in Fig.~\ref{fig:sunBICOreduction}.

\begin{figure}[htb]
  \centering
  \includegraphics[width=\textwidth]{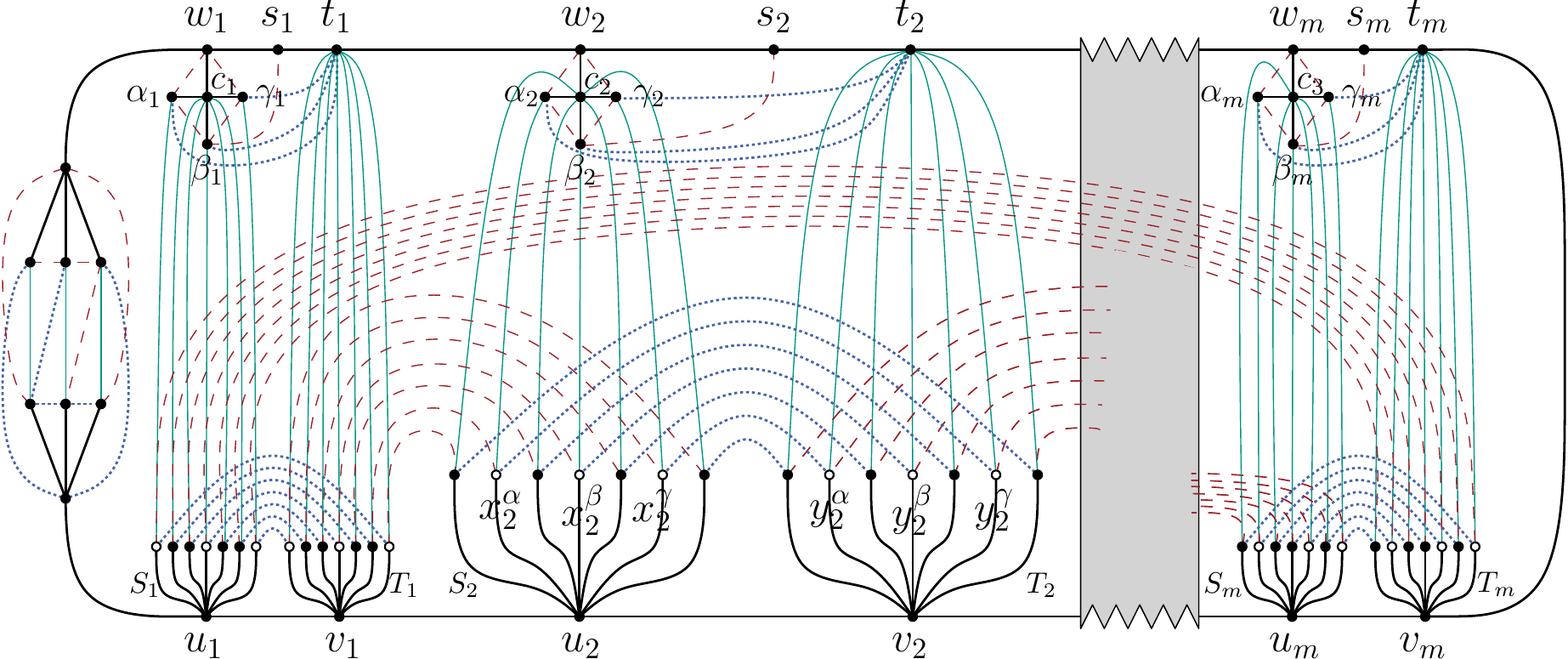}
  \caption{Illustration of the composition of \Gint, \Gr, \Gb, and
    \Gg in Theorem~\ref{th:sunsefep}, focused on the $i$-th triple $t_i=\langle \alpha,\beta,\gamma
    \rangle$ of $C$ with $i=2$.}\label{fig:sunBICOreduction}
\end{figure}

The above discussion proves the statement for $k=3$. To extend the
theorem to any value of $k$ observe that, given an instance of
\sunsefep with $k_0\geq 3$ biconnected graphs whose intersection graph
\Gint is a tree, an equivalent instance with $k_0 + 1$ biconnected
graphs whose intersection graph is a tree can be obtained by
subdividing an edge of \Gint with a dummy vertex and by connecting it
to all the leaves of \Gint with edges only belonging to the
$(k_0+1)$-th graph.
\end{proof}

\section{\pkpbep}\label{se:bookembedding}

In this section we turn our attention to the problem of computing
$k$-page book embeddings in which the assignment of the $k$ sets of
edges to the $k$ pages is given as part of the input. We study this
problem both in its original definition~\cite{hn-tpbecgp-09}, called \pkpbep
(\pkpbepshort), and in a generalization of it, called \ptckpbep (\ptckpbepshort),
in which the order of the vertices on the spine must satisfy an
additional constraint, namely it must be represented by a tree $T$,
also given as part of the input. Observe that, problem \ptckpbepshort
in which $T$ is a star is exactly the same problem as \pkpbepshort.

Problem \ptckpbepshort has been defined in \cite{adfpr-tsetgibgt-11}
and proved equivalent to the case of \sunsefep in which the
intersection graph \Gint is a spanning tree and all the edges not
belonging to \Gint are incident to two leaves of such
tree\footnote{Although~\cite{adfpr-tsetgibgt-11} proves the
  equivalence for $k=2$, the result can be naturally extended to any
  $k$.}. For this reason, in the following we will indifferently denote an instance
\ptckpbeinstance of \ptckpbepshort by the corresponding instance
$\langle G_1, \dots, G_k \rangle$ of \sunsefep, where $G_i=(V(T),E(T)
\cup E_i)$, for each $i=1,\dots,k$, and vice versa.

We remark that the instances of \sunsefep constructed in the reduction
performed in Theorem~\ref{th:sunsefep} are such that the intersection
graph \Gint is a spanning tree, but there exist edges not belonging to
\Gint that are incident to internal vertices of such tree. In order to obtain
equivalent instances of \sunsefep satisfying both properties, it would
be possible to apply a procedure described
in~\cite{adfpr-tsetgibgt-11} that, for each edge $e \in
\bigcup_{i=1}^k E_i$ incident to an internal vertex $v$ of \Gint, adds
a new leaf to \Gint attached to $v$ and replaces $v$ with this leaf as
an endvertex of $e$. Hence, Theorem~\ref{th:sunsefep} implies that
\ptckpbepshort is \NPC for $k \geq 3$.  However, every time a new leaf
is attached to an internal vertex, such a vertex becomes a cut-vertex
for $k-1$ of the input graphs; thus, none of the $k$ graphs $G_i$ can be
assumed to be biconnected after the whole procedure has been applied.

The relevance of this latter observation is motivated by the fact that
the biconnectivity of the input graphs $G_i$, together with the
``simplicity'' of $T$, seems to be the key factor that allows for
polynomial-time algorithms for the partitioned book embedding
problems.  Indeed, Hoske~\cite{hoske-befpa-12} proved that
\pkpbepshort becomes solvable in linear-time if each graph $G_i$ is
\emph{$T$-biconnected}, that is, $E_i$ induces a connected
graph. Notice that, $T$-biconnectivity is a stronger requirement than
biconnectivity, since the former implies the latter, while the
converse does not always hold.  We observe that the algorithm by Hoske
can be easily generalized from \pkpbepshort to \ptckpbepshort in which
$T$ is not necessarily a star; hence, the same algorithmic result can
be stated also for \ptckpbepshort.  Furthermore, to support the
importance of the above mentioned key factors, we recall that
\ptckpbepshort is polynomial-time solvable for $k=2$ if either both
input graphs are biconnected~\cite{br-spqacep-13}, or $T=\Gintm$ is a
star~\cite{hn-tpbecgp-09}, or $T=\Gintm$ is a binary
tree~\cite{hoske-befpa-12,s-ttphtpv-13}.

In this section we provide several results that considerably narrow the gap
between the instances of the partitioned book embedding problems that can be
solved in polynomial time and those that cannot (unless $P=NP$), by
studying their complexity with respect to such factors. Namely, we
prove that:
\begin{itemize}[$\circ$]
\item \ptckpbepshort remains \NPC for $k=3$ when $T$ is a caterpillar
  and $2$ of the input graphs are biconnected
  (Theorem~\ref{th:pt3be-2bico});
\item \pkpbepshort (with no
restriction on the biconnectivity) is \NPC for $k\geq 3$ (Theorem~\ref{th:p3be}), which was known only for $k$ unbounded~\cite{s-ttphtpv-13}; 
\item \ptckpbepshort is linear-time solvable if $k-1$ of the input
  graphs are $T$-biconnected (Theorem~\ref{th:algo});
\item requiring one of the two graphs of an instance
  \ptcTWOpbeinstance of \ptcTWOpbepshort to be biconnected (and even
  series-parallel) does not alter the computational complexity of the
  problem (Theorem~\ref{th:seriesparallel}).
\end{itemize}

Due to the equivalence between \ptckpbepshort and \sunsefep in which
\Gint is a spanning tree and all the edges not belonging to \Gint
connect two of its leaves, in order to prove
Theorem~\ref{th:pt3be-2bico} it suffices to show that the instances
produced in the reduction of Lemma~\ref{le:aux-ptkbe} can be modified
to obtain equivalent instances satisfying the above properties in
which two of the input graphs are biconnected.

\begin{theorem}\label{th:pt3be-2bico}
  \ptckpbepshort is \NPC for $k=3$ even if two of the input graphs are
  biconnected and $T=\Gintm$ is a caterpillar tree.
\end{theorem}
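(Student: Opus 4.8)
The plan is to start from the instance \sunsefeinstance produced by the reduction of Lemma~\ref{le:aux-ptkbe}, in which \Gr and \Gb are biconnected and \Gint is a spanning pseudo-tree, and to modify it so that (i) \Gint becomes a \emph{caterpillar} tree, (ii) every edge not in \Gint connects two leaves of \Gint, and (iii) two of the three input graphs remain biconnected. Property (ii) is exactly what makes the resulting \sunsefep instance an instance of \ptckpbepshort with $T = \Gintm$, via the equivalence of~\cite{adfpr-tsetgibgt-11}; property (iii) is what we need for the stated strengthening. The reason we cannot simply invoke Theorem~\ref{th:sunsefep} together with the leaf-splitting procedure of~\cite{adfpr-tsetgibgt-11} is precisely the observation made just before this theorem: splitting a vertex of \Gint to move an incident non-intersection edge onto a fresh leaf turns that vertex into a cut vertex of $k-1$ of the graphs, destroying biconnectivity. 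So the modification has to be done ``by hand'' on the pseudo-tree instance, inserting the needed leaves up front rather than as an afterthought.

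First I would handle the non-intersection edges of \Gint's cycle. In the Lemma~\ref{le:aux-ptkbe} construction, the ``troublesome'' edges (those incident to internal vertices of the eventual tree) are the cycle edges and the edges $(w_i, x_i^\alpha)$, $(w_i,x_i^\beta)$, $(w_i,x_i^\gamma)$ of \Gg attached to the $w_i$'s. I would reuse the gadget already designed in Theorem~\ref{th:sunsefep}: subdivide each cycle edge $(w_i, w_{i+1})$ with two vertices $s_i, t_i$, attach to each $w_i$ a three-leaf star with center $c_i$ (leaves $\alpha_i,\beta_i,\gamma_i$) via a \Gint-edge $(w_i, c_i)$, and redistribute the betweenness-enforcing edges among \Gr, \Gb, \Gg as in that proof (the \Gr-edges $(w_i,\alpha_i),(\alpha_i,\beta_i),(\beta_i,\gamma_i),(\gamma_i,w_i),(\beta_i,s_i)$; the \Gb-edges $(\alpha_i,t_i),(\beta_i,t_i),(\gamma_i,t_i)$; and the \Gg-edges transferring the triple constraint onto the leaves $x_i^\alpha,x_i^\beta,x_i^\gamma$ and routing the remaining $x_i^j$ to $c_i$). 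The correctness argument — that a \sefe of this gadget forces $x_i^\alpha,x_i^\beta,x_i^\gamma$ to appear around $u_i$ in one of the two triple-orders, and that any valid betweenness ordering can be realized — is identical to the one already given in Theorem~\ref{th:sunsefep}, so I would cite it rather than reprove it. The net effect is that now \emph{all} non-intersection edges are leaf-to-leaf, and \Gr and \Gb pick up the new gadget edges in a way that keeps them biconnected, while \Gg remains the non-biconnected one.

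Next I would break the unique cycle of \Gint to get a tree, again copying the device from Theorem~\ref{th:sunsefep}: delete a cycle edge, say $(u_1, w_1)$, attach a fresh three-leaf star to $u_1$ and another to $w_1$ (with \Gint-edges to their centers), and add to each of \Gr, \Gb, \Gg one edge joining a leaf of the first star to a leaf of the second, chosen so that all graphs that were biconnected stay biconnected and no edge lands in two graphs — a concrete such augmentation is the one depicted in the figure accompanying Theorem~\ref{th:sunsefep}. At this point \Gint is a tree; the last thing to check is that it is in fact a \emph{caterpillar}. This is the one genuinely new point, and it is really just bookkeeping: after the modifications, the ``spine'' of \Gint consists of the original cycle vertices $u_i, v_i, w_i$ together with the subdivision vertices $s_i, t_i$ (now forming a path once $(u_1,w_1)$ is removed), and every other vertex — the leaves $x_i^j, y_i^j$ of the stars $S_i, T_i$, the gadget leaves $\alpha_i,\beta_i,\gamma_i$, the centers $c_i$, and the four vertices of the two cycle-breaking stars — is at distance at most two from this path. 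To make the distance exactly one (true caterpillar) I would, where a vertex sits at distance two, note that its attachment point has degree two in \Gint and can be absorbed into the spine; I expect this to go through cleanly, but verifying that every leaf attaches directly to the spine path is the step I would be most careful about, since a single stray degree-one vertex hanging off another degree-one vertex would downgrade ``caterpillar'' to merely ``tree.'' Once that is checked, the $k=3$ statement follows, and \NP-membership is inherited from~\cite{gjpss-sgefe-06} exactly as in the previous theorems.
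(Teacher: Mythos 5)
There is a genuine gap: the gadget of Theorem~\ref{th:sunsefep} cannot be reused as-is, and your claim that after inserting it ``all non-intersection edges are leaf-to-leaf'' is false. In that construction the edges $(w_i,\alpha_i)$, $(\gamma_i,w_i)$, $(\beta_i,s_i)$ of \Gr, the edges $(\alpha_i,t_i)$, $(\beta_i,t_i)$, $(\gamma_i,t_i)$ of \Gb, and the edges $(x_i^j,c_i)$ and $(y_i^j,t_i)$ of \Gg are all incident to $w_i$, $s_i$, $t_i$, or $c_i$, which are \emph{internal} vertices of \Gint (each lies on the cycle/path or, in the case of $c_i$, is the center of a star attached to $w_i$). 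So the resulting instance is not T-coherent and therefore is not an instance of \ptckpbepshort at all; you would be forced back to the leaf-splitting procedure whose biconnectivity-destroying effect you correctly identified at the outset. Separately, the caterpillar property also fails for that gadget: the leaves $\alpha_i,\beta_i,\gamma_i$ sit at distance two from the spine via $c_i$, and your proposed repair (``the attachment point has degree two and can be absorbed into the spine'') does not apply, since $c_i$ has degree four in \Gint and, after deleting all leaves, becomes a pendant vertex hanging off the middle of the path --- there are $m$ such pendants, so the leaf-deleted graph is not a path. (The two cycle-breaking stars are unproblematic only because they are attached at the two \emph{endpoints} $u_1$ and $w_1$ of the spine.)

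The paper avoids both problems with a different, purpose-built modification of the Lemma~\ref{le:aux-ptkbe} instance. The only non-leaf-to-leaf edges there are the \Gg-edges $(w_i,x_i^\alpha)$, $(w_i,x_i^\beta)$, $(w_i,x_i^\gamma)$; each is subdivided into a length-two path whose half incident to $w_i$ is promoted to an \emph{intersection} edge, so that $w_i$ acquires a star $\Phi_i$ of new \Gint-leaves hanging directly off the spine, and the remaining green half joins two leaves of \Gint. The auxiliary stars $\Psi_i$ used to re-biconnect \Gr and \Gb are likewise centered \emph{at} subdivision vertices $d_i$ of the spine (not attached to it via a center vertex), and all edges added for biconnectivity run only between leaves of $\Phi_i$ and $\Psi_i$. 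This simultaneously yields T-coherence and keeps every non-spine vertex at distance one from the spine, which is exactly what your reuse of the Theorem~\ref{th:sunsefep} gadget fails to deliver.
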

\begin{proof}
  Consider an instance \sunsefeinstance obtained from the reduction
  described in Lemma~\ref{le:aux-ptkbe}. We describe how to obtain an
  equivalent instance satisfying the required properties.

  Refer to Fig.~\ref{fig:sunreduction} and to
  Fig.~\ref{fig:tc3pbe}. First, for $i=1,\dots,m$, replace the edges
  $(w_i,x_i^\alpha)$, $(w_i,x_i^\beta)$, and $(w_i,x_i^\gamma)$ of \Gg
  with length-$2$ paths composed of a black and of a green edge and
  such that the black edge is incident to $w_i$. Denote by $\Phi_i$
  the star graph centered at $w_i$ induced by the newly inserted black
  edges. Second, for $i=1,\dots,m$, subdivide edge $(w_i,w_{i+1})$ of
  \Gint (where $w_{m+1}=v_m$) with a dummy vertex $d_i$, and add to
  \Gint a star graph $\Psi_i$ centered at $d_i$ and with $3$
  leaves. Observe that, at this stage of the construction, \Gint is a
  spanning pseudo-caterpillar.

\begin{figure}[htb]
  \centering
  \includegraphics[width=\textwidth]{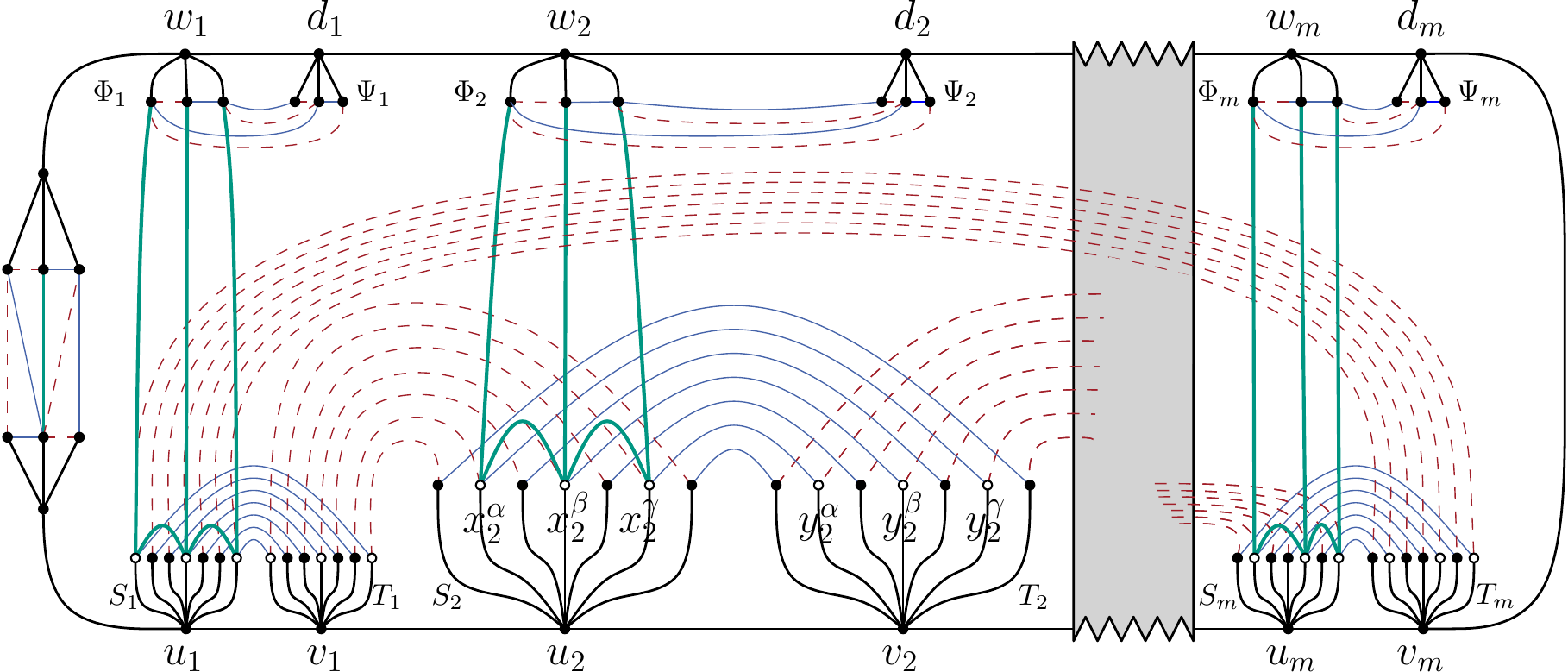}
  \caption{Illustration of how to modify the instance of \sunsefep so
    that: (i) the intersection graph \Gint is a spanning caterpillar
    and (ii) \Gr and \Gb are biconnected.}\label{fig:tc3pbe}
\end{figure}

It is now possible to obtain an equivalent instance of \sunsefep where
\Gr and \Gb are biconnected and \Gint remains a spanning
pseudo-caterpillar, by only adding edges to \Gr and to \Gb among the
leaves of $\Phi_i$ and $\Psi_i$, for $i=1,\dots,m$.

Further, in order to make \Gint a spanning caterpillar, remove edge
$(u_1,w_1)$ from \Gint; add to \Gint two star graphs with $3$ leaves,
and add to \Gint an edge connecting $u_1$ to the center of the first
star and an edge connecting $w_1$ to the center of the second star.

Finally, add edges to \Gr, to \Gb, and to \Gg among the leaves of the
two stars so that (i) \Gr and \Gb are biconnected, (ii) there exists
an edge of \Gg connecting a leaf of the first star to a leaf of the
second star, and (iii) no edge is added to more than one graph.  A
suitable augmentation is shown in Fig.~\ref{fig:tc3pbe}.

It is easy to observe that the constructed instance satisfies the
required properties.
\end{proof}

In the following we prove that dropping the requirement of
biconnectivity of the graphs allows us to prove \NPCN also for
\pkpbepshort when $k$ is bounded by a constant, thus improving on the
result of Hoske~\cite{hoske-befpa-12}. We first prove that the \NPCN
of \ptckpbepshort for $k\geq 3$ proved in Theorem~\ref{th:pt3be-2bico}
implies the \NPCN of \pkpbepshort for $k\geq 4$. Then, in
Theorem~\ref{th:p3be} we show that
\pkpbepshort is \NPC even for $k=3$. We recall that a linear-time
algorithm for the problem is known when
$k=2$~\cite{hn-tpbecgp-09}.

\begin{theorem}\label{th:pbek1}
  \ptckpbepshort is polynomial-time reducible to {\scshape PBE-$(k+1)$}.
\end{theorem}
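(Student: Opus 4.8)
The plan is to reduce an instance \ptckpbeinstance of \ptckpbepshort to the instance $\langle V(T),E_1,\dots,E_k,E(T)\rangle$ of {\scshape PBE-$(k+1)$}: we promote the internal vertices of $T$ to spine vertices, keep $E_i$ on page $i$ for $i=1,\dots,k$, and devote the new $(k+1)$-th page to the edges $E(T)$ of the tree $T$ itself. Note that $E(T)$ is disjoint from every $E_i$, since the edges of $E_i$ join two leaves of $T$ (the elements of $V$) while every edge of $E(T)$ is incident to an internal vertex of $T$; hence this is a well-formed instance of {\scshape PBE-$(k+1)$}, and it is computed in linear time. It remains to prove that the two instances are equivalent.

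The crux is the following correspondence: a linear order of $V$ is \emph{represented by $T$} (i.e., it is the order in which the leaves of $T$ appear along the outer face of some planar embedding of $T$) if and only if it is the restriction to $V$ of some linear order of $V(T)$ that admits a one-page book embedding of $T$, the arcs of $E(T)$ being pairwise non-crossing. For the ``only if'' direction, fix a planar embedding of $T$ inducing the desired leaf order and let $\sigma'$ be the order of $V(T)$ in which vertices first appear during a DFS of $T$ consistent with that embedding; it is a standard fact that such a DFS order yields a one-page book embedding of a tree (each subtree spans a contiguous interval, so the arcs of $E(T)$ are non-crossing), and by construction $\sigma'$ restricted to $V$ is the desired order. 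For the ``if'' direction, a spine order $\sigma'$ of $V(T)$ with the arcs of $E(T)$ non-crossing is precisely an outerplanar embedding of $T$, i.e., a planar embedding in which all vertices lie on the outer face; reading the vertices along that face returns $\sigma'$ itself, and restricting to the leaves shows that $\sigma'$ restricted to $V$ is represented by $T$.

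Given this, the equivalence follows from the elementary observation that inserting or deleting the internal vertices of $T$ into or from a spine order leaves the relative order of the elements of $V$ unchanged, and therefore neither creates nor destroys a crossing between two arcs whose four endpoints all lie in $V$. Thus, from a solution of \ptckpbeinstance with spine order $\sigma$ (represented by $T$) we obtain a solution of $\langle V(T),E_1,\dots,E_k,E(T)\rangle$ by taking the order $\sigma'$ provided by the correspondence above (which extends $\sigma$), keeping each $E_i$ on its page, and placing $E(T)$ on page $k+1$; conversely, from a solution of $\langle V(T),E_1,\dots,E_k,E(T)\rangle$ with spine order $\sigma'$ we obtain a solution of \ptckpbeinstance with spine order $\sigma'$ restricted to $V$, which is represented by $T$ because $E(T)$ is drawn without crossings on page $k+1$.

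The only delicate point is the correspondence of the second paragraph, in particular the claim that one-page embeddings of $T$ on the vertex set $V(T)$, restricted to the leaves, are exactly the orders representable by $T$; once this is settled, both directions of the reduction are immediate.
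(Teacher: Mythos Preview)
Your proposal is correct and follows essentially the same approach as the paper: the reduction is identical (set $V^*=V(T)$, $E^*_i=E_i$ for $i\le k$, and $E^*_{k+1}=E(T)$), and both directions of the equivalence are argued via the same ideas. The only cosmetic difference is in the backward direction, where you extend a leaf order $\sigma$ to a spine order on $V(T)$ by a DFS consistent with the chosen planar embedding of $T$, whereas the paper roots $T$ arbitrarily and inserts each internal vertex $w$ immediately before the leftmost vertex of its subtree $T(w)$ in a bottom-up pass; both constructions guarantee that each subtree occupies a contiguous interval, which is exactly what makes the edges of $E(T)$ non-crossing on page $k+1$.
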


\begin{proof}
  Let \ptckpbeinstance be an instance of \ptckpbepshort. We construct
  an instance $\langle V^*, E^*_1,\dots, E^*_k, E^*_{k+1} \rangle$ of
  {\scshape PBE-$(k+1)$} as follows.

  Set $V^* = V(T)$ and $E^*_{k+1}=E(T)$. Then, for each $i = 1, \dots,
  k$, set $E^*_i = E_i$. Refer to Fig.~\ref{fig:kPLUS1book}.

\begin{figure}[htb]
  \centering
  \subfigure[\ptcTWOpbeinstance]{\includegraphics[height=0.195\textwidth]{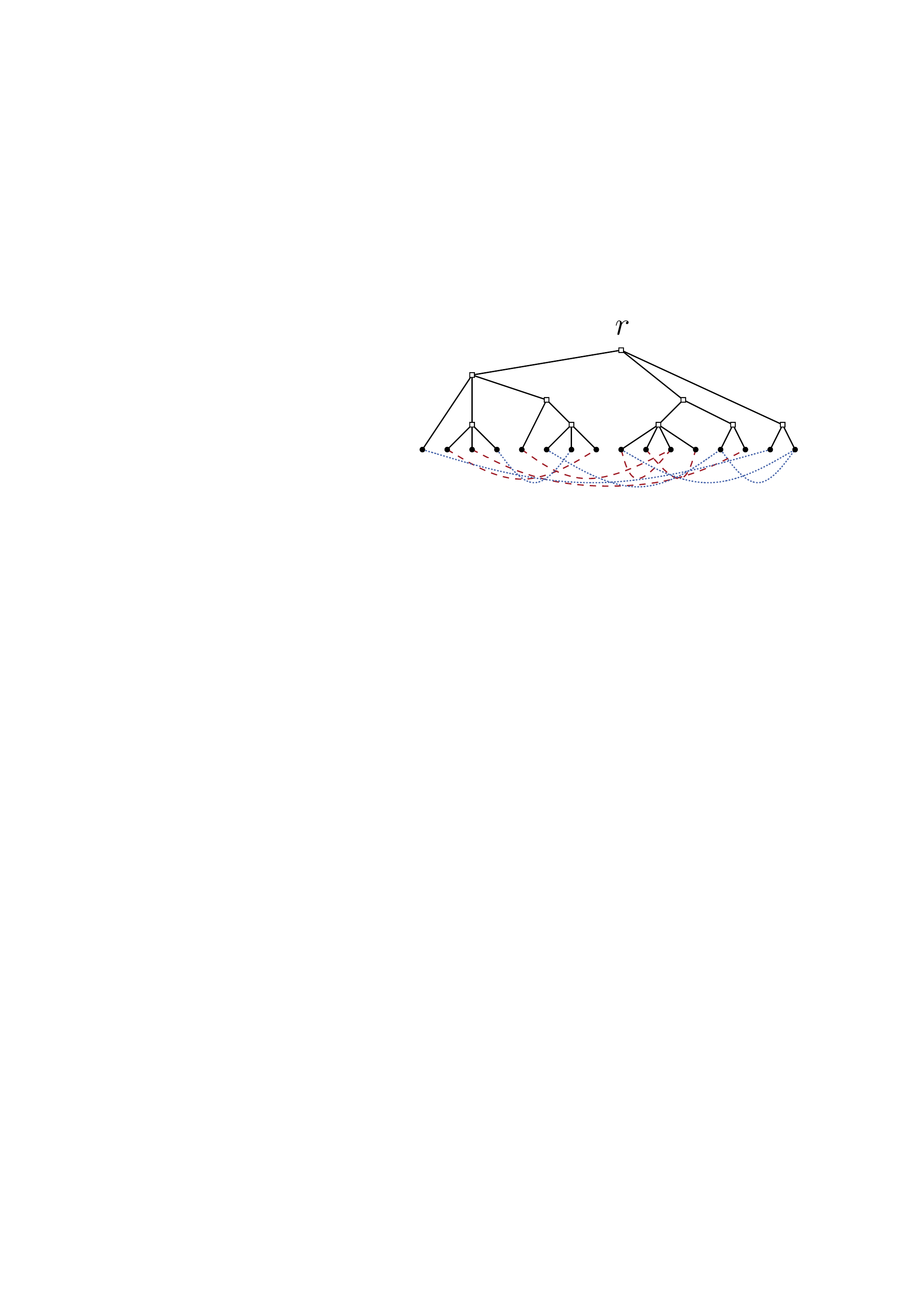}}
  \subfigure[$\langle V^*, \red{E^*_1},\blue{E^*_2},\green{E^*_3}
  \rangle$]{
    \includegraphics[height=0.195\textwidth]{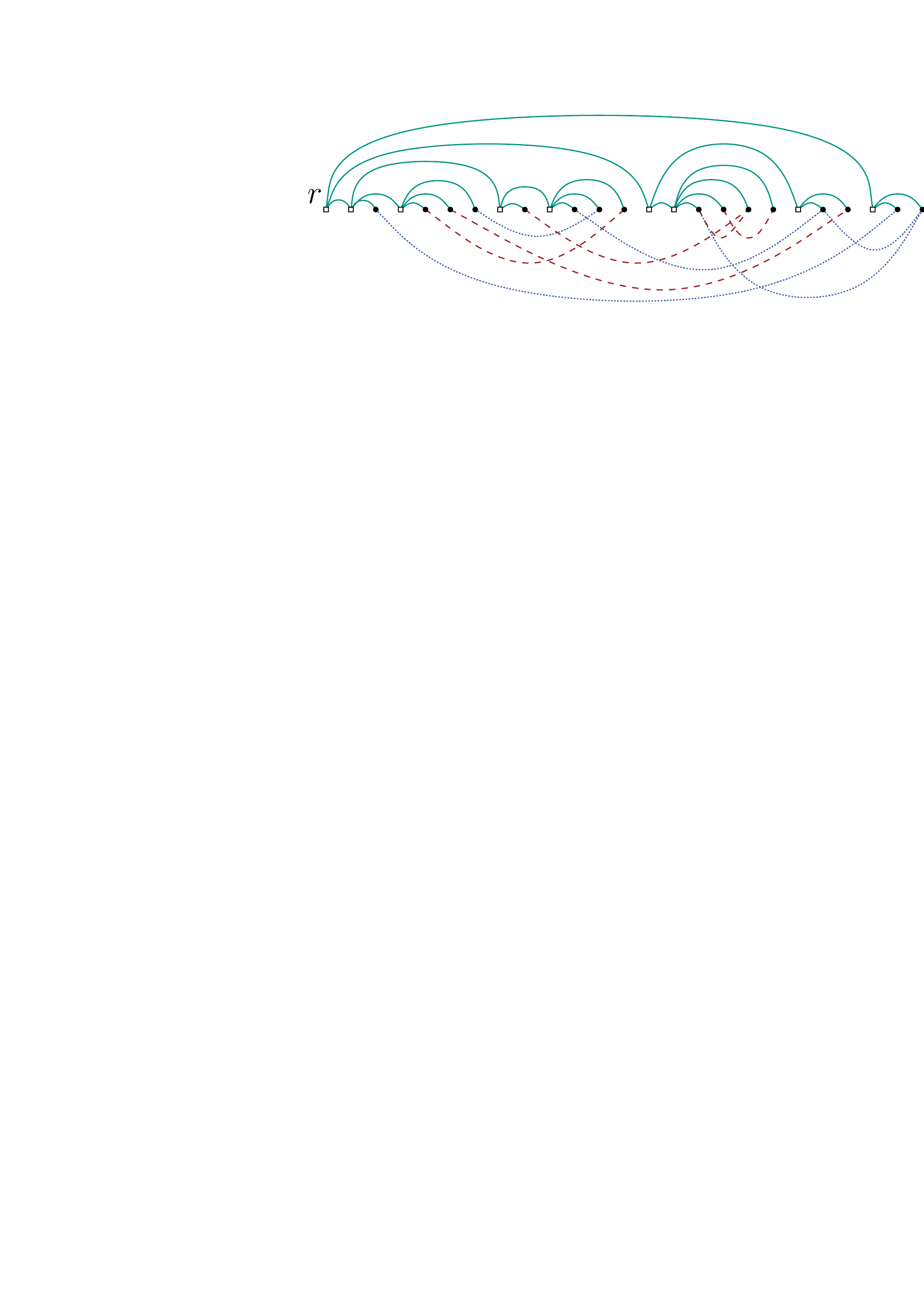}}
  \caption{Illustration of the proof of
    Theorem~\ref{th:pbek1}.}\label{fig:kPLUS1book}
\end{figure}

We prove that $\langle V^*, E^*_1,\dots, E^*_k, E^*_{k+1} \rangle$ is
a positive instance of {\scshape PBE-$(k+1)$} if and only if
\ptckpbeinstance is a positive instance of \ptckpbepshort.

Suppose that $\langle V^*, E^*_1,\dots, E^*_k, E^*_{k+1} \rangle$
admits a partitioned $(k+1)$-page book embedding $\mathcal{O}^*$. Let
$\mathcal{O}$ be the order obtained by restricting $\mathcal{O}^*$ to
the leaves of $T$. We show that $\mathcal{O}$ is a partitioned
$T$-coherent $k$-page book embedding of \ptckpbeinstance.

For each $i=1,\dots,k$, no two edges of $E_i$ alternate in
$\mathcal{O}$, as otherwise the corresponding two edges of $E_i^*$
would alternate in $\mathcal{O}^*$, hence contradicting the hypothesis
that $\mathcal{O}^*$ is a partitioned $(k+1)$-page
book embedding. Also, we claim that order $\mathcal{O}$ is represented
by $T$. Namely, place the vertices of $T$ on a horizontal line in the
same order as they appear in $\mathcal{O}^*$; since $\mathcal{O}^*$
supports a crossing-free drawing of the edges of $E_{k+1}^* = E(T)$ on
a single page and since $\mathcal{O}^*$ restricted to the leaves of
$T$ coincides with $\mathcal{O}$, the claim follows.

Suppose that \ptckpbeinstance admits a partitioned $T$-coherent
$k$-page book embedding $\mathcal{O}$. We show how to construct a
partitioned $(k+1)$-page book embedding $\mathcal{O}^*$ of $\langle
V^*, E^*_1,\dots, E^*_k, E^*_{k+1} \rangle$.

Initialize $\mathcal{O}^* = \mathcal{O}$. Root $T$ at an arbitrary
internal vertex. Then, consider each internal vertex $w$ of $T$
according to a bottom-up traversal.  Consider the subtree $T(w)$ of
$T$ rooted at $w$ and consider the vertex $z$ of $T(w)$ appearing in
$\mathcal{O}^*$ right before all the other vertices of $T(w)$. Place
$w$ right before $z$ in $\mathcal{O}^*$.

We show that $\mathcal{O}^*$ is a partitioned $(k+1)$-page
book embedding of $\langle V^*, E^*_1,\dots, E^*_k, E^*_{k+1}
\rangle$.

For each $i=1,\dots,k$, no two edges of $E_i^*$ alternate in
$\mathcal{O}^*$, as otherwise the corresponding two edges of $E_i$
would alternate in $\mathcal{O}$, hence contradicting the hypothesis
that $\mathcal{O}$ is a partitioned $T$-coherent $k$-page
book embedding. Also, the fact that no two edges of $E_{k+1}^*$
alternate in $\mathcal{O}^*$ descends from the fact that, for each
vertex $w$ of $T$, all the vertices belonging to the subtree $T(w)$ of
$T$ rooted at $w$ appear consecutively in $\mathcal{O}^*$. We prove
this property by induction. In the base case $w$ is the parent of a
set of leaves. In this case, the statement holds since $\mathcal{O}$
is represented by $T$. Inductively assume that, for all children $u_i$
of $w$, the vertices of $T(u_i)$ are consecutive in
$\mathcal{O}^*$. Also, by construction, $w$ has been placed right
before all vertices of $T(w)$. It follows that all vertices of $T(w)$
(including $w$) are consecutive in $\mathcal{O}^*$.  This concludes
the proof of the theorem.
\end{proof}

As \pkpbepshort is a special case of \ptckpbepshort, the problem belongs to \NP. Hence, putting together the results of Theorem~\ref{th:pbek1} and of Theorem~\ref{th:pt3be-2bico}, we obtain the following:

\begin{corollary}\label{co:book-4}
  \pkpbepshort is \NPC for $k\geq 4$.
\end{corollary}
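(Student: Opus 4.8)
The plan is simply to chain together the two results just proved. First I would settle membership in \NP. Since \pkpbepshort is exactly the special case of \ptckpbepshort in which the tree $T$ is a star, and since an instance of \ptckpbepshort admits as a polynomial-size certificate a linear order of the spine vertices (whose validity is checked by verifying in polynomial time that no two edges assigned to the same page alternate and that the order is represented by $T$), it follows that \pkpbepshort is in \NP for every $k$.

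For the hardness part I would instantiate Theorem~\ref{th:pbek1} with $k = 3$, obtaining a polynomial-time reduction from \ptckpbepshort with three pages to {\scshape PBE-$4$}. By Theorem~\ref{th:pt3be-2bico}, \ptckpbepshort is \NPC already for $k = 3$ (the extra structural guarantees on $T$ and on the input graphs proved there are not needed here). Composing the two polynomial-time reductions shows that {\scshape PBE-$4$} is \NP-hard, and hence \NPC.

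To lift the statement from $k = 4$ to every $k \geq 4$, I would use the trivial padding argument: given an instance $\langle V, E_1, \dots, E_4 \rangle$ of {\scshape PBE-$4$}, the instance $\langle V, E_1, \dots, E_4, E_5, \dots, E_k \rangle$ with $E_5 = \dots = E_k = \emptyset$ is equivalent, since an empty page imposes no constraint on the spine ordering; this is a polynomial-time reduction from {\scshape PBE-$4$} to {\scshape PBE-$k$}. Together with membership in \NP, this establishes \NPCN of \pkpbepshort for all $k \geq 4$.

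There is no genuine obstacle here: all the content lives in Theorems~\ref{th:pbek1} and~\ref{th:pt3be-2bico}. The only points worth a moment's attention are that the composition of two polynomial-time many-one reductions is again a polynomial-time reduction (so that the transformation from a \betp-instance all the way to a {\scshape PBE-$4$}-instance stays polynomial), and that both the padding step and the star-restriction used for \NP-membership preserve yes/no answers — each of which is immediate.
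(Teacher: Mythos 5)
Your proposal is correct and follows essentially the same route as the paper: membership in \NP via the observation that \pkpbepshort is a special case of \ptckpbepshort, and hardness by composing the reduction of Theorem~\ref{th:pbek1} (instantiated at $k=3$) with the \NPCN of \ptcTHREEpbepshort from Theorem~\ref{th:pt3be-2bico}. The only difference is that you spell out the padding with empty pages to cover $k\geq 5$, which the paper leaves implicit.
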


We strengthen this result by proving that the \NPHN of \pkpbepshort
holds even for $k=3$. As for Theorem~\ref{th:pt3be-2bico}, we describe
the proof in terms of the corresponding \sunsefep problem, namely in
the case in which \Gint is a star graph and all the edges not
belonging to \Gint connect two of its leaves.

\begin{theorem}\label{th:p3be}
  \pkpbepshort is \NPC for $k\geq 3$.
\end{theorem}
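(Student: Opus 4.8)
The plan is to follow the template of Theorem~\ref{th:pt3be-2bico} and Lemma~\ref{le:aux-ptkbe}, reducing from \betp and arguing in the equivalent \sunsefep language: I would build an instance \sunsefeinstance in which \Gint is a star centred at a vertex $c$ and every edge of \Er, \Eb, and \Eg joins two leaves of this star, so that, by the equivalence recalled above, a \sefe exists if and only if there is a cyclic order of the leaves around $c$ making each of the three chord-sets $E_1$, $E_2$, $E_3$ non-crossing --- i.e.\ a linear order of the leaves making each page crossing-free. Membership in \NP is inherited from \ptckpbepshort, so it suffices to prove \NPHN for $k=3$; for $k>3$ one pads with $k-3$ further pages carrying a single (hence never-crossing) edge each.

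For the core construction I would keep the global skeleton of Lemma~\ref{le:aux-ptkbe}: one \emph{register leaf} $x^j$ per element of $A$ encoding its position in the sought order $\mathcal O$, $m$ copies $S_1,\dots,S_m$ of the register, a \emph{triple gadget} per triple $t_i=\langle\alpha,\beta,\gamma\rangle$ that, being triconnected after the addition of one edge, forces the edges to the three corresponding leaves of $S_i$ to appear consecutively and in the order $\alpha,\beta,\gamma$ up to reversal, and a \emph{chaining subgraph} propagating the register order from $S_i$ to $S_{i+1}$. The several register copies are genuinely needed: unlike in Lemma~\ref{le:aux-ptkbe}, the scaffolding vertices $u_i,v_i,w_i$ are no longer available (\Gint is only a star), so the triple gadgets and the chain must be realised by edges among the leaves of $c$ placed on the three pages, and putting all triple gadgets over a single register would in general destroy outerplanarity of that page. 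I would route the ``$S_i$ is the reverse of $T_i$ is the reverse of $S_{i+1}$'' argument of Lemma~\ref{le:aux-ptkbe} through matchings placed on \Er and \Eb, and keep the triple gadgets on \Eg.

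The step I expect to be the main obstacle is recovering the \emph{linear} constraint that \Gint used to supply. In Lemma~\ref{le:aux-ptkbe} and Theorem~\ref{th:pt3be-2bico} the cycle of \Gint (resp.\ the spine of the caterpillar) gives each pole two ``boundary'' edges, so that its register leaves are read in a true linear order and a triple gadget really expresses linear betweenness; with \Gint a star only the \emph{cyclic} order around $c$ is constrained, and cyclic betweenness is vacuous. I would fix this with an \emph{anchor gadget}: a constant-size biconnected subgraph on a few extra leaves whose edges are distributed over the three pages so that, in every valid embedding, two designated leaves become cyclically consecutive with all register leaves lying on one of the two arcs they bound; this turns the cyclic order of the register into a genuine linear order on which the triple gadgets impose proper betweenness, and it simultaneously forces the register blocks to occur in a fixed cyclic sequence, playing the role of the caterpillar spine of Theorem~\ref{th:pt3be-2bico}. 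Making this gadget rigid, keeping each page outerplanar, and ensuring it does not interfere with the chain or the triple gadgets is the delicate point; I would build it along the same ``triconnected after adding one edge'' lines as the triple gadgets and route the register leaves through it on a dedicated page, so that crossing an anchor edge is impossible.

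It then remains to verify equivalence, exactly as in Lemma~\ref{le:aux-ptkbe}. From a \sefe \sunsefesolution one reads $\mathcal O$ off the linear order of the leaves of $S_1$ fixed by the anchor, and the chain together with the triple gadgets shows every triple of $C$ is satisfied; conversely, from a valid $\mathcal O$ one places the leaves on the spine in the order prescribed by the anchor and $\mathcal O$, replicates this order in every $S_i$, and routes the matching edges of \Er and \Eb and the triple edges of \Eg on their pages without crossings, using that $\mathcal O$ realises each betweenness constraint. Since \Gint is a star and all remaining edges join leaves, the produced instance is a \pkpbepshort instance and the whole construction is polynomial, yielding \NPHN for $k=3$ and hence for all $k\geq 3$.
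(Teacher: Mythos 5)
Your skeleton coincides with the paper's: you also reduce \betp to \pkpbepshort with $k=3$ via the star-centred \sunsefep formulation, keep $m$ copies of the register, propagate the order between consecutive copies through perfect matchings placed on \Er and \Eb, put the ``triconnected after adding one edge'' triple gadgets on \Eg, obtain \NP-membership by inclusion in \ptckpbepshort, and pad for $k>3$ (the paper instead derives $k\geq 4$ from Theorem~\ref{th:pbek1}, but single-edge padding is equally valid). You also correctly isolate the crux: with \Gint a star only the cyclic order around the centre is constrained, so something must both break the cyclic symmetry and keep the register blocks separated.

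The gap is that this crux is exactly where your proposal stops. You posit a \emph{constant-size} anchor gadget and then ask it to ``force the register blocks to occur in a fixed cyclic sequence''; a constant-size gadget can break the cyclic symmetry, but it cannot sequence $\Theta(m)$ blocks. The paper's solution is a global, $\Theta(m)$-size structure: \Gg contains a wheel centred at the star centre $\phi$ with rim $\omega,a_0,b_0,a_1,b_1,\dots,a_m,b_m$, whose rigidity fixes the cyclic sequence of sector boundaries; the leaves $y^j_i$ are then pinned into the sector between $a_i$ and $b_i$ by further green edges through $y^*_i$, and the leaves $x^j_i$ are pinned into the sector between $b_{i-1}$ and $a_i$ by an interplay of the red matching to the $y^j_{i-1}$, the blue matching to the $y^j_i$, and the extra edges $(\omega,a_i)$ in \Er and $(\omega,b_{i-1})$ in \Eb. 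Without this block-consecutiveness the rest of your argument fails: a non-crossing perfect matching between two blocks of leaves that are allowed to interleave does not force one block's order to be the reverse of the other's, so the register order would not propagate from $S_i$ to $S_{i+1}$ and the triple gadgets would constrain nothing. The ``delicate point'' you defer is therefore not a finishing touch but the entire content of the theorem beyond Lemma~\ref{le:aux-ptkbe}, and as described (constant size, unconstructed) it does not go through.
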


\begin{proof}
  We prove the statement for $k=3$, as for $k\geq 4$ it descends from Corollary~\ref{co:book-4}.
  The \NPHN is shown by means of a polynomial-time reduction from
  problem \betp. Given an instance $\langle A,C \rangle$ of \betp, we
  construct an instance $\langle V,
  \red{E_1},\blue{E_2},\green{E_3}\rangle$ of \pTHREEpbepshort that
  admits a partitioned $3$-page book embedding if and only if $\langle
  A,C \rangle$ is a positive instance of \betp.

  We describe instance $\langle V,
  \red{E_1},\blue{E_2},\green{E_3}\rangle$ in terms of the
  corresponding instance \sunsefeinstancep{} of \sunsefep in which
  \Gint is a star. Refer to Fig.~\ref{fig:3book}.

\begin{figure}[htb]
  \centering
  \includegraphics[width=\textwidth]{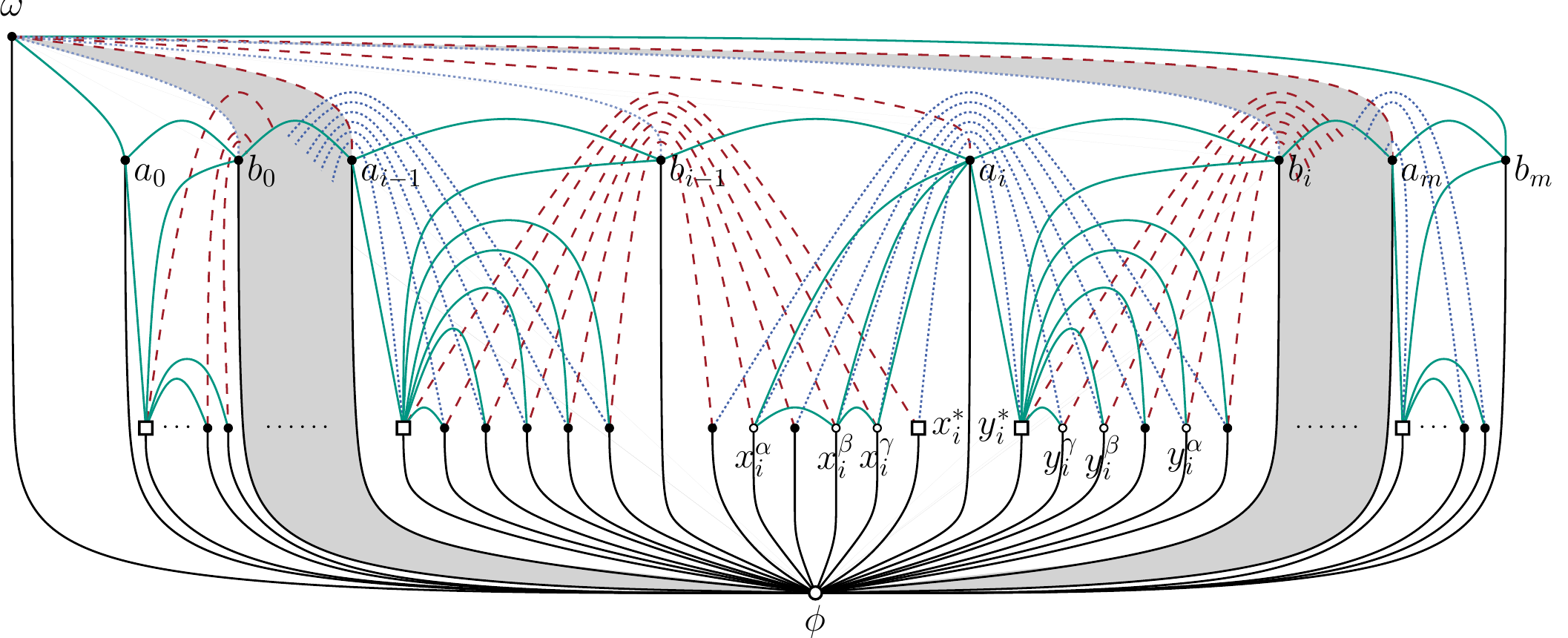}
  \caption{Illustration of the composition of \Gint, \Gr, \Gb, and
    \Gg in Theorem~\ref{th:p3be}, focused on the $i$-th triple $t_i=\langle \alpha,\beta,\gamma
    \rangle$ of $C$.}\label{fig:3book}
\end{figure}

Graph \Gint is initialized to a star graph with center $\phi$, a leaf
$\omega$ and, for $i=0,\dots,m$, leaves $a_i$ and $b_i$.  Also, for
$i=1,\dots,m$, \Gint contains $n$ leaves $x^1_i,\dots,x^n_i$, $n$
leaves $y^1_i,\dots,y^n_i$, plus two additional leaves $x^*_i$ and
$y^*_i$. Finally, \Gint contains $n$ leaves $y^1_0,\dots,y^n_0$, plus
an additional leaf $y^*_0$.

Graph \Gr contains all the edges of \Gint plus a set of edges defined
as follows. For $i=1, \dots, m$, graph \Gr contains an edge $(\omega,
a_i)$. Also, for $i=1, \dots, m$, graph \Gr contains edges
$(x^j_i,y^j_{i-1})$, with $j=1, \dots n$, and edge
$(x^*_i,y^*_{i-1})$.

Graph \Gb contains all the edges of \Gint plus a set of edges defined
as follows. For $i=0, \dots, m-1$, graph \Gb contains an edge
$(\omega, b_i)$. Also, for $i=1, \dots, m$, graph \Gb contains edges
$(x^j_i,y^j_{i})$, with $j=1, \dots n$, and edge $(x^*_i,y^*_{i})$.

Graph \Gg contains all the edges of \Gint plus a set of edges defined
as follows. Graph \Gg contains edges $(\omega, a_o)$ and
$(\omega,b_m)$. Also, for each $i=0, \dots, m$, graph \Gg contains
edges $(a_i,b_i)$, $(a_i,y^*_i)$, $(b_i,y^*_i)$, and edges
$(y^*_i,x^j_i)$, with $j=1,\dots,n$.  Finally, for $i=1,dots,m$,
consider the $i$-th triple $t_i=\langle \alpha,\beta,\gamma \rangle$
of $C$, and the corresponding vertices $x_i^\alpha$, $x_i^\beta$, and
$x_i^\gamma$; graph \Gg contains edges $(a_i, x_i^\alpha)$, $(a_i,
x_i^\beta)$, $(a_i, x_i^\gamma)$, $(x_i^\alpha,x_i^\beta)$, and
$(x_i^\beta,x_i^\gamma)$.

We prove that the constructed instance \sunsefeinstance of \sunsefep
is equivalent to instance $\langle A,C \rangle$ of \betp.

Suppose that \sunsefeinstance is a positive instance, that is, \Gr,
\Gb, and \Gg admit a \sefe \sunsefesolution.
Observe that, vertices $\phi$, $\omega$, and vertices $a_i$ and $b_i$,
with $i=1,\dots,m$, induce a wheel with central vertex $\phi$ in \Gg.
Hence, in any planar embedding of \Gg, edges $(\omega,\phi)$,
$(a_o,\phi)$, $(b_o,\phi)$,$\dots$,$(a_m,\phi)$, and $(b_m,\phi)$
appear in this order (or in the reverse order) around $\phi$.  Also,
since $y^*_i$ is adjacent in \Gg to both $a_i$ and $b_i$, for
$i=0,\dots,m$, edge $(y^*_i,\phi)$ appears between edges $(a_i,\phi)$
and $(b_i,\phi)$ around $\phi$ in any planar embedding of \Gg. Hence,
since all vertices $y^j_i$, with $j=1,\dots,n$, are adjacent in \Gg to
$y^*_i$, also edges $(y^j_i,\phi)$ appear between $(a_i,\phi)$ and
$(b_i,\phi)$ around $\phi$ in any planar embedding of \Gg.
Furthermore, for $i=1,\dots,m$, edges $(x^j_i,\phi)$, with
$j=1\dots,n$, and edge $(x^*_i,\phi)$ appear between $(b_{i-1},\phi)$
and $(a_i,\phi)$ around $\phi$ in \sunsefesolution. This is due to the
following two facts: (1) all vertices $x^j_i$ and vertex $x^*_i$ are
adjacent in \Gr to a vertex $y_{i-1}$ such that edge $(y_{i-1},\phi)$
appears between edges $(a_{i-1},\phi)$ and $(b_{i_1},\phi)$ around
$\phi$, and in \Gb to a vertex $y_i$ such that edge $(y_i,\phi)$
appears between edges $(a_{i},\phi)$ and $(b_{i},\phi)$ around $\phi$;
(2) there exists edges $(\omega,b_{i-1})$ in \Gb and $(\omega,a_i)$ in
\Gr. Refer to Fig.~\ref{fig:3book} for a possible ordering of the
edges around $\phi$ in a \sefe.

Observe that, due to the properties of the ordering of the edges of
\Gint around $\phi$ discussed above, for $i=1,\dots,m$, edge
$(x^*_i,\phi)$ and edges $(x^j_i,\phi)$, with $j=1,\dots,n$, behave
similarly to the edges of the star graph $S_i$ used in
Lemma~\ref{le:aux-ptkbe}, and edge $(y^*_i,\phi)$ and edges
$(y^j_i,\phi)$, with $j=1,\dots,n$, behave similarly to the edges of
the star graph $T_i$ used in Lemma~\ref{le:aux-ptkbe}. Namely, in any
\sefe of \Gr, \Gb, and \Gg, for each $i=1,\dots,m-1$, the ordering of
the edges $(x^j_i,\phi)$, with $j=1,\dots,n$, and edge $(x^*_i,\phi)$
around $\phi$ is the same as the ordering of the edges
$(x^j_{i+1},\phi)$, with $j=1,\dots,n$, and edge $(x^*_{i+1},\phi)$
around $\phi$, where the vertices are identified based on index $j$.

We construct a linear ordering $\mathcal{O}$ of the elements of $A$
from the ordering of the leaves of $x_1^j$, with $j=1,\dots,n$, in
\sunsefesolution as described in Lemma~\ref{le:aux-ptkbe}.

We prove that $\mathcal{O}$ is a solution of $\langle A,C\rangle$.
For each $i=1,\dots,m$, the subgraph of \Gg induced by vertices
$\phi_i$, $x^\alpha_i$, $x^\beta_i$, $x^\gamma_i$, and $a_i$ is a
triconnected subgraph attached to the rest of the graph through the
split pair $\{\phi,a_i\}$. Hence, in any planar embedding of \Gg (and
hence also in \GammaG) edges $(\phi,x^\alpha_i)$, $(\phi,x^\beta_i)$,
$(\phi,x^\gamma_i)$ appear either in this order or in the reverse
order around $\phi$. Since the ordering of the edges
$(x^j_{i+1},\phi)$, with $j=1,\dots,n$, around $\phi$ is the same for
every $i$, $\mathcal{O}$ is a solution of $\langle A,C\rangle$.

Suppose that $\langle A,C \rangle$ is a positive instance, that is,
there exists an ordering $\mathcal{O}$ of the elements of $A$ in which
for each triple $t_i$ of $C$ the three elements of $t_i$ appear in one
of their two admissible orderings. In order to construct embeddings
\GammaR,\GammaB, and \GammaG for \Gr,\Gb, and \Gg, respectively, we
describe the order of the edges of \Gint around $\phi$.  Initialize
the rotation scheme of $\phi$ to $(\omega,\phi)$,$(a_o,\phi)$,
$(y^*_0,\phi)$, $(b_o,\phi)$, and, for $i=1,\dots,m$, $(x^*_i,\phi)$,
$(a_i,\phi)$, $(y^*_i,\phi)$, and $(b_i,\phi)$. Then, for
$i=1,\dots,m$, initialize $first_i=a_i$ and $last_i=x^*_i$. For each
element $j$ of $\mathcal{O}$, place $(x^j_i,\phi)$ between
$(first_i,\phi)$ and $(last_i,\phi)$ in the rotation scheme of $\phi$,
and set $first_i=x_i^j$. Also, for $i=0,\dots,m$, initialize
$first_i=y^*_i$ and $last_i=b_i$. For each element $j$ of
$\mathcal{O}$, place $(y^j_i,\phi)$ between $(first_i,\phi)$ and
$(last_i,\phi)$ in the rotation scheme of $\phi$, and set
$last_i=y_i^j$. Refer to Fig.~\ref{fig:3book} for an illustration of
the construction of the rotation scheme of $\phi$.

The rest of the construction of \GammaR, \GammaB, and \GammaG and the
proof that such embeddings determine a \sefe of \sunsefeinstance works
as in the proof of Lemma~\ref{le:aux-ptkbe}. In particular, the fact
that the rotation scheme of $\phi$ determines a planar embedding of
the triconnected subgraphs of \Gg induced by vertices $\phi$, $a_i$,
$x_i^\alpha$, $x_i^\beta$, $x_i^\gamma$, for $i=1,\dots,m$, derives
from the fact that $\mathcal{O}$ is a solution of instance $\langle
A,C \rangle$ of \betp.  This concludes the proof of the theorem.
\end{proof}

Although \ptckpbepshort has been shown \NPC for $k\geq 3$ even when two of the input graphs are
biconnected in Theorem~\ref{th:pt3be-2bico}, we show that stronger
conditions on the connectivity of the graphs allow for a
polynomial-time solution of the problem. As observed before, the
linear-time algorithm by Hoske~\cite{hoske-befpa-12} for \pkpbepshort
when each graph is $T$-biconnected can be easily extended to solve
\ptckpbepshort under the same conditions. In the following theorem we
prove that for $k\geq 2$ this is true even if only $k-1$ graphs are $T$-biconnected.

At this aim, we describe an algorithm that we call {\sc
  ALGO-$(k-1)$-$T$-BICO} to decide whether an instance $\langle T, E_1,
\dots, E_k \rangle$ of \ptckpbepshort is positive in the case in which
$k-1$ graphs $G_i$ are $T$-biconnected. In the description of the
algorithm we assume, without loss of generality, that graphs
$G_1,\dots,G_{k-1}$ are $T$-biconnected.

\begin{itemize}
\item[{\bf STEP 1.}] For $i=1,\dots,k-1$, we construct an auxiliary
  graph $H_i$ as follows. Initialize $H_i$ to $G_i$; remove from $H_i$
  the internal vertices of $T$ and their incident edges; and add to
  $H_i$ a vertex $w_i$ and connect it to all vertices of $H_i$ (that
  is, to all leaves of $T$).
\item[{\bf STEP 2.}] For $i=1,\dots,k-1$, we construct a PQ-tree
  $\mathcal{T}_i$ representing all possible orders of the edges around
  $w_i$ in a planar embedding of $H_i$ by applying the planarity
  testing algorithm of Booth and
  Lueker~\cite{bt-tcopiggppqa-76}. Since, by construction, all
  vertices of $H_i$ different from $w_i$ are adjacent to $w_i$, the
  leaves of $\mathcal{T}_i$ are in one-to-one correspondence with the
  leaves of $T$. Hence, all PQ-trees $\mathcal{T}_i$ have the same
  leaves.
\item[{\bf STEP 3.}] We intersect all PQ-trees
  $\mathcal{T}_1,\dots,\mathcal{T}_{k-1}$ to obtain a PQ-tree
  $\mathcal{T}^*$ representing all the possible partitioned
  book embeddings of graphs $H_i\setminus w_i$, for
  $i=1,\dots,k-1$. We remark that the procedure described so far is
  analogous to the one described in~\cite{hoske-befpa-12} to compute a
  \pkpbepshort of $k$ $T$-biconnected graphs.
\item[{\bf STEP 4.}] We intersect $\mathcal{T}^*$ with $T$ to obtain a
  PQ-tree $\mathcal{T}$ representing all the possible partitioned
  $T$-coherent book embeddings of instance $\langle T,
  E_1,\dots,E_{k-1}\rangle$.
\item[{\bf STEP 5.}] We construct a {\em representative graph}
  $G_{\mathcal{T}}$ from $\mathcal{T}$, as described
  in~\cite{FengCE95}, composed of {\em wheel} graphs (that is, graphs
  consisting of a {\em central vertex} and of a cycle, called the {\em
    rim} of the wheel, such that the central vertex is connected to
  every vertex of the rim), edges connecting vertices of the rims of
  different wheels not creating simple cycles containing vertices belonging to more than one wheel, and vertices of degree $1$, which are in
  one-to-one correspondence with the leaves of $\mathcal{T}$, each
  connected to a vertex of the rim of some wheel.
\item[{\bf STEP 6.}] We extend graph $G_{\mathcal{T}}$ by adding an
  edge between two degree-$1$ vertices if and only if the two leaves
  of $T$ corresponding to such vertices are connected by an edge of
  $E_k$; hence obtaining graph $H$.
\item[{\bf STEP 7.}] We return \texttt{YES} if  $H$ is planar, otherwise we return
  \texttt{NO}.
\end{itemize}

In the following theorem we prove the correctness and the time complexity of {\sc ALGO-$(k-1)$-$T$-BICO},
\begin{theorem}\label{th:algo}
  Let $\langle T, E_1, \dots, E_k \rangle$ be an instance of
  \ptckpbepshort with $k\geq 2$ in which $k-1$ graphs are
  $T$-biconnected. There exists an $O(k\cdot n)$-time algorithm to
  decide whether $\langle T, E_1, \dots, E_k \rangle$ admits a
  \ptckpbep, where $n$ is the number of vertices of $T$.
\end{theorem}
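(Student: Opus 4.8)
The plan is to prove separately the correctness and the running time of algorithm {\sc ALGO-$(k-1)$-$T$-BICO}. For correctness I would establish a chain of equivalences linking positive instances of \ptckpbepshort, orders represented by the PQ-tree $\mathcal{T}$ built in STEP~4, and planar embeddings of the graph $H$ produced in STEP~6. First, since $G_i$ is $T$-biconnected (so $E_i$ induces a connected graph spanning all leaves of $T$), the auxiliary graph $H_i$ is $2$-connected, hence by the algorithm of Booth and Lueker~\cite{bt-tcopiggppqa-76} the PQ-tree $\mathcal{T}_i$ is well defined and its represented orders are \emph{exactly} the cyclic orders of the leaves of $T$ that realize $w_i$'s rotation in some planar embedding of $H_i$; I would then observe that such an order $\sigma$ is realizable if and only if no two edges of $E_i$ alternate along $\sigma$, i.e., $\sigma$ is a one-page book embedding of $E_i$ (and if $H_i$ is non-planar, $E_i$ admits no book page and the algorithm correctly rejects). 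By correctness of PQ-tree intersection, $\mathcal{T}^*$ (STEP~3) represents precisely the orders that simultaneously support a one-page book embedding of each of $E_1,\dots,E_{k-1}$, and intersecting with the PQ-tree whose underlying tree is $T$ with every internal node a P-node yields $\mathcal{T}$ (STEP~4), representing exactly the orders that are, in addition, $T$-coherent; if any of these intersections returns the null tree, no such order exists and the algorithm reports \texttt{NO}.

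Next I would invoke the properties of the representative graph $G_{\mathcal{T}}$ of~\cite{FengCE95}: it is a planar graph of size $O(n)$ whose degree-$1$ vertices are in bijection with the leaves of $T$, and whose planar embeddings with all degree-$1$ vertices on a common face realize, as the cyclic order of these vertices along that face, exactly the orders represented by $\mathcal{T}$. Using this, I would argue that $H = G_{\mathcal{T}}\cup E_k$ (STEP~6) is planar if and only if there is an order $\sigma\in\mathcal{T}$ along which no two edges of $E_k$ alternate: for the ``if'' direction, embed $G_{\mathcal{T}}$ with the degree-$1$ vertices on the outer face in the order $\sigma$ and route the $E_k$-edges in the outer face without crossings; for the ``only if'' direction, the restriction of a planar embedding of $H$ to $G_{\mathcal{T}}$ exhibits an order $\sigma\in\mathcal{T}$, and a degree-$1$ vertex being incident to a unique face forces every $E_k$-edge to be drawn in that face, whence the $E_k$-edges are non-alternating along $\sigma$. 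Composing with the previous equivalences, $H$ is planar iff some $T$-coherent order of the leaves of $T$ is a one-page book embedding of each of $E_1,\dots,E_k$, i.e., iff $\langle T, E_1,\dots,E_k\rangle$ is a positive instance; so STEP~7 answers correctly. I expect this part — the precise use of the representative-graph construction — to be the main obstacle: one must carefully justify that any admissible order of $\mathcal{T}$ can be realized with all degree-$1$ vertices on one face and that no $E_k$-edge can be rerouted elsewhere in $G_{\mathcal{T}}$ to evade an alternation, so that planarity of $H$ captures neither more nor fewer instances than intended.

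For the running time, I would first note that a one-page book embedding of $E_i$ is an outerplanar embedding of the graph on the $O(n)$ leaves of $T$ with edge set $E_i$, so $|E_i| \le 2n - 3$; any $E_i$ (and $E_k$) exceeding this bound certifies a negative instance and is detected in $O(n)$ time, so we may assume every edge-set has size $O(n)$ and the processed input has size $O(k\cdot n)$. Then: STEP~1 builds each $H_i$ in $O(n)$ time, $O(kn)$ overall; STEP~2 runs a linear-time planarity test and PQ-tree construction on each $H_i$ (which has $O(n)$ vertices and edges), $O(kn)$ overall; STEP~3 performs $k-2$ PQ-tree intersections on trees of size $O(n)$, each in $O(n)$ time, $O(kn)$ overall; STEP~4 is one PQ-tree intersection with $T$, in $O(n)$ time; STEP~5 constructs $G_{\mathcal{T}}$ of size $O(n)$ from a PQ-tree with $O(n)$ nodes, in $O(n)$ time; STEP~6 adds $O(n)$ edges in $O(n)$ time; STEP~7 is a single $O(n)$-time planarity test. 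Summing gives an $O(k\cdot n)$ bound, which together with the correctness argument above proves the theorem (and, for constant $k$, a linear-time algorithm).
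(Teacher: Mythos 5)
Your proposal follows the same overall route as the paper: the same algorithm, the same PQ-tree chain ($\mathcal{T}_i \to \mathcal{T}^* \to \mathcal{T}$), the same use of the representative graph $G_{\mathcal{T}}$ of~\cite{FengCE95}, the same characterization ``$H$ planar iff the instance is positive'', and essentially the same running-time accounting (your preliminary bound $|E_i|=O(n)$ via outerplanarity is a nice explicit touch that the paper leaves implicit). The forward direction (from an admissible order $\sigma$ to a planar embedding of $H$ with the degree-$1$ vertices on the outer face) and the time analysis are fine.

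The gap is exactly where you suspected it would be, and you have not closed it. In the completeness direction you write that ``the restriction of a planar embedding of $H$ to $G_{\mathcal{T}}$ exhibits an order $\sigma\in\mathcal{T}$'' and that each degree-$1$ vertex is incident to a unique face, so the $E_k$-edges are non-alternating along $\sigma$. But the correspondence from~\cite{FengCE95} is between orders represented by $\mathcal{T}$ and the cyclic order of the degree-$1$ vertices along a \emph{single common face} of an embedding of $G_{\mathcal{T}}$; an arbitrary planar embedding of $H$, restricted to $G_{\mathcal{T}}$, need not place all degree-$1$ vertices on one face at all (a degree-$1$ vertex, or a whole wheel, can be enclosed inside another wheel, or inside a triangle $\langle a,b,\omega\rangle$ formed by a rim edge $(a,b)$ and the wheel's center $\omega$). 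In that case there simply is no order $\sigma$ to read off, and your uniqueness-of-face observation does not rescue the argument: different $E_k$-edges could live in different faces without crossing, yet correspond to no $T$-coherent order. The paper's proof spends its main technical effort here: it shows that any planar embedding $\Gamma$ of $H$ can be normalized, first redrawing every wheel canonically (center inside its rim) and then rerouting each rim edge $(a,b)$ close to the path $(a,\omega,b)$ so that no vertex of $H$ remains inside $\langle a,b,\omega\rangle$, using the structural facts that each center is adjacent only to its rim and that $G_{\mathcal{T}}$ contains no simple cycle through more than one wheel. Only after this normalization do all degree-$1$ vertices lie on the outer face of the restricted embedding, so that $\sigma$ exists and the non-alternation of $E_k$ follows. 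Without this (or an equivalent) argument, the ``only if'' half of your planarity characterization is unproven, so the proof is incomplete as it stands.
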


\begin{proof}
  The algorithm that decides \ptckpbepshort for $\langle T, E_1,
  \dots, E_k \rangle$ is {\sc ALGO-$(k-1)$-$T$-BICO}.

  We prove the correctness. First, observe that, as proved
  in~\cite{hoske-befpa-12}, the PQ-tree $\mathcal{T}^*$ constructed at
  STEP 3 encodes all and only the partitioned $(k-1)$-page
  book embeddings of instance $\langle
  \mathcal{L}(T),E_1,\dots,E_{k-1}\rangle$. Thus, intersecting
  $\mathcal{T}^*$ with tree $T$ yields a PQ-tree $\mathcal{T}$ (see
  STEP 4) encoding all and only the partitioned $T$-coherent
  $(k-1)$-page book embeddings\footnote{This is the extension of the
    algorithm by Hoske to instances of \ptckpbepshort mentioned
    before.} of instance $\langle T, E_1, \dots, E_{k-1} \rangle$.

  Also, as proved in~\cite{FengCE95}, there exists a one-to-one
  correspondence between the possible orderings of the leaves of
  $\mathcal{T}$ and the possible orderings obtained by restricting the
  order of the vertices in an Eulerian tour of the outer face in a
  planar embedding of $G_{\mathcal{T}}$ to the degree-$1$ vertices.

  Given a planar embedding $\Gamma$ of $H$ (see Fig.~\ref{fig:ALGO1}), we construct a partitioned
  $T$-coherent $k$-page book embedding $\mathcal{O}$ of $\langle T,
  E_1, \dots, E_k \rangle$.  We claim that $\Gamma$ can be modified in
  order to obtain a planar embedding $\Gamma'$ of $H$  (see Fig.~\ref{fig:ALGO3}) such that all
  the degree-$1$ vertices of $G_{\mathcal{T}}$ lie on the outer face
  of the embedding $\Gamma_{\mathcal{T}}$ of $G_{\mathcal{T}}$
  obtained by restricting $\Gamma'$ to the vertices and edges of
  $G_{\mathcal{T}}$.

  The claim implies that the order $\mathcal{O}$ of the degree-$1$
  vertices in a Eulerian tour of the outer face of
  $\Gamma_{\mathcal{T}}$ is a partitioned $T$-coherent $k$-page book
  embedding of $\langle T, E_1, \dots, E_k \rangle$ since (i)
  $\mathcal{O}$ is represented by $\mathcal{T}$ and (ii) no two edges
  of $E_k$ alternate in $\mathcal{O}$, given that $\Gamma'$ is planar.

  We prove the claim. First, we show that starting from $\Gamma$ we
  can obtain a planar drawing $\Gamma^*$ of $H$ such that every wheel
  of $G_{\mathcal{T}}$ is drawn {\em canonically}  (see Fig.~\ref{fig:ALGO2}), namely, with its
  central vertex lying in the interior of its rim. Consider any wheel
  $W$ of $G_{\mathcal{T}}$ with central vertex $\omega$ that is not
  drawn canonically in $\Gamma$. This implies that there exist two
  vertices $a$ and $b$ of the rim of $W$ such that all the vertices of
  $W$ different from $a$, $b$, and $\omega$ lie in the interior of
  cycle $\langle a,b,\omega \rangle$. Since, by construction of
  $G_{\mathcal{T}}$ and of $H$, vertex $\omega$ is not adjacent to any
  vertex not belonging to $W$, it is possible to reroute edge $(a,b)$
  as a curve arbitrarily close to path $(a,\omega,b)$ so that cycle
  $\langle a,b,\omega \rangle$ does not enclose any vertex of
  $H$. Observe that, such an operation might determine a change in the
  rotation scheme of $a$ or $b$. Applying such a procedure to all
  non-canonically drawn wheels, eventually results in a planar drawing
  $\Gamma^*$ of $H$ such that all wheels of $G_{\mathcal{T}}$ are
  drawn canonically.  Second, we show how to obtain $\Gamma'$ starting
  from $\Gamma^*$  (see Fig.~\ref{fig:ALGO3}). Consider any wheel $W$ of $G_{\mathcal{T}}$, with
  central vertex $\omega$. For each two adjacent vertices $a$ and $b$
  of the rim of $W$, if there exist vertices of $H$ in the interior of
  cycle $\langle a,b,\omega \rangle$, then we reroute edge $(a,b)$ as
  a curve arbitrarily close to path $(a,\omega,b)$ so that cycle
  $\langle a,b,\omega \rangle$ does not enclose any vertex of
  $H$. Since $\omega$ is not connected to vertices of $H$ other than
  those belonging to the rim of $W$, this operation does not introduce
  any crossing. After this operation has been performed for every two
  adjacent edges of the rim of $W$, there exists no vertex of $H$ not
  belonging to $W$ in the interior of the rim of $W$, since $W$ is
  drawn canonically. This concludes the proof of the claim, since
  $G_{\mathcal{T}}$ does not contain any simple cycle containing
  vertices belonging to more than one wheel and no wheel of
  $G_{\mathcal{T}}$ contains in its interior vertices of $H$ not
  belonging to it.

    \begin{figure}[htb]
      \centering 
      \subfigure[]{
        \includegraphics[width=.3163\textwidth]{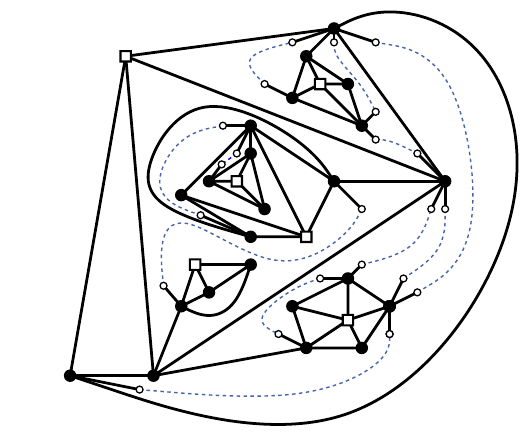}\label{fig:ALGO1}
      }
      \subfigure[]{
        \includegraphics[width=.3163\textwidth]{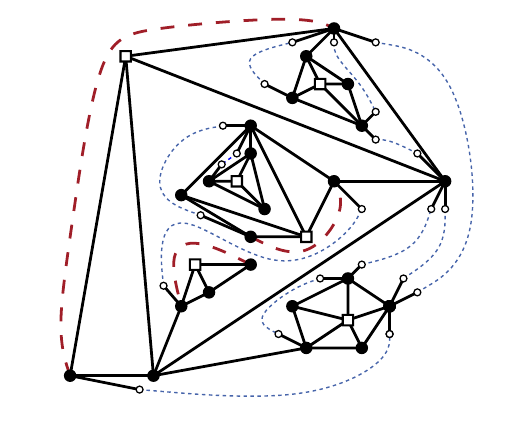}\label{fig:ALGO2}
      }
      \subfigure[]{
        \includegraphics[width=.3163\textwidth]{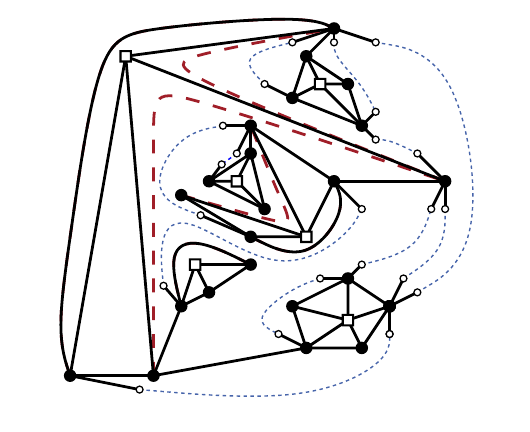}\label{fig:ALGO3}
      }
      \caption{Illustration for the proof of Theorem~\ref{th:algo}. Edges of $G_\mathcal{T}$ are black solid curves. Edges of $E_k$ are blue dotted curves. Edges of $H$ which have been redrawn with respect to the previous drawing are red dashed curves. Central vertices of the wheels are white squares. Degree-$1$ vertices of $G_{\mathcal{T}}$ are white circles. (a) Planar drawing $\Gamma$ of $H$. (b) Planar drawing $\Gamma^*$ of $H$ in which every wheel of  $G_\mathcal{T}$ is drawn canonically. (c) Planar drawing $\Gamma'$ of $H$ in which all the degree-$1$ vertices of  $G_\mathcal{T}$ lie in the outer face of $\Gamma'$ restricted to $G_\mathcal{T}$.}\label{fig:ALGOPROOF}
    \end{figure}

  Given a partitioned $T$-coherent $k$-page book embedding
  $\mathcal{O}$ of $\langle T, E_1, \dots, E_k\rangle$, we construct a
  planar embedding $\Gamma$ of $H$. To obtain $\Gamma$, we first
  augment $G_{\mathcal{T}}$ to an auxiliary graph $U$ by adding a
  dummy edge between two degree-$1$ vertices of $G_{\mathcal{T}}$ if
  and only if the corresponding leaves of $T$ are either adjacent in
  $\mathcal{O}$ or appear as the first and last element in
  $\mathcal{O}$. Since $\mathcal{O}$ is a partitioned $T$-coherent
  $k$-page book embedding $\mathcal{O}$ of $\langle T, E_1, \dots,
  E_k\rangle$, it is possible to find a planar embedding of
  $G_{\mathcal{T}}$ in which the degree-$1$ vertices appear along the
  Eulerian tour of the outer face in the same order as $\mathcal{O}$.
  Hence, graph $U$ is planar.  Produce a planar drawing $\Gamma^*$ of
  $H$ whose outer face is the cycle composed of all the dummy
  edges. Since $\mathcal{O}$ is a partitioned $T$-coherent $k$-page
  book embedding, no two edges of $E_k$ alternate in
  $\mathcal{O}$. Hence they can be drawn in the outer face of
  $\Gamma^*$ without introducing crossings. Removing all dummy edges
  yields a planar embedding $\Gamma$ of $H$.
 
  We prove the time complexity. STEP~1 and STEP~2 take $O(k\cdot n)$
  time, since the time-complexity of constructing a PQ-tree on a
  ground set of $n$ elements is linear in the size of the ground
  set~\cite{b-pqra-75,bt-tcopiggppqa-76}. STEP~3 and STEP~4 take
  $O((k-2)\cdot n)$ and $O(n)$ time, respectively, since the
  intersection of two PQ-trees can be performed in amortized linear
  time in their size~\cite{b-pqra-75} and the size of the obtained
  PQ-tree stays linear in the size of the ground set. STEP~5 takes
  linear time in the size of $\mathcal{T}$, since it corresponds to
  replacing each Q-node with a wheel and each P-node with a cut vertex
  connecting the wheels~\cite{FengCE95}. Observe that, graph
  $G_\mathcal{T}$ has size linear in $n$, since each vertex of the rim
  of a wheel corresponds to exactly one edge of $\mathcal{T}$. STEP~6
  takes $O(|E_k|)=O(n)$ time and produces a graph $H$ with $O(n)$ vertices. Finally, testing the planarity of $H$
  takes linear time in the size of $H$~\cite{bt-tcopiggppqa-76}.
 
  This concludes the proof of the theorem.
\end{proof}

\subsection{\ptcTWOpbep}

In this subsection we restrict our attention to instances
\ptcTWOpbeinstance of \ptckpbepshort with $k=2$. We remark that this
problem has been proved~\cite{adfpr-tsetgibgt-11} equivalent to \sefep
for $k=2$ when the intersection graph \Gint is connected. This problem
was only known to be polynomial-time solvable if (i) $T$ is a
star~\cite{hn-tpbecgp-09}, (ii) $\red{G_1}=(V(T),E(T) \cup \red{E_1})$
and $\blue{G_2}=(V(T),E(T) \cup \blue{E_2})$ are
biconnected~\cite{br-spqacep-13}, or (iii) $T$ is
binary~\cite{hoske-befpa-12,s-ttphtpv-13}. Theorem~\ref{th:algo} extends the class of polynomially-solvable instances by showing that
\ptcTWOpbepshort is linear-time solvable if either \Gr or \Gb is
$T$-biconnected.

In the following we prove that, in order to find a polynomial-time
algorithm for the general setting of \ptcTWOpbepshort, it suffices to focus on instances of
\ptcTWOpbepshort in which only one of the two graphs is biconnected (not
$T$-biconnected) and series-parallel.

\begin{theorem}\label{th:seriesparallel}
  Let \ptcTWOpbeinstance be an instance of \ptcTWOpbepshort. There
  exists an equivalent instance \ptcTWOpbeinstancep{*} of
  \ptcTWOpbepshort such that one of the two graphs is biconnected and
  series-parallel.
\end{theorem}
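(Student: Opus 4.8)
I work in the equivalence recalled just above: an instance \ptcTWOpbeinstance of \ptcTWOpbepshort is the \sefep instance \sefeinstance with $\Gint=T$, $\Gr=T\cup\Er$, $\Gb=T\cup\Eb$, and by the characterization of J\"unger and Schulz~\cite{js-igsefe-09} a solution is a planar embedding of $T$ that extends to both \Gr and \Gb; equivalently, a $T$-coherent cyclic order of the leaves in which each of \Er and \Eb is a set of pairwise non-crossing chords. The plan is to leave the ``blue side'' essentially untouched and to rebuild the ``red side'' as a biconnected series-parallel graph that imposes exactly the same constraint on the leaf order.

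First I would capture the red constraint as a PQ-tree. Applying Steps~1--2 of the algorithm of Theorem~\ref{th:algo} to \Er alone (that is, adding an apex vertex adjacent to all leaves and running the planarity test of Booth and Lueker~\cite{bt-tcopiggppqa-76}) produces a PQ-tree encoding precisely the cyclic leaf orders in which \Er is non-crossing, or reports that none exists; intersecting this PQ-tree with the PQ-tree of $T$ yields a single PQ-tree $\mathcal{T}$ whose consistent orders are exactly the red-admissible orders of \ptcTWOpbeinstance. If $\mathcal{T}$ is null the instance is negative, and we output any fixed negative instance whose red graph is biconnected and series-parallel; otherwise we proceed with $\mathcal{T}$.

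The key point is that $\mathcal{T}$ can be realized by a biconnected \emph{series-parallel} graph. The representative graph of Feng, Cohen and Eades~\cite{FengCE95} (Step~5 of the algorithm of Theorem~\ref{th:algo}) realizes $\mathcal{T}$ but uses \emph{wheels} for $Q$-nodes and cut vertices for $P$-nodes, so it is neither series-parallel nor biconnected. I would replace a $Q$-node by a \emph{cycle} with the $Q$-node's children attached as pendant subtrees: a cycle has a unique planar embedding up to reflection, so this enforces exactly the ``fixed cyclic order up to reversal'' that a $Q$-node demands, and it is series-parallel. I would replace a $P$-node by a ``book'' gadget -- a parallel composition of length-two paths between two poles, with the children attached to the middle vertices -- which allows an arbitrary cyclic order of the children and is also series-parallel. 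Glued along the tree structure of $\mathcal{T}$ by edges between gadget vertices that create no simple cycle spanning two gadgets (exactly as in Step~5), these gadgets form a series-parallel graph $R$ whose planar embeddings induce precisely the leaf orders consistent with $\mathcal{T}$. In effect every rigid (triconnected) part of the original \Gr is replaced by a mere cycle, which suffices to preserve the admissible orders while destroying every $K_4$-minor.

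It remains to put $R$ into the normal form of \ptcTWOpbepshort and to biconnect it. I would subdivide edges of $R$ and attach pendant dummy leaves so that $R$ has a spanning tree $T^*$ whose set of leaves is exactly the original leaf set $V$ together with the dummy leaves (no gadget vertex being a leaf of $T^*$), and set $\Erp{*}=E(R)\setminus E(T^*)$. To remove the remaining degree-one vertices and split pairs of $\GrP{*}=T^*\cup\Erp{*}$, I would add chords that either run parallel to an $\Erp{*}$-chord already present or join two dummy leaves that $T^*$ forces to be consecutive; such chords cross no chord in any $T^*$-coherent order, hence they do not shrink the set of red-admissible orders. Finally I would set $\Ebp{*}=\Eb$ on the copies of the original leaves, giving the dummy leaves no blue chord. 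Then the admissible orders of \ptcTWOpbeinstancep{*}, restricted to $V$, are exactly the $T$-coherent orders that are both red- and blue-admissible for \ptcTWOpbeinstance, so the two instances are equivalent, and $\GrP{*}$ is biconnected and series-parallel by construction. I expect this last step to be the main obstacle: biconnectivity forces the presence of many chords while equivalence forbids any chord not already implied by $\mathcal{T}$, and the tension is resolved only by routing the extra chords along existing ones and through dummy leaves pinned between forced-consecutive positions; checking that this can always be done, and that no $K_4$-minor (i.e.\ no simple cycle spanning more than one gadget) is created in the process, is where the bulk of the argument lies.
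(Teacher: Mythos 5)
Your proof breaks at its first substantive step: the claim that the set of $T$-coherent leaf orders in which \Er is non-crossing can be captured by a single PQ-tree. This holds only when \Er induces a connected graph (the $T$-biconnected case), which is exactly the hypothesis under which Steps~1--2 of {\sc ALGO-$(k-1)$-$T$-BICO} are applied in Theorem~\ref{th:algo}; that algorithm deliberately leaves one edge set out of the PQ-tree machinery and handles it by a planarity test on the representative graph precisely because a disconnected page is not PQ-representable. A concrete counterexample: take leaves $\{1,\dots,5\}$, $T$ a star, and $\Er=\{(1,2),(3,4)\}$. The admissible circular orders are those in which the two chords do not interleave; one checks that no nontrivial subset of leaves is consecutive in all of them (e.g.\ $\{1,2\}$ is split in $\langle 1,3,4,2,5\rangle$ and $\{3,4\}$ in $\langle 3,1,2,4,5\rangle$), so the only candidate PQ/PC-tree is the universal one, whose consistent orders also include the forbidden $\langle 1,3,2,4,5\rangle$. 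There is a higher-level red flag as well: if each page's constraint were PQ-representable, you could intersect the two page PQ-trees with that of $T$ and decide \ptcTWOpbepshort in polynomial time, whereas the paper treats that complexity as open. The subsequent steps (cycle gadgets for Q-nodes, book gadgets for P-nodes, the biconnectivity augmentation) are therefore moot, and you locate the main difficulty in the wrong place --- it is not the augmentation but the representability of a single disconnected page.

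The paper's proof sidesteps representability entirely and never ``compiles'' a constraint: it takes two copies $T'$, $T''$ of $T$ joined at a new root $r^*$, places a copy of \Er on the leaves of $T'$ and a copy of \Eb on the leaves of $T''$, both assigned to the \emph{same} (red) page of the new instance, and uses as the new blue page only the perfect matching $(v_i',v_i'')$ between corresponding leaves. The matching forces the order of the second copy to be the reverse of the first, so the new instance is equivalent to the original, and the blue graph --- a tree plus a leaf-to-leaf matching --- is immediately biconnected and series-parallel. In short, one of the two original constraints is simply moved onto the other page, rather than being re-encoded by a gadget graph.
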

\begin{proof}
  We describe how to construct instance \ptcTWOpbeinstancep{*}
  starting from \ptcTWOpbeinstance. Refer to
  Fig~\ref{fig:seriesparallel}.

\begin{figure}[htb]
  \centering
  \subfigure[\ptcTWOpbeinstance]{\includegraphics[width=0.45\textwidth]{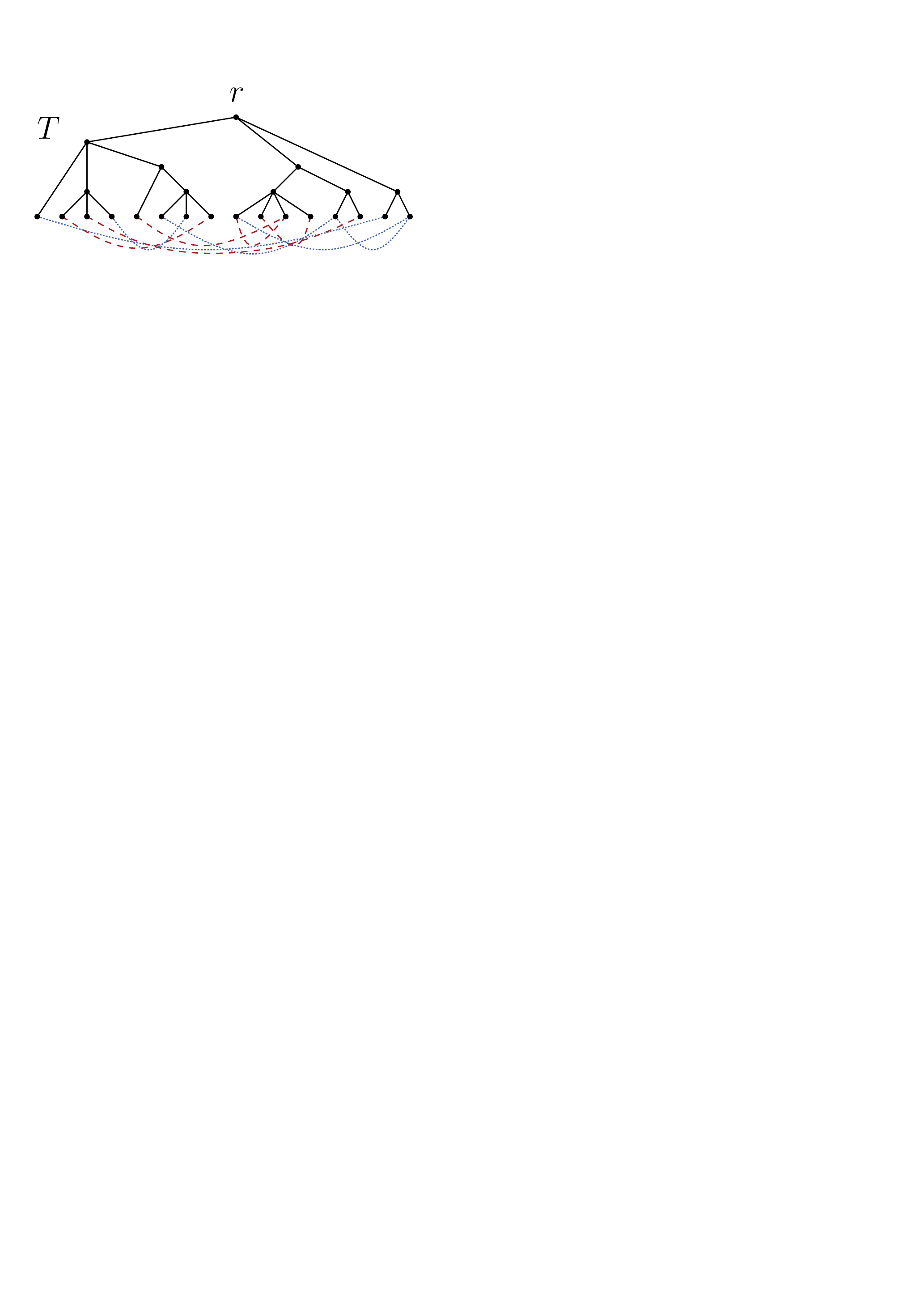}}
  \subfigure[\ptcTWOpbeinstancep{*}]{
    \includegraphics[width=0.45\textwidth]{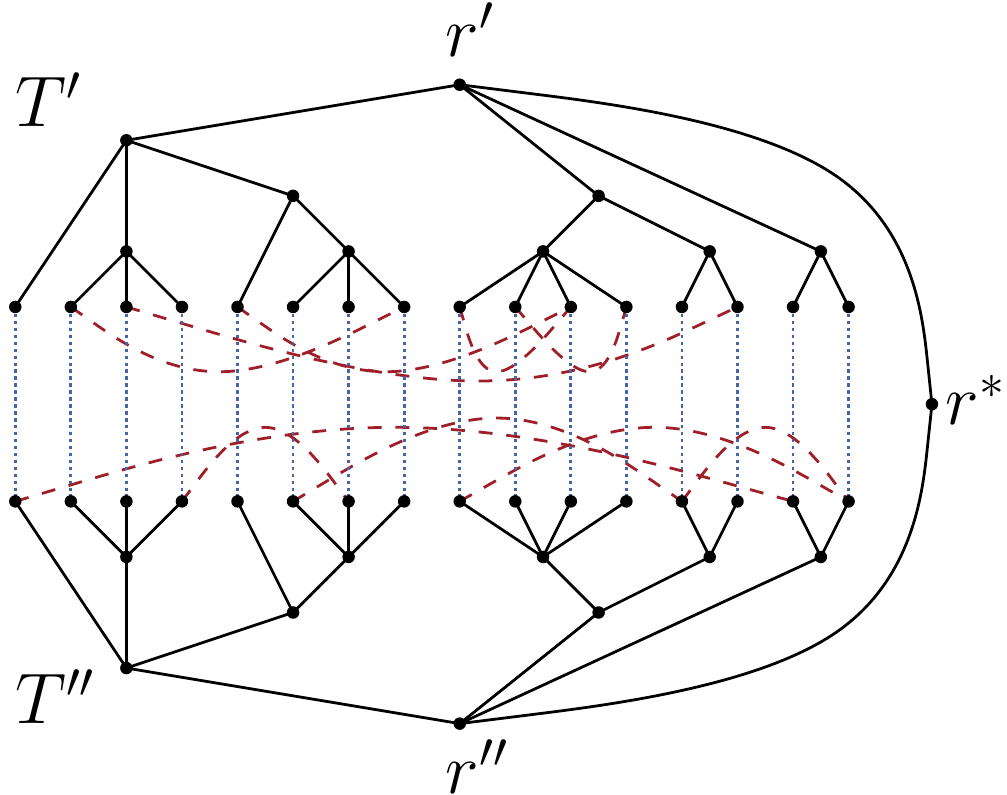}}
  \caption{Illustration of the proof of
    Theorem~\ref{th:seriesparallel}.}\label{fig:seriesparallel}
\end{figure}
Let $r$ be any internal vertex of $T$.  Tree $T^*$ is constructed as
follows. Initialize tree $T^*$ to the union of two copies $T'$ and
$T''$ of $T$. For each vertex $v \in T$, let $v'$ and $v''$ be the two
copies of $v$ in $T'$ and in $T''$, respectively. Add a vertex $r^*$
to $T^*$ and edges $(r^*,r')$ and $(r^*,r'')$.
Sets \Erp{*} and \Ebp{*} are defined as follows. Set
$\red{E_1^*}=\{(v'_i,v'_j): (v_i,v_j) \in \red{E_1}\} \cup
\{(v''_i,v''_j): (v_i,v_j) \in \blue{E_2}\}$. Also, set
$\blue{E_2^*}=\{(v_i',v''_i): v_i \in \mathcal{L}(T)\}$, where
$\mathcal{L}(T)$ denotes the set of leaves of $T$.

It is straightforward to observe that, by construction, the graph
\GbP{*} composed of $\mathcal{T}^*$ plus the edges in $\blue{E_2^*}$
is biconnected and series-parallel. We prove that
\ptcTWOpbeinstancep{*} is equivalent to \ptcTWOpbeinstance.

Suppose that \ptcTWOpbeinstance admits a partitioned $T$-coherent
$2$-page book embedding $\mathcal{O}$.  We construct an order
$\mathcal{O}^*$ for \ptcTWOpbeinstancep{*} as follows.  For each
$i=1,\dots,|\mathcal{L}(T)|$, consider the vertex $v_j$ at position
$i$ in $\mathcal{O}$. Place vertices $v_j'$ and $v_j''$ at positions
$i$ and $2\cdot |\mathcal{L}(T)|-i+1$ in $\mathcal{O}^*$,
respectively.

We prove that $\mathcal{O}^*$ is a partitioned $T$-coherent $2$-page
book embedding of \ptcTWOpbeinstancep{*}. First, we observe that
$\mathcal{O}^*$ is represented by $T^*$, as (i) $T^*$ is composed of
two copies of $T$ connected through $r^*$, (ii) $\mathcal{O}^*$ is
composed of two suborders of which the first coincides with
$\mathcal{O}$ and the second coincides with the reverse of
$\mathcal{O}$, where each element $v_j$ of $\mathcal{O}$ is identified
with elements $v'_j$ and $v''_j$ of $\mathcal{O}^*$, and (iii)
$\mathcal{O}$ is represented by $T$.  Second, we prove that the
endvertices of edges in \Erp{*} and \Ebp{*} do not alternate in
$\mathcal{O}^*$. As for the edges in \Ebp{*}, we observe that for
every two edges $(v'_i,v_i'')$ and $(v'_j,v_j'')$ with $i<j$, both
vertices $v'_j$ and $v_j''$ lie between $v'_i$ and $v_i''$ in
$\mathcal{O}^*$.  As for the edges in \Erp{*}, we first observe that
no alternation occurs between the endvertices of edges $(v'_i,v'_j)$
and $(v''_h,v''_k)$ as both $v'_i$ and $v'_j$ appear in
$\mathcal{O}^*$ before $v''_h$ and $v''_k$, by construction. Also, no
two edges $(v'_i,v_j')$ and $(v'_h,v'_k)$ alternate in $\mathcal{O}^*$
as otherwise edges $(v_i,v_j)$ and $(v_h,v_k)$ would alternate in
$\mathcal{O}$.  For the same reason, no two edges $(v''_i,v''_j)$ and
$(v''_h,v''_k)$ alternate in $\mathcal{O}^*$.

Suppose that \ptcTWOpbeinstancep{*} admits a partitioned $T$-coherent
$2$-page book embedding $\mathcal{O}^*$.  We first observe that in
$\mathcal{O}^*$ either all vertices $v'_i\in T'$ appear consecutively
or all vertices $v''_i\in T''$ do, as $\mathcal{O}^*$ is represented
by $T^*$ and $T^*$ consists of the two copies $T'$ and $T''$ of $T$.
Also, given a partitioned $T$-coherent $2$-page book embedding
$\mathcal{O}^1$, it is possible to obtain a new one $\mathcal{O}^2$ by
performing a circular shift on the elements of $\mathcal{O}^1$, that
is, by setting the first element of $\mathcal{O}^1$ as the last
element of $\mathcal{O}^2$ and by setting the element at position $i$
in $\mathcal{O}^1$ as the element at position $i-1$ in
$\mathcal{O}^2$, for each $i=2,\dots,|\mathcal{O}^1|$. Hence, in the
following, we will assume that $\mathcal{O}^*$ is such that all the
vertices $v'_i\in T'$ appear before all the vertices $v''_j \in T''$.

We construct an order $\mathcal{O}$ for \ptcTWOpbeinstance as follows.
For each $i=1,\dots,|\mathcal{L}(T')|$, consider the vertex $v'_j$ at
position $i$ in $\mathcal{O}^*$ and place vertex $v_j$ at position $i$
in $\mathcal{O}$.

We prove that $\mathcal{O}$ is a partitioned $T$-coherent $2$-page
book embedding of \ptcTWOpbeinstance. First, we observe that
$\mathcal{O}$ is represented by $T$, as the suborder of
$\mathcal{O}^*$ restricted to its first $|\mathcal{L}(T)|$ elements
(that corresponds to a copy of $\mathcal{O}$) is represented by $T'$
(that is a copy of $T$, where vertex $v'_i \in T'$ is identified with
vertex $v_i \in T$).  Second, we prove that the endvertices of edges
in \Er and \Eb do not alternate in $\mathcal{O}$. In order to prove
that, first observe that the suborder $\mathcal{O}'$ of
$\mathcal{O}^*$ restricted to its first $|\mathcal{L}(T)|$ elements is
the reverse of the suborder $\mathcal{O}''$ of $\mathcal{O}^*$
restricted to its last $|\mathcal{L}(T)|$ elements, where vertex $v'_i
\in T'$ is identified with vertex $v''_i \in T''$. This is due to the
fact that (i) for every $i=1,\dots,|\mathcal{L}(T)|$, there exists
edge $(v'_i,v''_i)$ and (ii) all the vertices $v'_i \in T'$ appear
before all the vertices $v''_j \in T''$. This implies that if the
endvertices of two edges $(v_i,v_j)$ and $(v_h,v_k)$ belonging to \Er
(to \Eb) alternate in $\mathcal{O}$, then the corresponding copies
$v'_i$, $v'_j$, $v'_h$, and $v'_k$ (the corresponding copies $v''_i$,
$v''_j$, $v''_h$, and $v''_k$) alternate in $\mathcal{O}^*$. However,
this contradicts the fact that $\mathcal{O}^*$ is a partitioned
$T$-coherent $2$-page book embedding of \ptcTWOpbeinstance, since
edges $(v'_i,v'_j)$ and $(v'_h,v'_k)$ (edges $(v''_i,v''_j)$ and
$(v''_h,v''_k)$) exist in \Erp{*} by construction.  This concludes the
proof of the theorem.
\end{proof}

\section{\maxsefep}\label{se:maxsefe}

In this section we study the optimization version of the \sefe
problem, in which two embeddings of the input graphs \Gr and \Gb are
searched so that as many edges of \Gint as possible are drawn the
same. We study the problem in its decision version and call it
\maxsefep. Namely, given a triple \maxsefeinstance{k^*} composed of
two planar graphs \Gr and \Gb, and an integer $k^*$, the \maxsefep
problem asks whether \Gr and \Gb admit a simultaneous embedding
\sefesolution in which at most $k^*$ edges of \Gint have a different
drawing in \GammaR and in \GammaB.  First, in
Lemma~\ref{le:MaxSefeNP}, we state the membership of \maxsefep to \NP,
which descends from the fact that \sefe belongs to \NP. Then, in
Theorem~\ref{th:fixedembedding} we prove the \NPCN in the general
case. Finally, in Theorem~\ref{th:degree-two}, we prove that the
problem remains \NPC even if stronger restrictions are imposed on the
intersection graph \Gint of \Gr and \Gb.

\begin{lemma}\label{le:MaxSefeNP}
  \maxsefep is in \NP.
\end{lemma}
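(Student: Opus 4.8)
The plan is to exhibit, for every positive instance, a certificate of polynomial size together with a polynomial-time verification procedure. The key observation is that \maxsefep is essentially the \sefep problem in which a bounded number of edges of the intersection graph are allowed (but not required) to be drawn differently; since \sefep with $k=2$ is in \NP~\cite{gjpss-sgefe-06}, so is \maxsefep.

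Concretely, given an instance \maxsefeinstance{k^*}, I would first guess the set $E' \subseteq E(\Gintm)$ of edges of \Gint that receive a different drawing in the two embeddings. Since $|E'|\le k^*\le |E(\Gintm)|$, this set can be encoded with $O(n\log n)$ bits. I would then form the \sefep instance obtained from \sefeinstance by declaring, for each edge in $E'$, its occurrence in \Gr and its occurrence in \Gb to be distinct edges; that is, the \sefep instance with input graphs \Gr and \Gb and intersection graph $\Gintm\setminus E'$. The claim is that \Gr and \Gb admit a simultaneous embedding \sefesolution in which at most $k^*$ edges of \Gint are drawn differently in \GammaR and in \GammaB if and only if, for some $E'\subseteq E(\Gintm)$ with $|E'|\le k^*$, this reduced \sefep instance is positive. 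For the ``if'' direction, a \sefe of the reduced instance draws all edges of \Gint not in $E'$ the same in both embeddings, hence at most $|E'|\le k^*$ edges of \Gint are drawn differently. For the ``only if'' direction, taking $E'$ to be exactly the set of edges of \Gint drawn differently in a solution of \maxsefeinstance{k^*} yields a \sefe of the corresponding reduced instance; note that \Gr and \Gb are planar by assumption, so this is always a legitimate \sefep instance.

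Hence a certificate for \maxsefep consists of the set $E'$ together with a certificate for the reduced \sefep instance; both parts have polynomial size, and the reduced instance can be built and its certificate checked in polynomial time because \sefep with $k=2$ is in \NP~\cite{gjpss-sgefe-06}. There is no serious technical obstacle here; the only point that requires a moment of care is the ``if'' direction of the equivalence, where the edges of $E'$ are permitted but not forced to be drawn differently --- this is harmless, since a solution drawing fewer than $k^*$ edges of \Gint differently still satisfies the required bound.
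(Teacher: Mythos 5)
Your proposal is correct and follows essentially the same route as the paper: guess the set $E'$ of at most $k^*$ edges of \Gint to be drawn differently, remove them from the intersection, and invoke the membership of \sefep in \NP~\cite{gjpss-sgefe-06} on the reduced instance. The only cosmetic difference is that the paper realizes ``declaring the two occurrences of an edge of $E'$ distinct'' concretely, by replacing each such edge with a length-$2$ path in one of the two graphs, which makes the reduced instance a bona fide \sefep instance; your argument is otherwise identical.
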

\remove{
  \begin{proofsketch}
    One can guess any set of at most $k^*$ edges not to be drawn the
    same, subdivide each of them in one of the two graphs, and test
    for the existence of a \sefe.  Since \sefe belongs to
    \NP~\cite{gjpss-sgefe-06}, the statement follows.
  \end{proofsketch}
}

\begin{proof}
  The statement descends from the fact that the \sefe problem belongs
  to \NP~\cite{gjpss-sgefe-06}. Namely, let \maxsefeinstance{k^*} be
  an instance of \maxsefep.  Non-deterministically construct in
  polynomial time all the sets of at most $k^*$ edges of \Gint. Then,
  for each of the constructed sets, replace every edge in the set with
  a path of length $2$ in one of the two graphs, say \Gr, hence
  obtaining a graph \textcolor{red}{$G'_1$}, and test whether a \sefe
  of \textcolor{red}{$G'_1$} and \Gb exists in polynomial time with a
  non-deterministic Turing machine~\cite{gjpss-sgefe-06}. If at least
  one of the performed tests succeeds, then \maxsefeinstance{k^*} is a
  positive instance.
\end{proof}

In order to prove that \maxsefep is \NPC, we show a reduction from a
variant of the \NPC problem \stplong (\stp)~\cite{gj-rstpnpc-77},
defined as follows: Given an instance $\langle G(V,E), S, k \rangle$
of \stp, where $G(V,E)$ is a planar graph whose edges have weights
$\omega:E\rightarrow \mathbb{N}$, $S\subset V$ is a set of
\emph{terminals}, and $k >0$ is an integer; does a tree $T^*(V^*,E^*)$
exist such that (1) $V^*\subseteq V$, (2) $E^*\subseteq E$, (3)
$S\subseteq V^*$, and (4) $\sum_{e\in E^*}\omega(e)\leq k$? The edge
weights in $\omega$ are bounded by a polynomial function $p(n)$
(see~\cite{gj-rstpnpc-77}).
In our variant, that we call \upstp (\upstshort), graph $G$ is a
triconnected planar graph and all the edge weights are equal to $1$.
We remark that a variant of \stp in which all the edge weights are
equal to $1$ and in which $G$ is a \emph{subdivision} of a
triconnected planar graph (and no subdivision vertex is a terminal) is
known to be \NP-complete~\cite{addfr-bcp-12}. However, using this
variant of the problemwould create multiple edges in our
reduction. Actually, the presence of multiple edges might be handled
by replacing them in the constructed instance with a set of length-$2$
paths. However, we think that an \NPCN proof for the \stp problem with
$G$ triconnected and uniform edge weights may be of independent
interest.

\begin{lemma}\label{le:steiner-triconnected-uniform-hard}
  \upstp is \NPC.
\end{lemma}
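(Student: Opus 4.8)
The plan is to reduce from the \NPC variant of \stplong mentioned above, namely the one where all edge weights are $1$ and the host graph $G$ is a \emph{subdivision} of a triconnected planar graph with no subdivision vertex a terminal; this variant is \NPC by~\cite{addfr-bcp-12}. The only defect of the produced instance, relative to the statement of \upstp, is that $G$ is a subdivision of a triconnected planar graph rather than a triconnected planar graph itself. So the reduction consists of two moves: first, suppress (smooth out) each subdivision vertex to recover a triconnected planar graph, possibly creating parallel edges or short cycles; second, deal with those parallel edges so that the result is a simple triconnected planar graph with uniform weights and an equivalent Steiner-tree optimum.

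First I would observe that a degree-$2$ subdivision vertex $s$ on a path $u$–$s$–$w$ can be replaced by a single edge $(u,w)$ of weight $2$: any Steiner tree either uses both edges incident to $s$ (and we charge $2$) or uses neither (and $s$ is dropped), so passing through $s$ is equivalent to using $(u,w)$ at cost $2$. Doing this for every maximal induced path whose internal vertices are subdivision vertices recovers a triconnected planar (multi)graph $G'$ with integer weights, and the Steiner-tree optimum is preserved. To restore \emph{uniform} weights, I would re-subdivide: replace a weight-$c$ edge by a path of $c$ unit-weight edges through fresh non-terminal vertices. But this just reintroduces degree-$2$ vertices and loses triconnectivity again, so this naive back-and-forth is circular. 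The correct route is instead to handle parallel edges directly: after suppression, if two terminals (or branch vertices) $u,w$ are joined by several parallel edges of various weights, keep only a cheapest one; and to kill a pair of parallel unit edges, route one of them through a small triconnected gadget — e.g. attach a copy of $K_4$ whose vertices are non-terminal and whose presence forces the detour to cost at least as much as the direct edge — so that optimal solutions never prefer the detour. One must verify that inserting such gadgets (i) keeps the graph triconnected and planar, (ii) keeps all weights equal to $1$, and (iii) changes the optimum value only by a fixed additive amount that can be absorbed into the target $k$.

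Alternatively — and this is probably cleaner — I would avoid suppression altogether and argue more carefully about why the subdivision vertices are harmless, then build the needed triconnectivity by adding edges rather than removing vertices. Concretely: take the subdivision $G$ of a triconnected planar graph $H$. For each subdivision vertex $s$ lying on an $H$-edge $e=(u,w)$, the graph is already $2$-connected but $s$ has degree $2$, which is what prevents triconnectivity. Add, for each such $s$, a new unit-weight edge from $s$ to a carefully chosen third vertex so as to raise its connectivity, while ensuring the new edge is never cheaper-to-use than the edges it bypasses and that planarity is preserved (there is enough room inside each face of a planar embedding of $H$ to do this). The key invariant to maintain is: \emph{an optimal Steiner tree in the augmented graph uses no added edge}, so that optima of the augmented instance and of the original \stp-variant instance coincide up to the known additive bookkeeping. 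Membership in \NP is immediate since \stp is in \NP and the uniform-triconnected restriction is a special case.

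\textbf{Main obstacle.} The hard part will be the parallel-edge / low-degree elimination: converting a triconnected \emph{multi}graph (or a $2$-connected graph with many degree-$2$ vertices) into a \emph{simple, triconnected} planar graph with all weights equal to $1$, while (a) not blowing up the weights, (b) preserving planarity, and (c) controlling exactly how much the Steiner optimum shifts. Each individual gadget is routine, but verifying that the gadgets are mutually compatible — that triconnectivity, planarity, unit weights, and the additive offset on $k$ all survive simultaneously after applying the transformation everywhere — is where the real care is needed, and it is the step I expect to occupy most of the proof.
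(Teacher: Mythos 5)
Your overall strategy --- start from a known hard planar Steiner tree variant and restore triconnectivity by adding structure whose edges are provably never used by an optimal solution --- is the right one and matches the paper's in spirit, but the proposal stops exactly where the lemma's real content lies: the gadget. Your first route (suppress degree-$2$ vertices, then fix parallel edges) you correctly identify as circular, since restoring uniform weights re-subdivides; the $K_4$-style parallel-edge gadget is therefore moot. Your second route --- ``add, for each subdivision vertex $s$, a new unit-weight edge from $s$ to a carefully chosen third vertex'' --- is underspecified in a way that matters. A single added edge per degree-$2$ vertex need not yield $3$-connectivity (raising all degrees to $3$ removes only the trivial $2$-cuts; long subdivided paths can still admit separation pairs depending on where the new edges land), and, worse, a single edge from the middle of a long unit-weight path to a far-away vertex can act as a genuine shortcut, so the invariant ``no optimal Steiner tree uses an added edge'' does not come for free. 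Doing this for \emph{all} subdivision vertices simultaneously while preserving planarity is precisely the nontrivial step, and ``there is enough room inside each face'' does not by itself produce a construction.

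The paper resolves this with a concrete face gadget that you would need to supply (or an equivalent one): it reduces from general weighted \stplong, makes the graph triconnected with heavy dummy edges, replaces each weight-$\omega(e)$ edge by a path of $\omega(e)$ unit edges, and then, inside each face $f$ with boundary vertices $v_1,\dots,v_h$, adds a fresh cycle $u_1,\dots,u_h$ of non-terminal vertices together with unit edges $(u_i,v_i)$. This restores triconnectivity and planarity with only unit weights, and the key quantitative fact is that any detour $v_i,u_i,\dots,u_j,v_j$ through the gadget costs exactly two more than the boundary path $v_i,\dots,v_j$, so an \emph{optimal} tree never enters a gadget and the optimum shifts by zero (no additive bookkeeping on $k$ is needed, whereas your ``fixed additive amount'' claim would itself require proof). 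Your choice of source problem (the subdivision variant of~\cite{addfr-bcp-12} rather than general weighted \stp) is legitimate and would shorten the first half of the reduction, but it does not remove the need for this gadget and the accompanying ``gadget edges are never used'' argument, which is the missing idea in your proposal.
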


\remove{
  \begin{proofsketch}
    Since instance of \upstshort is also an instance of \stp, we have
    that \upstshort belongs to \NP.
    \begin{figure}[htb]
      \centering \subfigure[]{
        \includegraphics[width=.226\textwidth]{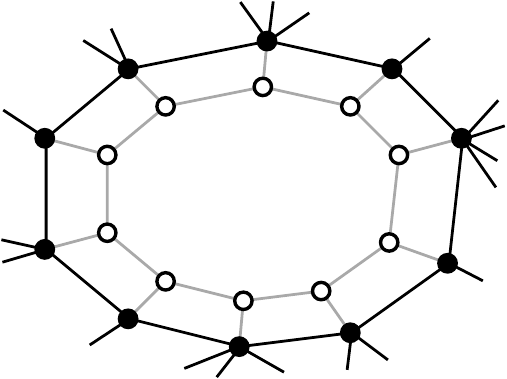}\label{fig:webgadget}
      }\hspace{2cm} \subfigure[]{
        \includegraphics[width=.226\textwidth]{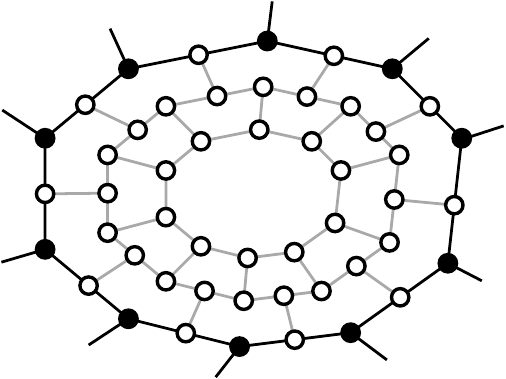}\label{fig:degree3}
      }
      \caption{(a) Gadget added inside a face to make $G^*$
        triconnected. (b) Gadget replacing a vertex of degree greater
        than $3$ to make \Gint subcubic.}
    \end{figure}
    The \NPHN is proved by means of a polynomial-time reduction from
    \stp.  Namely, given an instance $\langle G, S, k \rangle$ of
    \stp, we construct an equivalent instance $\langle G',S',k'
    \rangle$ of \upstshort as follows. First, augment $G$ to a
    triconnected planar graph $G'$ by adding dummy edges whose weight
    is the sum of the weight of the edges of $G$, that is bounded by
    function $p(n)$. Then, in order to obtain uniform weights, replace
    each edge $e$ in $G'$ with a path $P(e)$ of $\omega(e)$ weight-$1$
    edges. Finally, since the subdivision vertices have degree $2$,
    add inside each face of the unique planar embedding of $G'$ a
    gadget as the one in Fig. \ref{fig:webgadget}, all of whose edges
    have weight $1$.
  \end{proofsketch}
}

\begin{proof}
  The membership in \NP follows from the fact that an instance of
  \upstshort is also an instance of \stp.
  \begin{figure}[htb]
    \centering \subfigure[]{
      \includegraphics[width=.226\textwidth]{img/WebGadget}\label{fig:webgadget}
    }\hspace{2cm} \subfigure[]{
      \includegraphics[width=.226\textwidth]{img/degree3}\label{fig:degree3}
    }
    \caption{(a) Gadget added inside a face to make $G^*$
      triconnected. (b) Gadget replacing a vertex of degree greater
      than $3$ to make \Gint subcubic.}
  \end{figure}
  The \NPHN is proved by means of a polynomial-time reduction from
  \stp.  Let $\langle G, S, k \rangle$ be any instance of \stp.  We
  construct an equivalent instance $\langle G', S', k' \rangle$ of
  \upstshort as follows. Initialize $G'=G$.  Let $w = \sum_{e \in G'}
  w(e)$. Since the weights in $\omega$ are bounded by a polynomial
  function $p(n)$, the value of $w$ is also bounded by a polynomial
  function $n\cdot p(n)$.  Augment $G'$ to a triconnected planar graph
  by adding dummy edges and set $\omega(e_d)=w$ for each dummy edge
  $e_d$.  Then, replace each edge $e$ in $G'$ with a path $P(e)$ of
  $\omega(e)$ weight-$1$ edges.  Further, for each face $f$ of the
  unique planar embedding of $G'$, consider the vertices
  $v_1,\dots,v_h$ of $f$ as they appear on the boundary of $f$. Add to
  $G'$ a set $V_f$ of $h$ vertices $u_1,\dots,u_h$ and, for
  $i=1,\dots,h$, add to $G'$ a weight-$1$ edge $(u_i,v_i)$ and a
  weight-$1$ edge $(u_i,u_{i+1})$, where $h+1=1$ (see
  Fig.~\ref{fig:webgadget}). Note that, $G'$ is triconnected.
  Finally, set $S'=S$ and $k'=k$.  Since $w$ is bounded by a
  polynomial function, $\langle G', S', k' \rangle$ can be constructed
  in polynomial time.

  We prove that $\langle G,S,k \rangle$ is a positive instance of \stp
  if and only if $\langle G', S', k' \rangle$ is a positive instance
  of \upstshort.

  Suppose that $\langle G,S,k \rangle$ is a positive instance of
  \stp. Starting from the solution $T$ of $\langle G,S,k \rangle$, we
  construct a solution $T'$ of $\langle G',S',k' \rangle$ by replacing
  each edge $e$ of $T$ with path $P(e)$. By construction, $T'$ is a
  tree, each terminal vertex in $S'$ belongs to $T'$, and $\sum_{e\in
    T'}1 = \sum_{e\in T}\omega(e)\leq k = k'$.

  Suppose that $\langle G',S',k' \rangle$ is a positive instance of
  \upstshort.  Let $T'$ be the solution of $\langle G',S',k'
  \rangle$. Assume that $T'$ is the \emph{optimal} solution of
  $\langle G',S',k' \rangle$, i.e., there exists no solution
  $T^\sharp$ of $\langle G',S',k' \rangle$ such that $\sum_{e\in
    T^\sharp}\omega(e) < \sum_{e\in T'}\omega(e)$.  Observe that, if
  an edge of a path $P(e)$ belongs to $T'$, then all the edges of
  $P(e)$ belong to $T'$, as the internal vertices of $P(e)$ do not
  belong to $S'$, by construction.  Moreover, no edge of a path
  $P(e_d)$ such that $e_d$ is a dummy edge belongs to $T'$, since the
  total weight of the edges of $P(e_d)$ is $w$.  Finally, no edge
  incident to a vertex $u_i \in V_f$, for some face $f$, belongs to
  $T'$, as $S' \cap V_f = \emptyset$ and every path
  $v_i,u_i,\dots,u_l,\dots,u_j,v_j$ connecting two vertices $v_i$ and
  $v_j$ of $f$ and passing through vertices of $V_f$ is two units
  longer than path $v_i,\dots,v_l,\dots,v_j$ only passing through
  vertices of $f$.  Hence, we construct a solution $T$ of $\langle
  G,S,k \rangle$ by replacing in $T'$ all the edges of each path
  $P(e)$ with an edge $e$.  By construction, $T$ is a tree, each
  terminal vertex in $S$ belongs to $T$, and $\sum_{e\in
    T}\omega(e)=\sum_{e\in T'}1 \leq k'=k$. This concludes the proof
  of the lemma.
\end{proof}

Then, based on the previous lemma, we prove the main result of this
section.

\begin{theorem}\label{th:fixedembedding}
  \maxsefep is \NPC.
\end{theorem}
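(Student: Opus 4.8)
The plan is to reduce from \upstp, which is \NPC by Lemma~\ref{le:steiner-triconnected-uniform-hard}; membership of \maxsefep in \NP is Lemma~\ref{le:MaxSefeNP}. I would start from the characterization underlying the proof of Lemma~\ref{le:MaxSefeNP}: an instance \maxsefeinstance{k^*} is positive if and only if there is a set $F$ of at most $k^*$ edges of \Gint whose ``removal from the common part'' (formally, subdividing the edges of $F$ in \Gr) yields a positive instance of \sefe, i.e.\ if and only if there exist planar embeddings of \Gr and of \Gb that induce the same embedding of $\Gintm\setminus F$. So it suffices to build \Gr and \Gb in which the minimum size of such an $F$ equals the minimum size of a Steiner tree in the given \upstp instance.

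Given an instance $\langle G,S,k\rangle$ of \upstp, I would let $M=G^\star$ be the dual of $G$ with respect to its unique planar embedding; $M$ is again triconnected and planar, and its faces are in bijection with the vertices of $G$, with a face $f_s$ for each terminal $s\in S$. The intersection graph \Gint consists of $M$ (its only non-trivial component) together with one isolated vertex $t_s$ for each $s\in S$ and one more isolated vertex $h$; thus \Gint is a triconnected component plus a set of isolated vertices. Graph \Gr is obtained by joining, for each $s$, the vertex $t_s$ to three vertices of the boundary cycle $\partial f_s$: since any two faces of a triconnected planar graph share at most two vertices, these edges force $t_s$ to lie in $f_s$ in \emph{every} planar embedding of \Gr. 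Graph \Gb instead joins $h$ to all the $t_s$'s and, through a small auxiliary triconnected gadget attached to $\partial f_{s_1}$ for a fixed terminal $s_1$, forces $h$ and all the $t_s$'s to lie inside the single face $f_{s_1}$ in every planar embedding of \Gb. In both graphs I would add private edges so as to make \Gr and \Gb triconnected without introducing any further constraint on the embedding of \Gint, and I would set $k^*:=k$.

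For correctness, in the forward direction a Steiner tree of $G$ with $p\le k$ edges corresponds, through the face/vertex bijection, to a set $F$ of $p$ edges of $M$ whose deletion merges all the faces $f_s$ (and $f_{s_1}$) into a single face $\Psi$ of $M\setminus F$; one then realizes the \sefe by drawing $M$ in its unique embedding in both drawings, keeping $M\setminus F$ drawn identically while routing the edges of $F$ inside $\Psi$ in two different ways — in \GammaR so that the point chosen for each $t_s$ falls in the sub-face $f_s$, and in \GammaB so that all these points and that of $h$ fall in a common sub-face — and checking that the two induced embeddings of $\Gintm\setminus F$ coincide, which they do since every isolated vertex lies in $\Psi$ in both. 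Hence at most $p\le k$ common edges (exactly those of $F$) are drawn differently. In the converse direction, in any \sefe of \Gr and \Gb with differently-drawn set $F'\subseteq E(M)$, the induced embedding of $\Gintm\setminus F'$ must place every $t_s$ in the $M\setminus F'$-face containing $f_s$ (from \GammaR) and, simultaneously, in the $M\setminus F'$-face containing $f_{s_1}$ (from \GammaB); so all these faces coincide in a single face of $M\setminus F'$ containing every $f_s$, which means $F'$ merges all of $\{f_s: s\in S\}$, and dualizing back, $F'$ contains a subtree of $G$ spanning $S$. Thus $|F'|$ is at least the size of a minimum Steiner tree, and the \maxsefep instance is positive iff $\langle G,S,k\rangle$ is. Finally, since \Gr and \Gb are triconnected they have unique planar embeddings, so the reduction also proves hardness of \maxsefep in the fixed-embedding setting.

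The step I expect to be the main obstacle is exactly this last engineering: making \Gr and \Gb triconnected by means of private gadgets while preserving the precise embedding flexibility needed — each $t_s$ pinned to its own face $f_s$ in \GammaR, but the $t_s$'s only \emph{jointly} confined to one face in \GammaB — and then carrying out the converse direction rigorously, where the triconnectivity of $M$ is what makes the face structure of the common graph rigid enough to force the Steiner-tree behaviour. A secondary point that needs care is checking that the private gadget in \Gb can always be drawn inside a single face without forcing any additional common edge to be drawn differently, so that the bound $k^*=k$ is tight.
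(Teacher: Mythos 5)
Your proposal is correct and follows essentially the same route as the paper: a reduction from \upstshort in which \Gint is the dual of $G$, each terminal becomes a common vertex pinned inside its dual face by private \Gr-edges to that face's boundary, the terminals are tied together by private \Gb-edges so they must share a face, and the differently-drawn common edges dualize exactly to a Steiner tree, with $k^*=k$. The only differences are implementation details (you attach each terminal to three boundary vertices and use a star centered at an auxiliary vertex $h$, whereas the paper attaches to all boundary vertices and links the terminals by a path anchored at a distinguished terminal $s^*$), and the two points you flag as delicate — triconnecting the two graphs without over-constraining, and drawing the \Gb-private part inside the merged face without extra cost — are handled in the paper by small augmentations and by invoking a fixed-location planar drawing result, respectively.
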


\begin{proof}
  The membership in \NP follows from Lemma~\ref{le:MaxSefeNP}.

  The \NPHN is proved by means of a polynomial-time reduction from
  problem \upstshort.
  Let $\langle G,S,k \rangle$ be an instance of \upstshort. We
  construct an instance \maxsefeinstance{k^*} of \maxsefep as follows
  (refer to Fig.~\ref{fig:fixedembedding}).
  \begin{figure}[htb]
    \centering\ \subfigure[]{
      \includegraphics[width=.4\textwidth]{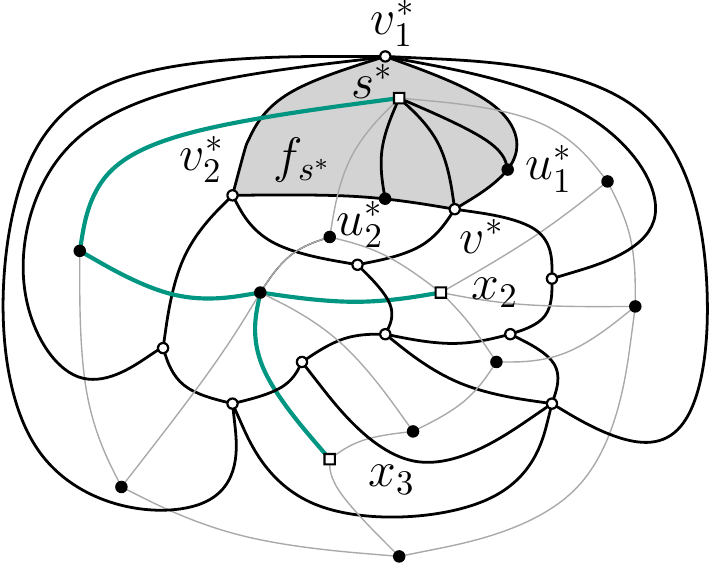}\label{fi:GplusGint}
    } \subfigure[]{
      \includegraphics[width=.4\textwidth]{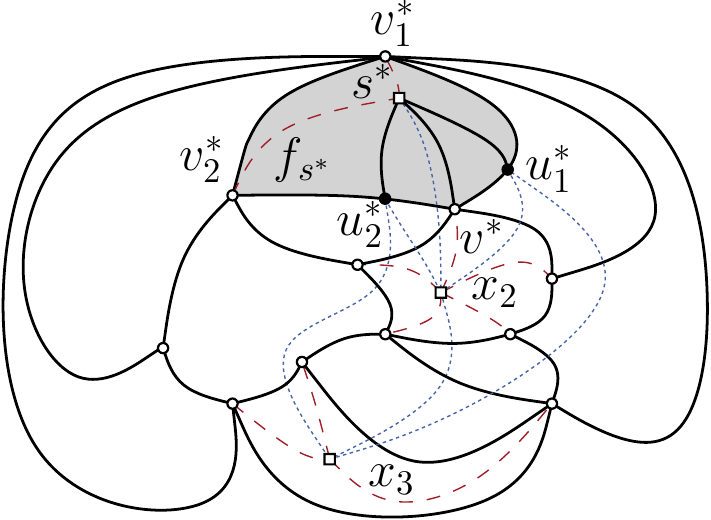}\label{fi:GintPlusGr}
    }
    \\
    \subfigure[]{
      \includegraphics[width=.4\textwidth]{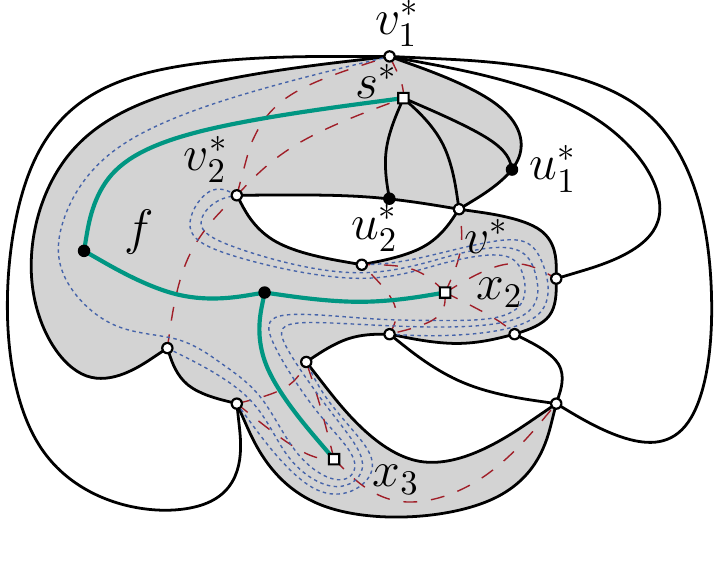}\label{fi:GintRerouted}
    }\subfigure[]{
      \includegraphics[width=.4\textwidth]{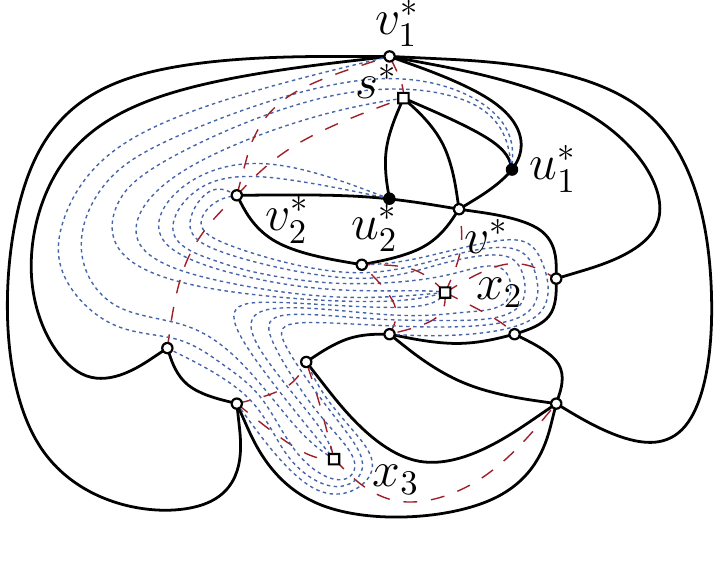}\label{fi:maxsefe}
    }
    \caption{\small Illustration for the proof of
      Theorem~\ref{th:fixedembedding}.  Black lines are edges of
      \Gint; grey lines are edges of $G$; dashed red and solid blue
      lines are edges of \Gr and \Gb, respectively; green edges
      compose the Steiner tree $T$; white squares and white circles
      are terminal vertices and non-terminal vertices of $G$,
      respectively.  (a) \Gint, $G$ and $T$; (b) \Gr$\cup$ \Gb; (c) a
      drawing of \Gint where $4$ edges have two different drawings;
      and (d) a solution \sefesolution of \maxsefeinstance{4}.  }
    \label{fig:fixedembedding}
  \end{figure}

  Since $G$ is triconnected, it admits a unique planar embedding
  $\Gamma_G$, up to a flip. We now construct \Gint, \Gr, and
  \Gb. Initialize \Gint$=$\Gr$\cap$ \Gb as the dual of $G$ with
  respect to $\Gamma_G$. Since $G$ is triconnected, its dual is
  triconnected. Consider a terminal vertex $s^* \in S$, the set
  $E_G(s^*)$ of the edges incident to $s^*$ in $G$, and the face
  $f_{s^*}$ of \Gint composed of the edges that are dual to the edges
  in $E_G(s^*)$. Let $v^*$ be any vertex incident to $f_{s^*}$, and
  let $v^*_1$ and $v^*_2$ be the neighbors of $v^*$ on
  $f_{s^*}$. Subdivide edges $(v^*,v^*_1)$ and $(v^*,v^*_2)$ with
  dummy vertices $u^*_1$ and $u^*_2$, respectively. Add to \Gint
  vertex $s^*$ and edges $(s^*,u^*_1)$, $(s^*,u^*_2)$, and
  $(s^*,v^*)$. Since $v^*$ has at least a neighbor not incident to
  $f_{s^*}$, vertices $u^*_1$ and $u^*_2$ do not create a separation
  pair. Hence, \Gint remains triconnected.  See
  Fig.~\ref{fi:GplusGint}.

  Graph \Gr contains all the vertices and edges of \Gint plus a set of
  vertices and edges defined as follows. For each terminal $s\in S$,
  consider the set $E_G(s)$ of edges incident to $s$ in $G$ and the
  face $f_s$ of \Gint composed of the edges dual to the edges in
  $E_G(s)$. Add to \Gr vertex $s$ and an edge $(s,v_i)$ for each
  vertex $v_i$ incident to $f_s$, without introducing multiple
  edges. Note that, graph \Gr is triconnected. Hence, the rotation
  scheme of each vertex is the one induced by the unique planar
  embedding of \Gr. See Fig.~\ref{fi:GintPlusGr}.

  Graph \Gb contains all the vertices and edges of \Gint plus a set of
  vertices and edges defined as follows. Rename the terminal vertices
  in $S$ as $x_1,\dots,x_{|S|}$, in such a way that $s^*=x_1$. For
  $i=1,\dots,|S|-1$, add edge $(x_i,x_{i+1})$ to \Gb. The rotation
  scheme of the vertices of \Gb different from $x_1,\dots,x_{|S|}$ is
  induced by the embedding of \Gint. The rotation scheme of vertices
  $x_2,\dots,x_{|S|}$ is unique, as they have degree less or equal to
  $2$. Finally, the rotation scheme of $s^*$ is obtained by extending
  the rotation scheme induced by the planar embedding of \Gint, in
  such a way that edges $(s^*,v^*)$ and $(s^*,x_2)$ are not
  consecutive. In order to obtain an instance of \maxsefep in which
  both graphs are triconnected, we can augment \Gb to triconnected by
  only adding edges among vertices
  $\{u^*_1,u_2^*\}\cup\{x_1,\dots,x_{|S|}\}$. See
  Fig.~\ref{fi:GintPlusGr}.
  Finally, set $k^*=k$.

  We show that \maxsefeinstance{k^*} admits a solution if and only if
  $\langle G,S,k \rangle$ does.

  Suppose that $\langle G,S,k \rangle$ admits a solution
  $T$. Construct a planar drawing \GammaR of \Gr. The drawing \GammaB
  of \Gb is constructed as follows. The edges of \Gint that are not
  dual to edges of $T$ are drawn in \GammaB with the same curve as in
  \GammaR. Observe that, in the current drawing \GammaB all the
  terminal vertices in $S$ lie inside the same face $f$ (see
  Fig.~\ref{fi:GintRerouted}).  Hence, all the remaining edges of \Gb
  can be drawn~\cite{pw-epgfvl-01} inside $f$ without intersections,
  as the subgraph of \Gb induced by the vertices incident to $f$ and
  by the vertices of $S$ is planar (see Fig.~\ref{fi:maxsefe}).  Since
  the only edges of \Gint that have a different drawing in \GammaR and
  \GammaB are those that are dual to edges of $T$, \sefesolution is a
  solution for \maxsefeinstance{k^*}.

  Suppose that \maxsefeinstance{k^*} admits a solution \sefesolution
  and assume that \sefesolution is optimal (that is, there exists no
  solution with fewer edges of \Gint not drawn the same). Consider the
  graph $T$ composed of the dual edges of the edges of \Gint that are
  not drawn the same. We claim that $T$ has at least one edge incident
  to each terminal in $S$ and that $T$ is connected. The claim implies
  that $T$ is a solution to the instance $\langle G,S,k \rangle$ of
  \upstshort, since $T$ has at most $k$ edges and since \sefesolution
  is optimal.

  Suppose for a contradiction that there exist two connected
  components $T_1$ and $T_2$ of $T$ (possibly composed of a single
  vertex). Consider the edges of $G$ incident to vertices of $T_1$ and
  not belonging to $T_1$, and consider the face $f_1$ composed of
  their dual edges.  Note that, $f_1$ is a cycle of \Gint. By
  definition of $T$, all the edges incident to $f_1$ have the same
  drawing in \GammaR and in \GammaB. Finally, there exists at least
  one vertex of $S$ that lies inside $f_1$ and at least one that lies
  outside $f_1$. Since all the vertices in $S$ belong to a connected
  subgraph of \Gb not containing any vertex incident to $f_1$, there
  exist two terminal vertices $s'$ and $s''$ such that $s'$ lies
  inside $f_1$, $s''$ lies outside $f_1$, and edge $(s',s'')$ belongs
  to \Gb. This implies that $(s',s'')$ crosses an edge incident to
  $f_1$ in \GammaB, a contradiction.  This concludes the proof of the
  theorem.
\end{proof}


We note from Theorem~\ref{th:fixedembedding} that \maxsefep is \NPC
even if the two input graphs \Gr and \Gb are triconnected, and if the
intersection graph \Gint is composed of a triconnected component and
of a set of isolated vertices (those corresponding to terminal
vertices). We remark that, under these conditions, the original \sefe
problem is polynomial-time solvable (actually, it is polynomial-time
solvable even if only one of the input graphs has a unique
embedding~\cite{adfjkpr-tppeg-10}).
Further, it is possible to transform the constructed instances so that
all the vertices of \Gint have degree at most $3$, by replacing each
vertex $v$ of degree $d(v)>3$ in \Gint with a gadget as in
Fig.~\ref{fig:degree3}. Such a gadget is composed of a cycle of
$2d(v)$ vertices and of an internal grid with degree-$3$ vertices
whose size depends on $d(v)$. Edges incident to $v$ are assigned to
non-adjacent vertices of the cycle, in the order defined by the
rotation scheme of $v$.
Hence, the \maxsefep problem remains \NPC even for instances in which
\Gint is subcubic, that is another sufficient condition to make \sefe
polynomial-time solvable~\cite{s-ttphtpv-13}.

In the following we go farther in this direction and prove that
\maxsefep remains \NPC even if the degree of the vertices in \Gint is
at most $2$. The proof is based on a reduction from the \NPC problem
\xorsatp~\cite{cmsm-tnc-11}, which takes as input (i) a set of Boolean
variables $B=\{x_1,...,x_l\}$, (ii) a $2$-XorSat formula
$F=\bigwedge_{x_i,x_j\in B} (l_i \oplus l_j)$, where $l_i$ is either
$x_i$ or $\overline{x_i}$ and $l_j$ is either $x_j$ or
$\overline{x_j}$, and (iii) an integer $k>0$, and asks whether there
exists a truth assignment $A$ for the variables in $B$ such that at
most $k$ of the clauses in $F$ are not satisfied by $A$.

\begin{theorem}\label{th:degree-two}
  \maxsefep is \NPC even if the intersection graph \Gint of the two
  input graphs \Gr and \Gb is composed of a set of cycles of length
  $3$.
\end{theorem}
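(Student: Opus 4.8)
We reduce from the \NPC problem \xorsatp described above; membership of \maxsefep in \NP is Lemma~\ref{le:MaxSefeNP}, so it suffices to prove \NPHN.

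Given a \xorsatp instance with variables $x_1,\dots,x_l$, clauses $c_1,\dots,c_m$ (where $c_t=l_{i_t}\oplus l_{j_t}$), and bound $k$, the plan is to build \Gr and \Gb whose intersection graph $\Gint=\Gr\cap\Gb$ is a disjoint union of triangles $\Delta_1,\dots,\Delta_m$ --- one per clause --- and to set $k^{*}=k$. I would make \Gb triconnected, so that it has an essentially unique planar embedding \GammaB; here the blue edges only wire the triangles into a triconnected planar scaffold in which each $\Delta_t$ receives a fixed reference orientation. I would instead make \Gr flexible: it contains $l$ independent planar \emph{flip gadgets} $H_1,\dots,H_l$, each attached to the rest of \Gr through a separation pair and admitting exactly two planar embeddings, so that the planar embeddings of \Gr correspond bijectively to the truth assignments of the $x_i$'s; let $f_i\in\{0,1\}$ be the state of $H_i$ in a chosen embedding \GammaR. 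For each clause $c_t=l_{i_t}\oplus l_{j_t}$, the red edges attach $\Delta_t$ to $H_{i_t}$ and $H_{j_t}$ so that, in every planar embedding \GammaR of \Gr, the orientation of $\Delta_t$ equals the parity $f_{i_t}\oplus f_{j_t}$ shifted by a constant accounting for the signs of $l_{i_t},l_{j_t}$ and for the reference orientation of $\Delta_t$ in \GammaB; the constant is chosen so that $\Delta_t$ has the \emph{same} orientation in \GammaR and in \GammaB exactly when the assignment $(f_1,\dots,f_l)$ satisfies $c_t$.

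The correctness hinges on one elementary observation: a triangle whose orientation disagrees between \GammaR and \GammaB can be reconciled at the price of a single edge of \Gint, since subdividing one edge of $\Delta_t$ in \Gr replaces it by a $2$-edge path in the intersection graph, which imposes no constraint on the two embeddings. Hence an assignment violating at most $k$ clauses yields embeddings of \Gr and \Gb that differ exactly on the (at most $k$) triangles of the violated clauses, i.e.\ a solution for \maxsefeinstance{k^{*}}; conversely, in any solution with at most $k^{*}$ edges of \Gint drawn differently, those edges --- being triangle edges --- lie on at most $k$ triangles, so subdividing them produces a genuine \sefe, and reading off the flip gadgets in the induced embedding of \Gr gives an assignment satisfying every clause whose triangle was untouched, that is, all but at most $k$ clauses. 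Since \Gint is a disjoint union of $3$-cycles, every vertex of \Gint has degree $2$, as demanded.

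The main obstacle is the red wiring of each $\Delta_t$ to its two flip gadgets: it must (i) realize the parity $f_{i_t}\oplus f_{j_t}$ while introducing neither extra embedding freedom nor any dependency among the $H_i$'s (so that all $2^{l}$ assignments remain realizable as planar embeddings of \Gr), and (ii) keep \Gr planar even though the clause--variable incidence graph need not be planar --- which is possible precisely because that incidence is distributed over the red edges of a single planar graph, the non-planarities being absorbed by red--blue crossings that a \sefe tolerates. A secondary point is that a planar embedding and its mirror image carry opposite triangle orientations, so the construction as stated only decides whether the smaller of the optimum of $F$ and of the optimum of the formula obtained from $F$ by negating one literal of each clause is at most $k$; this is neutralized by first padding the \xorsatp instance with $\Theta(m)$ fresh clauses, chosen so that the minimum number of clauses satisfied by an assignment is at least the minimum number of clauses left unsatisfied by an assignment, which leaves the answer unchanged.
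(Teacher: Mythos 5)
Your high-level plan (reduce from \xorsatp, make $\Gintm$ a disjoint union of clause triangles, charge one intersection edge per triangle whose orientation disagrees) matches the paper's, but the core of your construction is missing and, as specified, does not work. You need a planar red graph in which the orientation of each clause triangle $\Delta_t$ is \emph{forced}, in every planar embedding of \Gr, to equal $f_{i_t}\oplus f_{j_t}$ while the flip gadgets $H_1,\dots,H_l$ remain mutually independent and $\Delta_t$ acquires no freedom of its own. This is precisely the step you flag as ``the main obstacle,'' and it is not a technicality: flipping a split component at a separation pair toggles the orientation of a triangle only if that triangle lies inside the flipped component, so for $\Delta_t$ to respond to \emph{both} $H_{i_t}$ and $H_{j_t}$ it would have to sit inside two flippable components whose flips are supposed to stay free and independent of each other and of $\Delta_t$. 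If instead $\Delta_t$ is rigidly tied to only one of them, its orientation is a function of one variable; if it is tied to neither, it is a free bit and the adversary aligns every triangle with \GammaB at zero cost, destroying soundness. In short, the XOR cannot be evaluated \emph{inside} \Gr by a single triangle, and your decision to make \Gb globally triconnected removes the only other place the XOR could be evaluated.

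The paper's proof resolves exactly this by splitting each clause into \emph{two} intersection triangles $V_{i,j}$ and $V_{j,i}$: in \Gr, the orientation of $V_{i,j}$ depends only on the flip of the rigid variable structure $G_1^i$ (i.e., only on $l_i$), and symmetrically for $V_{j,i}$; in \Gb, the two triangles are welded into a triconnected \emph{clause gadget} that forces them to have opposite orientations but is itself a separate connected component of \Gb and hence freely flippable. The XOR is then computed by \emph{comparison between the two drawings}: if the clause is satisfied the blue gadget can be flipped to match both triangles (cost $0$), otherwise it can match only one (cost $1$). Note also that your mirror-image padding is unnecessary in that scheme, since complementing every variable preserves the truth value of every XOR clause; and your claim that one needs to ``subdivide'' edges is not how \maxsefep is defined, although the underlying observation (one differently drawn edge per disoriented triangle suffices) is correct and is also used in the paper. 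To repair your proposal you would essentially have to reintroduce the paper's two-triangles-per-clause design and give up the triconnectivity of \Gb.
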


\begin{proof}
  The membership in \NP follows from Lemma~\ref{le:MaxSefeNP}.

  The \NPHN is proved by means of a polynomial-time reduction from
  problem \xorsatp.
%
%
  Let $\langle B,F,k \rangle$ be an instance of \xorsatp. We construct
  an instance \maxsefeinstance{k^*} of \maxsefep as follows. Refer to
  Fig.~\ref{fig:xor2}.

\begin{figure}[htb]
  \centering \subfigure[]{
    \includegraphics[height=0.37\textwidth]{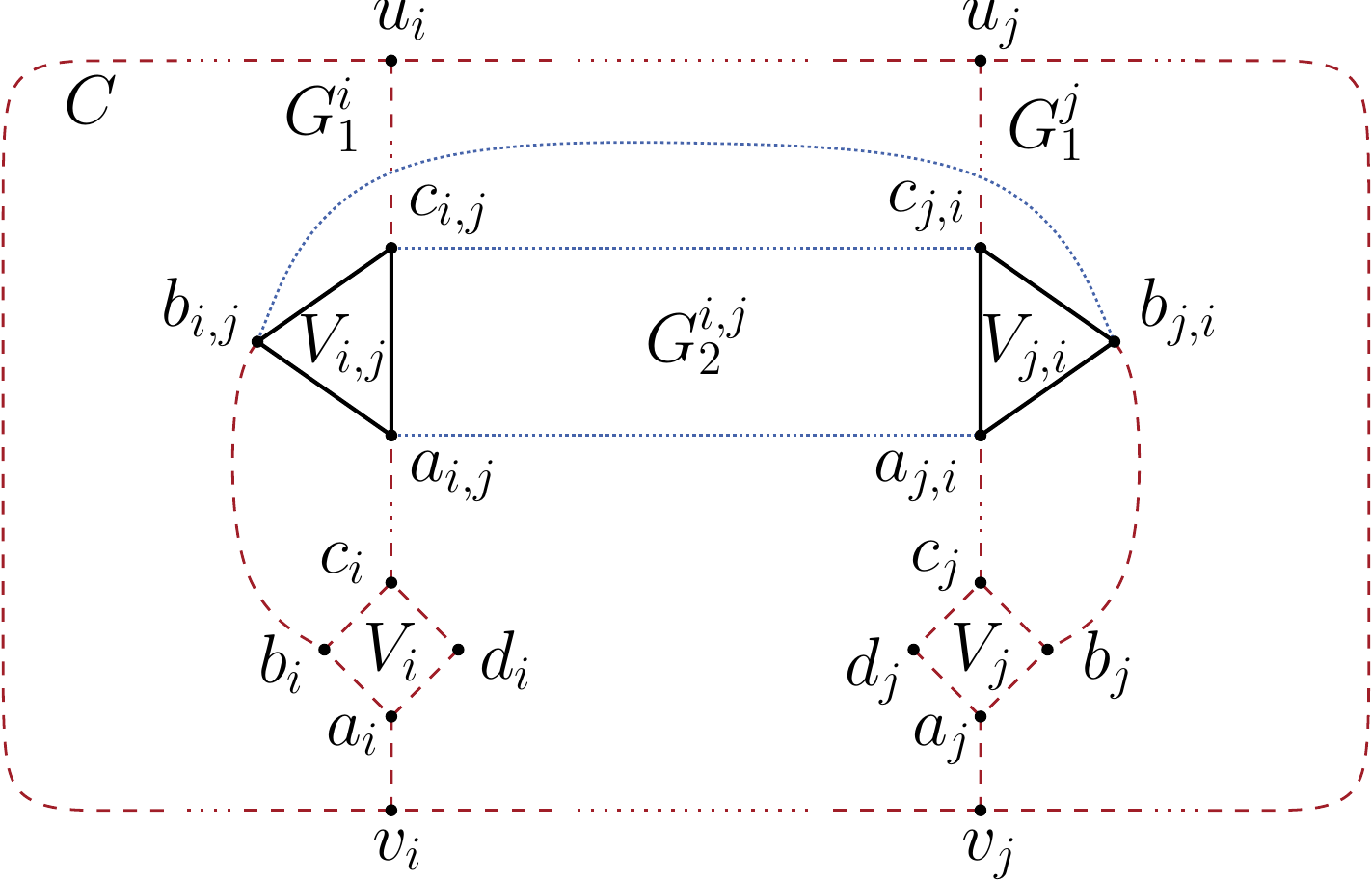}
    \label{fig:xor2}
  } \subfigure[]{
    \includegraphics[height=0.37\textwidth]{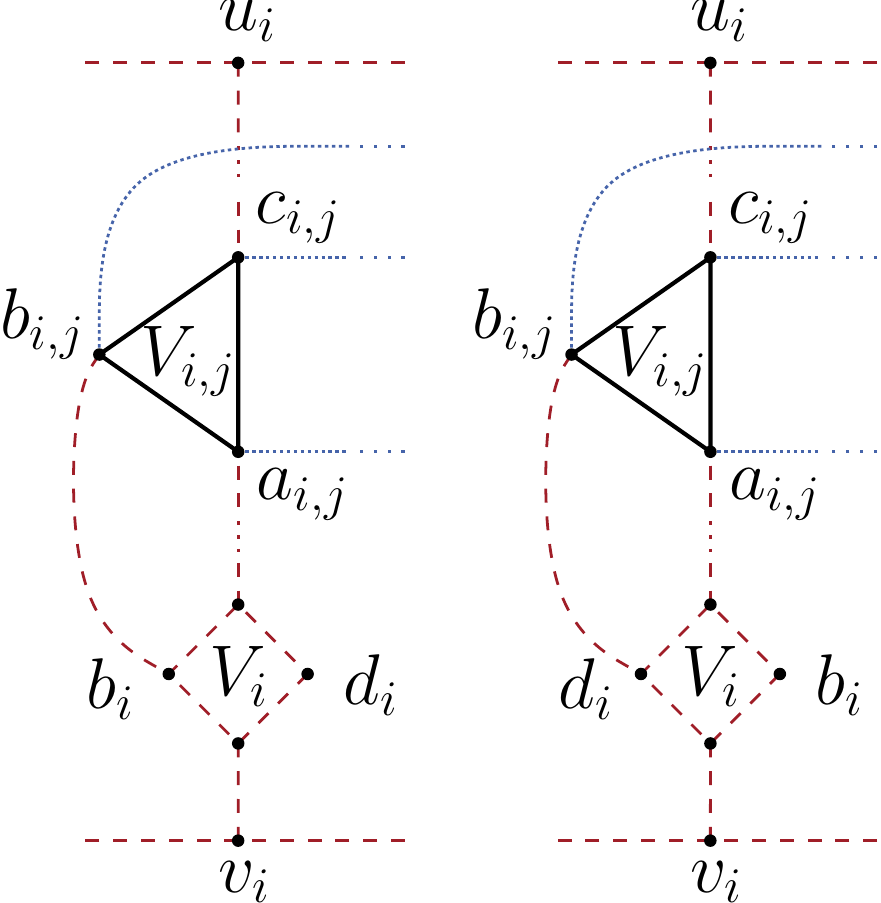}\label{fig:flips}
  }
  \caption{(a) Illustration of the construction of instance
    \maxsefeinstance{k^*} of \maxsefep. (b) Illustration of the two
    cases in which $l_i$ evaluates to \texttt{true} in $A$.}
\end{figure}

Graph \Gr is composed of a cycle $C$ with $2l$ vertices
$v_1,v_2,\dots,v_l,u_l, u_{l-1}, \dots, u_1$. Also, for each variable
$x_i \in B$, with $i=1,\dots,l$, \Gr contains a set of vertices and
edges defined as follows. First, \Gr contains a $4$-cycle
$V_i=(a_i,b_i,c_i,d_i)$, that we call \emph{variable gadget},
connected to $C$ through edge $(a_i,v_i)$. Further, for each clause
$(l_i \oplus l_j) \in F$ (or $(l_j \oplus l_i) \in F$) such that $l_i
\in \{x_i,\overline{x_i}\}$, \Gr contains (i) a $3$-cycle
$V_{i,j}=(a_{i,j},b_{i,j},c_{i,j})$, that we call
\emph{clause-variable gadget}, (ii) an edge $(b_{i,j},w)$, where
either $w=b_i$, if $l_i=x_i$, or $w=d_i$, if $l_i=\overline{x_i}$, and
(iii) an edge $(a_{i,j},c_{i,h})$, where $(l_i \oplus l_h)$ (or $(l_h
\oplus l_i)$) is the last considered clause to which $l_i$
participates; if $(l_i \oplus l_j)$ (or $(l_j \oplus l_i)$) is the
first considered clause containing $l_i$, then $c_{i,h}=c_i$. When the
last clause $(l_i \oplus l_q)$ (or $(l_q \oplus l_i)$) has been
considered, an edge $(c_{i,q},u_i)$ is added to \Gr. Note that, the
subgraph $G^i_1$ of \Gr induced by the vertices of the variable gadget
$V_i$ and of all the clause-variable gadgets $V_{i,j}$ to which $l_i$
participates would result in a subdivision of a triconnected planar
graph by adding edge $(c_{i,q},a_i)$, and hence it has a unique planar
embedding (up to a flip).
Graph \Gb is composed as follows. For each clause $(l_i \oplus l_j)
\in F$, with $l_i \in \{x_i,\overline{x_i}\}$ and $l_j \in
\{x_j,\overline{x_j}\}$, graph \Gb contains a triconnected graph
$G_2^{i,j}$, that we call \emph{clause gadget}, composed of all the
vertices and edges of the clause-variable gadgets $V_{i,j}$ and
$V_{j,i}$, plus three edges $(a_{i,j},a_{j,i})$, $(b_{i,j},b_{j,i})$,
and $(c_{i,j},c_{j,i})$. Finally, set $k^*=k$.

Note that, with this construction, graph \Gint is composed of a set of
$2 |F|$ cycles of length $3$, namely the two clause-variable gadgets
$V_{i,j}$ and $V_{j,i}$ for each clause $(l_i \oplus l_j)$.

We show that \maxsefeinstance{k^*} admits a solution if and only if
$\langle B,F,k \rangle$ does.


Suppose that $\langle B,F,k \rangle$ admits a solution, that is, an
assignment $A$ of truth values for the variables of $B$ not satisfying
at most $k$ clauses of $F$. We construct a solution \sefesolution of
\maxsefeinstance{k^*}.
First, we construct \GammaR. Let the face composed only of the edges
of $C$ be the outer face. For each variable $x_i$, with $i=1,\dots,l$,
if $x_i$ is \texttt{true} in $A$, then the rotation scheme of $a_i$ in
\GammaR is $(a_i,v_i)$, $(a_i,b_i)$, $(a_i,d_i)$ (as in
Fig.~\ref{fig:xor2}).  Otherwise, $x_i$ is \texttt{false} in $A$, and
the rotation scheme of $a_i$ is the reverse (as for $a_j$ in
Fig.~\ref{fig:xor2}). Since $G^i_1$ has a unique planar embedding, the
rotation scheme of all its vertices is univocally determined.
Second, we construct \GammaB. Consider each clause $(l_i \oplus l_j)
\in F$, with $l_i \in \{x_i,\overline{x_i}\}$ and $l_j \in
\{x_j,\overline{x_j}\}$. If $l_i$ evaluates to \texttt{true} in $A$,
then the embedding of $G_2^{i,j}$ is such that the rotation scheme of
$a_{i,j}$ in \GammaB is $(a_{i,j}, b_{i,j})$, $(a_{i,j}, c_{i,j})$,
$(a_{i,j}, a_{j,i})$ (as in Fig.~\ref{fig:xor2}).  Otherwise, $l_i$ is
\texttt{false} in $A$ and the rotation scheme of $a_{i,j}$ is the
reverse (as for $a_{j,i}$ in Fig.~\ref{fig:xor2}). Since $G_2^{i,j}$
is triconnected, this determines the rotation scheme of all its
vertices. To obtain \GammaB, compose the embeddings of all the clause
gadgets in such a way that each clause gadget lies on the outer face
of all the others.

We prove that \sefesolution is a solution of the \maxsefep instance,
namely that at most $k^*$ edges of \Gint have a different drawing in
\GammaR and in \GammaB.  Since \Gint is composed of $3$-cycles, this
corresponds to saying that at most $k^*$ of such $3$-cycles have a
different embedding in \GammaR and in \GammaB (where the embedding of
a $3$-cycle is defined by the clockwise order of the vertices on its
boundary).  In fact, a $3$-cycle with a different embedding in \GammaR
and in \GammaB can always be realized by drawing only one of its edges
with a different curve in the two drawings.
By this observation and by the fact that at most $k = k^*$ clauses are
not satisifed by $A$, the following claim is sufficient to prove the
statement.

\begin{cl}\label{cl:satisfied-same-embedding}
  For each clause $(l_i \oplus l_j) \in F$, if $(l_i \oplus l_j)$ is
  satisifed by $A$, then both $V_{i,j}$ and $V_{j,i}$ have the same
  embedding in \GammaR and in \GammaB, while if $(l_i \oplus l_j)$ is
  not satisifed by $A$, then exactly one of them has the same
  embedding in \GammaR and in \GammaB.
\end{cl}
\remove{
  \begin{proofsketch}
    Consider a clause $(l_i \oplus l_j) \in F$, where $l_i \in
    \{x_i,\overline{x_i}\}$ and $l_j \in \{x_j,\overline{x_j}\}$.
    First note that $V_{i,j}$ has the same embedding in \GammaR and in
    \GammaB, independently of whether $(l_i \oplus l_j)$ is satisfied
    or not, by construction of \GammaB and by the fact that the flip
    of $V_{i,j}$ in \GammaR is the same in the two cases in which
    $l_i$ evaluates to \texttt{true} in $A$, that are depicted in
    Fig.~\ref{fig:flips}.
    Hence, it remains to prove that, if $(l_i \oplus l_j)$ is
    satisifed by $A$, then also $V_{j,i}$ has the same embedding in
    \GammaR and in \GammaB.  First, with a case analysis analogous to
    the one depicted in Fig.~\ref{fig:flips}, one can observe that the
    flip of $V_{i,j}$ and of $V_{j,i}$ in \GammaR only depend on the
    evaluation of $l_i$ and $l_j$, respectively, in $A$. Hence, since
    one of $l_i$ and $l_j$ evaluates to \texttt{true} in $A$ and the
    other one to \texttt{false}, the flip of $V_{i,j}$ and of
    $V_{j,i}$ in \GammaR are ``opposite'' to each other. Further, by
    the construction of the triconnected clause gadget $G_2^{i,j}$,
    $3$-cycles $V_{i,j}$ and $V_{j,i}$ have ``opposite'' flips also in
    \GammaB. Since $V_{i,j}$ has the same embedding in \GammaR and in
    \GammaB, the statement of the claim follows.
  \end{proofsketch}
}
\begin{proof}
  Consider a clause $(l_i \oplus l_j) \in F$, where $l_i \in
  \{x_i,\overline{x_i}\}$ and $l_j \in \{x_j,\overline{x_j}\}$. First,
  we prove that $V_{i,j}$ has the same embedding in \GammaR and in
  \GammaB, independently of whether $(l_i \oplus l_j)$ is satisfied or
  not. Namely, the flip of $G_1^i$ selected in the construction of
  \GammaR is such that the rotation scheme of $a_{i,j}$ in \GammaR is
  $(a_{i,j}, b_{i,j})$, $(a_{i,j}, c_{i,j})$, $(a_{i,j}, c_x)$ if and
  only if $l_i$ evaluates to \texttt{true} in $A$ (where $c_x=c_i$ if
  $(l_i \oplus l_j)$ is the first considered clause involving either
  $x_i$ or $\overline{x_i}$ in the construction of \Gr, otherwise
  $c_x=c_{i,h}$ where $(l_i \oplus l_h)$ (or $(l_h \oplus l_i)$) is
  the clause involving either $x_i$ or $\overline{x_i}$ considered
  before $(l_i \oplus l_j)$ in the construction of \Gr). This can be
  easily verified by considering the flip of $G_1^i$ in \GammaR in the
  two cases in which $l_i$ evaluates to \texttt{true} in $A$, namely
  when either $x_i=$ \texttt{true} and $l_i=x_i$ or when $x_i=$
  \texttt{false} and $l_i=\overline{x_i}$, that are depicted in
  Fig.~\ref{fig:flips}. Recall that, by construction, the rotation
  scheme of $a_{i,j}$ in \GammaB is $(a_{i,j}, b_{i,j})$, $(a_{i,j},
  c_{i,j})$, and $(a_{i,j}, a_{j,i})$ if and only if $l_i$ evaluates
  to \texttt{true} in $A$.  Since $c_x$ lies outside $V_{i,j}$ in
  \GammaR and $a_{j,i}$ lies outside $V_{i,j}$ in \GammaB, the
  embedding of $V_{i,j}$ is determined by the evaluation of $l_i$ in
  $A$ in the same way in \GammaR as in \GammaB.

  Hence, it remains to prove that, if $(l_i \oplus l_j)$ is satisifed
  by $A$, then also $V_{j,i}$ has the same embedding in \GammaR and in
  \GammaB. Suppose that $l_j$ evaluates to \texttt{false} in $A$. By
  construction, the flip of $G_1^j$ selected in the construction of
  \GammaR is such that the rotation scheme of $a_{j,i}$ in \GammaR is
  $(a_{j,i}, c_{j,i})$, $(a_{j,i}, b_{j,i})$, $(a_{j,i}, c_x)$ (where
  $c_x$ is defined as above). This can be easily verified by
  considering the flip of $G_1^i$ in \GammaR in the two cases in which
  $l_j$ evaluates to \texttt{false} in $A$, namely when either $x_j=$
  \texttt{false} and $l_j=x_j$ or when $x_j=$ \texttt{true} and
  $l_j=\overline{x_j}$. Further, since $(l_i \oplus l_j)$ is satisifed
  by $A$ and $l_j$ evaluates to \texttt{false}, $l_i$ evaluates to
  \texttt{true}. Hence, by construction, the rotation scheme of
  $a_{i,j}$ in \GammaB is $(a_{i,j}, b_{i,j})$, $(a_{i,j}, c_{i,j})$,
  $(a_{i,j}, a_{j,i})$. Since $G_2^{i,j}$ is triconnected, the
  rotation scheme of $a_{j,i}$ in \GammaB is $(a_{j,i}, c_{j,i})$,
  $(a_{j,i}, b_{j,i})$, $(a_{j,i}, a_{i,j})$. Since $c_x$ lies outside
  $V_{j,i}$ in \GammaR and $a_{i,j}$ lies outside $V_{j,i}$ in
  \GammaB, the embedding of $V_{j,i}$ is the same in \GammaR and in
  \GammaB when $l_j$ evaluates to \texttt{false} in $A$.

  The fact that the embedding of $V_{j,i}$ be the same in \GammaR and
  in \GammaB when $l_j$ evaluates to \texttt{true} in $A$ (and hence
  $l_i$ evaluates to \texttt{false}) can be proved analogously.
\end{proof}


Suppose that \maxsefeinstance{k^*} admits a solution
\sefesolution. Assume that \sefesolution is optimal, that is, there
exists no solution of \maxsefeinstance{k^*} with fewer edges of \Gint
drawn differently. We construct a truth assignment $A$ that is a
solution of $\langle B,F,k \rangle$, as follows.
For each variable $x_i$, with $i=1,\dots,l$, assign \texttt{true} to
$x_i$ if the rotation scheme of $a_i$ in \GammaR is $(a_i,v_i)$,
$(a_i,b_i)$, $(a_i,d_i)$. Otherwise, assign \texttt{false} to $x_i$.

We prove that $A$ is a solution of the \xorsatp instance, namely that
at most $k$ clauses of $B$ are not satisfied by $A$.
Since \sefesolution is optimal, for any $3$-cycle $V_{i,j}$ of \Gint,
at most one edge has a different drawing in \GammaR and in
\GammaB. Also, for any clause $(l_i \oplus l_j)$, at most one of
$V_{i,j}$ and $V_{j,i}$ has an edge drawn differently in \GammaR and
in \GammaB, as otherwise one could flip \remove{clause gadget}
$G_2^{i,j}$ in \GammaB (that is, revert the rotation scheme of all its
vertices) and draw all the edges of $V_{i,j}$ and $V_{j,i}$ with the
same curves as in \GammaR.
Since $k=k^*$, the following claim is sufficient to prove the
statement.

\begin{cl}\label{cl:same-drawing-implies-satisfied}
  For each clause gadget $G_2^{i,j}$ such that $V_{i,j}$ and $V_{j,i}$
  have the same drawing in \GammaR and in \GammaB, the corresponding
  clause $(l_i \oplus l_j)$ is satisfied by $A$.
\end{cl}
\begin{proof}
  Consider a clause gadget $G_2^{i,j}$ and the drawing of the
  corresponding clause-variable gadgets $V_{i,j}$ and $V_{j,i}$ in
  \GammaB. Note that, since $G_2^{i,j}$ is triconnected, if the
  rotation scheme of $a_{i,j}$ is $(a_{i,j},b_{i,j})$,
  $(a_{i,j},c_{i,j})$, $(a_{i,j},a_{j,i})$, then the rotation scheme
  of $a_{j,i}$ is $(a_{j,i},c_{j,i})$, $(a_{j,i},b_{j,i})$,
  $(a_{j,i},a_{i,j})$. Otherwise, both the rotation schemes are
  reversed.
  Also, consider the clause-variable gadget $V_{i,j}$ corresponding to
  any clause $(l_i \oplus l_j)$ or $(l_j \oplus l_i)$ involving a
  variable $x_i$. Note that, if the rotation scheme of $a_{i,j}$ in
  \GammaR is $(a_{i,j},b_{i,j})$, $(a_{i,j},c_{i,j})$, $(a_{i,j},c_x)$
  (where $c_x$ is defined as in the proof of
  Claim~\ref{cl:satisfied-same-embedding}), then either edge
  $(b_{i,j},b_i)$ exists in \Gr and the rotation scheme of $a_{i}$ is
  $(a_{i},v_{i})$, $(a_{i},b_{i})$, $(a_{i},d_{i})$, or edge
  $(b_{i,j},d_i)$ exists in \Gr and the rotation scheme of $a_{i}$ is
  $(a_{i},v_{i})$, $(a_{i},d_{i})$, $(a_{i},b_{i})$. In both cases,
  literal $l_i$ evaluates to \texttt{true} in $A$. In fact, in the
  former case $l_i=x_i$ and $x_i$ is \texttt{true} in $A$, while in
  the latter case $l_i=\overline{x_i}$ and $x_i$ is \texttt{false} in
  $A$, by the construction of \maxsefeinstance{k^*} and by the
  assignment chosen for $A$. Analogously, if the rotation scheme of
  $a_{i,j}$ is the opposite, then $l_i$ evaluates to \texttt{false} in
  $A$.

  Consider any clause gadget $G_2^{i,j}$ such that $V_{i,j}$ and
  $V_{j,i}$ have the same drawing in \GammaR and in \GammaB. By
  combining the observations on the relationships among the rotation
  schemes of the vertices belonging to the clause gadget $G_2^{i,j}$,
  to the clause-variable gadgets $V_{i,j}$ and $V_{j,i}$, and to the
  variable gadgets $V_i$ and $V_j$, it is possible to conclude that
  $l_i$ evaluates to \texttt{true} in $A$ if and only if $l_j$
  evaluates to \texttt{false} in $A$, that is, $(l_i \oplus l_j)$ is
  satisfied by $A$. \small
\end{proof}

This concludes the proof of the theorem.
\end{proof}

\section{Conclusions}\label{se:conclusions}

In this paper we proved several results concerning the computational
complexity of some problems related to the \sefep and the \ptckpbep
problems. We showed that the version of \sefe in which all graphs
share the same intersection graph \Gint (\sunsefep) is \NPC for $k\geq
3$ even when \Gint is a tree and all the input graphs are
biconnected. This improves on the result by
Schaefer~\cite{s-ttphtpv-13} who proved \NPCN when \Gint is a forest
of stars and two of the input graphs consist of disjoint biconnected
components. Further, we prove \NPCN of problem \ptckpbepshort for
$k\geq 3$ when $T$ is a caterpillar and two of the input graphs are
biconnected, and of problem \pkpbepshort for $k\geq 3$. These results
improve on the previously known \NPCN for $k$ unbounded by
Hoske~\cite{hoske-befpa-12}. Also, we provided a linear-time algorithm
to decide \ptckpbepshort for $k\geq 2$ when $k-1$ of the input graphs
are $T$-biconnected. Most notably, this result enlarges the set of
instances of \ptcTWOpbepshort, and hence of the long-standing open
problem \sefep when \Gint is connected, for which a polynomially-time
algorithm is known. For this problem, we also proved that all
the instances can be encoded by equivalent instances in which one of
the two graphs is biconnected and series-parallel.  It is also known
that the biconnectivity of both the input graphs suffices to make the
problem polynomial-time solvable~\cite{br-drpse-13}.  On one hand, our
results push \ptcTWOpbepshort closer to the boundary of polynomiality.
On the other hand, since we proved that for $k\geq 3$ the
biconnectivity of all the input graphs does not avoid \NPCN, it is
natural to wonder whether dropping the biconnectivy condition on one
of the two graphs in the case $k=2$ would make it possible to simulate
the degrees of freedom that are given by the fact of having more
graphs.

Moreover, we considered the optimization version \maxsefep of \sefe
with $k=2$, in which one wants to draw as many common edges as
possible with the same curve in the drawings of the two input
graphs. We showed \NPCN of this problem even under strong restrictions
on the embedding of the input graphs and on the degree of the
intersection graph that are sufficient to obtain polynomial-time
algorithms for the original decision version of the problem.

\section*{Acknowledgements}
A preliminary version of this paper appeared at the $8^{th}$ International Workshop on Algorithms and Computation (WALCOM'14)~\cite{adn-osnsp-14}.

Part of the research was conducted in the framework of ESF
  project 10-EuroGIGA-OP-003 GraDR "Graph Drawings and
  Representations", of 'EU FP7 STREP Project "Leone: From Global
  Measurements to Local Management", grant no. 317647', and of the MIUR project AMANDA ``Algorithmics for MAssive and Networked DAta'', prot. 2012C4E3KT\_001.

The authors would like to express their gratitude to Christopher Auer, Andreas Ga{\ss}ner,
and Ignaz Rutter for useful discussions about the topics of this
paper.

{\bibliography{bibliography}} 

\begin{thebibliography}{10}
\providecommand{\url}[1]{\texttt{#1}}
\providecommand{\urlprefix}{URL }

\bibitem{addfr-bcp-12}
Angelini, P., {Da Lozzo}, G., {Di Battista}, G., Frati, F., Roselli, V.: Beyond
  clustered planarity. CoRR  abs/1207.3934 (2012)

\bibitem{adf-seepg-j13}
Angelini, P., {Di Battista}, G., Frati, F.: Simultaneous embedding of embedded
  planar graphs. Int. J. on Comp. Geom. and Appl.  (2013), {S}pecial Issue from
  ISAAC '11. To appear.

\bibitem{adfjkpr-tppeg-10}
Angelini, P., {Di Battista}, G., Frati, F., Jel\'{\i}nek, V., Kratochv\'{\i}l,
  J., Patrignani, M., Rutter, I.: Testing planarity of partially embedded
  graphs. In: SODA'10. pp. 202--221 (2010)

\bibitem{adfpr-tsetgibgt-11}
Angelini, P., {Di Battista}, G., Frati, F., Patrignani, M., Rutter, I.: Testing
  the simultaneous embeddability of two graphs whose intersection is a
  biconnected or a connected graph. J. of Discrete Algorithms  14,  150--172
  (2012)

\bibitem{adn-osnsp-14}
Angelini, P., {Da Lozzo}, G., Neuwirth, D.: On some {NP}-complete {SEFE}
  problems. In: Pal, S.P., Sadakane, K. (eds.) WALCOM. LNCS, vol. 8344, pp.
  200--212. Springer (2014)

\bibitem{bkr-sepg-12}
Blasi\"us, T., Kobourov, S.G., Rutter, I.: Simultaneous embedding of planar
  graphs. In: Tamassia, R. (ed.) Handbook of Graph Drawing and Visualization.
  CRC Press (2013)

\bibitem{br-drpse-13}
Bl{\"a}sius, T., Rutter, I.: Disconnectivity and relative positions in
  simultaneous embeddings. In: Didimo, W., Patrignani, M. (eds.) GD'13. LNCS,
  vol. 7704, pp. 31--42 (2013)

\bibitem{br-spqoacep-13}
Bl{\"a}sius, T., Rutter, I.: Simultaneous {PQ}-ordering with applications to
  constrained embedding problems. In: SODA'13. pp. 1030--1043 (2013)

\bibitem{br-spqacep-13}
Bl{\"a}sius, T., Rutter, I.: Simultaneous pq-ordering with applications to
  constrained embedding problems. In: Khanna, S. (ed.) SODA. pp. 1030--1043.
  SIAM (2013)

\bibitem{b-pqra-75}
Booth, K.S.: PQ-Tree Algorithms. PhD thesis, University of California, Berkeley
  (1975)

\bibitem{bt-tcopiggppqa-76}
Booth, K.S., Lueker, G.S.: Testing for the consecutive ones property, interval
  graphs, and graph planarity using pq-tree algorithms. J. Comput. Syst. Sci.
  13(3),  335--379 (1976)

\bibitem{df-ectefcgsf-j-09}
{Di Battista}, G., Frati, F.: Efficient c-planarity testing for embedded flat
  clustered graphs with small faces. J. of Graph Alg. and App.  13(3),
  349--378 (2009), {S}pecial Issue from GD '07

\bibitem{efln-sldahgcg-06}
Eades, P., Feng, Q.W., Lin, X., Nagamochi, H.: Straight-line drawing algorithms
  for hierarchical graphs and clustered graphs. Algorithmica  44(1),  1--32
  (2006)

\bibitem{ek-sepgfb-05}
Erten, C., Kobourov, S.G.: Simultaneous embedding of planar graphs with few
  bends. J. of Graph Alg. and Appl  9(3),  347--364 (2005)

\bibitem{ekln-sgdlav-05}
Erten, C., Kobourov, S.G., Le, V., Navabi, A.: Simultaneous graph drawing:
  Layout algorithms and visualization schemes. J. of Graph Alg. and Appl  9(1),
   165--182 (2005)

\bibitem{FengCE95}
Feng, Q., Cohen, R.F., Eades, P.: Planarity clustered graphs. In: Proc.
  European Symposium on Algorithms. LNCS, vol. 979, pp. 213--226 (1995)

\bibitem{gj-rstpnpc-77}
Garey, M.R., Johnson, D.S.: {The Rectilinear Steiner Tree Problem is
  NP-Complete}. SIAM J. Appl. Math.  32,  826--834 (1977)

\bibitem{gjpss-sgefe-06}
Gassner, E., J{\"u}nger, M., Percan, M., Schaefer, M., Schulz, M.: Simultaneous
  graph embeddings with fixed edges. In: WG '06. pp. 325--335 (2006)

\bibitem{hjl-tspcg2c-10}
Haeupler, B., Jampani, K.R., Lubiw, A.: Testing simultaneous planarity when the
  common graph is 2-connected. In: ISAAC'10. LNCS, vol. 6507, pp. 410--421
  (2010)

\bibitem{hn-tpbecgp-09}
Hong, S., Nagamochi, H.: Two-page book embedding and clustered graph planarity.
  TR [2009-004], Dept. of Applied Mathematics and Physics, University of Kyoto,
  Japan (2009)

\bibitem{hoske-befpa-12}
{Hoske}, D.: Book embedding with fixed page assignments. Bachelor thesis,
  Karlsruhe Institute of Technology, Karlsruhe, Germany (2012)

\bibitem{js-igsefe-09}
J{\"u}nger, M., Schulz, M.: Intersection graphs in simultaneous embedding with
  fixed edges. J. of Graph Alg. and Appl  13(2),  205--218 (2009)

\bibitem{k-cnatg-98}
Kratochv\'{\i}l, J.: Crossing number of abstract topological graphs. In:
  Whitesides, S. (ed.) Graph Drawing. LNCS, vol. 1547, pp. 238--245. Springer
  (1998)

\bibitem{kln-nstl-91}
Kratochv\'{\i}l, J., Lubiw, A., Nesetril, J.: Noncrossing subgraphs in
  topological layouts. SIAM J. Discrete Math.  4(2),  223--244 (1991)

\bibitem{cmsm-tnc-11}
Moore, C., Mertens, S.: The Nature of Computation. Oxford University Press, USA
  (2011)

\bibitem{top-o-79}
Opatrny, J.: Total ordering problem. SIAM J. Comput.  8(1),  111--114 (1979)

\bibitem{pw-epgfvl-01}
Pach, J., Wenger, R.: Embedding planar graphs at fixed vertex locations. Graphs
  and Combinatorics  17(4),  717--728 (2001)

\bibitem{s-ttphtpv-13}
Schaefer, M.: Toward a theory of planarity: Hanani-tutte and planarity
  variants. J. of Graph Alg. and Appl  17(4),  367--440 (2013)

\end{thebibliography}

\end{document}